\newtheorem{theorem}{Theorem}[section]
\newtheorem{open}{Open Question}
\newtheorem{corollary}[theorem]{Corollary}
\newtheorem{lemma}[theorem]{Lemma}
\newtheorem{proposition}[theorem]{Proposition}
\newtheorem{claim}[theorem]{Claim}
\newtheorem{definition}[theorem]{Definition}
\newtheorem{remark}[theorem]{Remark}
\def\squarebox#1{\hbox to #1{\hfill\vbox to #1{\vfill}}}
\newcommand{\qed}{\hspace*{\fill}
\vbox{\hrule\hbox{\vrule\squarebox{.667em}\vrule}\hrule}\smallskip}
\newenvironment{proof}{\noindent{\bf Proof:~~}}{\(\qed\)}
\begin{document}

\title{Computational Efficiency Requires Simple Taxation}

\author{Shahar Dobzinski\thanks{Weizmann Institute of Science.}}

\maketitle

\begin{abstract}
We characterize the communication complexity of truthful mechanisms. Our departure point is the well known taxation principle. The taxation principle asserts that every truthful mechanism can be interpreted as follows: every player is presented with a menu that consists of a price for each bundle (the prices depend only on the valuations of the other players). Each player is allocated a bundle that maximizes his profit according to this menu. We define the \emph{taxation complexity} of a truthful mechanism to be the logarithm of the maximum number of menus that may be presented to a player.

Our main finding is that in general the taxation complexity essentially equals the communication complexity. The proof consists of two main steps. First, we prove that for rich enough domains the taxation complexity is at most the communication complexity. We then show that the taxation complexity is much smaller than the communication complexity only in ``pathological'' cases and provide a formal description of these extreme cases.

Next, we study mechanisms that access the valuations via value queries only. In this setting we establish that the menu complexity -- a notion that was already studied in several different contexts -- characterizes the number of value queries that the mechanism makes in exactly the same way that the taxation complexity characterizes the communication complexity. 

Our approach yields several applications, including strengthening the solution concept with low communication overhead, fast computation of prices, and hardness of approximation by computationally efficient truthful mechanisms. 
\end{abstract}

\thispagestyle{empty}
\newpage \setcounter{page}{1}

\section{Introduction}

The field of Communication Complexity studies settings in which $n$ players are interested in computing some known function $f$. Each player $i$, holds some input $x_i$. The basic task is to determine the maximum number of bits that the parties need to exchange in order to compute $f(x_1,\ldots, x_n)$. 

One of the most successful applications of communication complexity is Algorithmic Mechanism Design, starting with the pioneering work of Nisan and Segal \cite{N02, NS06}. Nisan and Segal proved lower bounds on the approximation ratios achievable by algorithms with low communication complexity for combinatorial auctions. Many other applications of communication complexity to mechanism design have been introduced since. For example, communication complexity is used to bound the power of certain computationally-efficient truthful mechanisms \cite{DN07a, DSS15}, to understand the overhead of price computation \cite{FS09, BBNS08}, and to prove bounds on the quality of equilibria \cite{R14}. 

Our goal in this paper is to answer a fundamental question in the intersection of communication complexity and algorithmic mechanism design: given a truthful mechanism $A$, how many bits do the parties need to exchange in order to determine the allocation and payments?

A bit more formally, the communication complexity of a protocol is the maximum number of bits that are exchanged in the protocol, where the maximum is taken over all inputs. The communication complexity of a function  $f$, denoted $cc(f)$, is the communication complexity of the protocol that computes $f$ with the smallest communication complexity.

A truthful mechanism $A$ is composed of a social choice function that selects one alternative from a set $\mathcal S$ of alternatives and a payment function that specifies the payment of each player. Our results build on a basic concept of Mechanism Design, the \emph{taxation principle}. Denote by $v_i(S)$ the value of player $i$ for alternative $S\in \mathcal S$. The taxation principle asserts that $A$ can be interpreted in the following simple form: every player $i$ is (implicitly) presented with a menu $\mathcal M_{v_{-i}}$ that is a function that assigns a price (possibly $\infty$) for each alternative in $\mathcal S$. $\mathcal M_{v_{-i}}$ depends only on the valuations $v_{-i}$ of the other players. The truthful mechanism $A$ always outputs an alternative $S$ that simultaneously maximizes the profit $v_i(S)-\mathcal M_{v_{-i}}(S)$ of each player $i$.

With this interpretation in mind, given a truthful mechanism $A$, for each player $i$ denote by $M^i=\{\mathcal M_{v_{-i}}\}_{v_{-i}}$ the set of menus that might be presented to $i$. Denote by $tax(A)$ the \emph{taxation complexity} of a truthful mechanism $A$ -- the number of bits needed to represent an index of a specific menu among the set of menus that may be presented to a player. That is, $tax(A)=\max_i\log |M^i|$.



Our main finding directly connects the semantics of the mechanism and its communication complexity by showing that the taxation complexity of every truthful mechanism essentially equals its communication complexity:

\vspace{0.07in} \noindent \textbf{Informal take-home message of this paper: } In ``rich enough'' domains, $tax(A)\approx cc(A)$. 

\vspace{0.07in} \noindent We also apply the lens of the taxation principle in a more restricted model and prove an analogous result: if access to the valuations is restricted to value queries, then the menu complexity essentially equals the query complexity (see definitions below).

In the rest of the introduction we provide a more formal description of the setup\footnote{See Section \ref{sec-preliminaries} for formal definitions.}, of the results, and of various implications.

\subsection*{The Setting}

For concreteness, this paper considers only the setting of combinatorial auctions, although it should be possible to extend the results beyond that domain. In a combinatorial auction there is a set $M$ ($|M|=m$) of heterogeneous items and a set $N$ ($|N|=n$) of players. The output is an allocation $(S_1,\ldots, S_n)$ of the items to the players. The private information of each player $i$ is his value for the set of items he receives: $v_i:2^M\rightarrow \mathbb{R}$. As common in the literature, in this paper we assume that each $v_i$ is normalized ($v_i(\emptyset)=0$) and monotone (for each $S\subseteq T$, $v_i(T)\geq v_i(S)$).

This paper considers truthful mechanisms. The Algorithmic Mechanism Design literature usually defines truthful mechanisms to be those that implement some social choice function in a dominant-strategy equilibrium. In this paper a mechanism is \emph{truthful} if it implements the social choice function in an ex-post Nash equilibrium. This solution concept applies to games with \emph{incomplete} information and is closely related but less restrictive than dominant-strategy equilibrium (this makes our results only stronger). It is more appropriate for the iterative mechanisms that this paper considers. Very roughly speaking, in an ex-post Nash equilibrium a dominant strategy of every player is to play according to his true valuation, as long as the other players are not playing ``crazy'' strategies. We refer the reader to Section \ref{sec-preliminaries} for formal definitions and discussion. 

\subsection*{The Taxation Principle in Algorithmic Mechanism Design}

The taxation principle \cite{Hammond79, G81} was already considered in Algorithmic Mechanism Design. Most notably, the crux of the impossibility results of \cite{D11, DughmiV11,DV12} is showing that for every mechanism that approximately maximizes the welfare there must be an instance in which one player is presented a ``complicated'' menu. In particular, finding a profit-maximizing bundle in that menu is hard. 

The paper \cite{HN13} considers a setting with only a single player whose valuation is drawn from some known distribution. Since there is only one player, the taxation principle implies that all a truthful mechanism can do is to present a fixed menu to the player. The player then ``selects'' a profit-maximizing bundle. They show that to approximately maximize revenue the prices of many alternatives (equivalently, bundles) in that menu must be finite (high ``menu complexity'').

We stress that the notion of taxation complexity does not measure the difficulty of finding a profit-maximizing bundle, nor how difficult it is to represent a specific menu. The taxation complexity takes a more ``high-level'' view of the mechanism and only measures the \emph{number} of menus that might be presented to a player.

\subsection*{The Taxation Complexity is at most the Communication Complexity}

Our first main result says that in rich enough domains the taxation complexity is at most the communication complexity:
$$
tax(A)\leq cc(A)
$$
We will shortly provide a more formal statement, but to better understand this result and its implications, we first discuss a domain for which this result \emph{does not} hold. Consider a two-player combinatorial auction where the valuations of the players belong to the class of gross substitutes (GS) valuations\footnote{The definition of this class is subtle; Since it will not be needed in this paper, we refer the interested reader to the survey \cite{PL14} for a definition.}. Let us furthermore restrict the values of all bundles to be integers in $\{1,\ldots, m\}$. We study the VCG mechanism in that setting. Since the welfare maximizing allocation can be found with $poly(m)$ communication \cite{NS06} (this is also implied by the algorithms mentioned in \cite{PL14}), the communication complexity of the VCG mechanism is $poly(m)$ as well.

Let us now analyze the taxation complexity of the mechanism. Denote the valuation of player $1$ by $v$. Player $1$ presents a menu to player $2$. By the definition of the VCG mechanism, the price of bundle $S$ in that menu is $v(M)-v(M-S)$. Thus, there is a one-to-one and onto correspondence between the set of possible valuations of player $1$ and the set of menus he presents to player $2$. All that is left is to point out that the number of gross substitutes valuations is\footnote{In fact, Knuth \cite{K74} shows that the number of matroid rank functions on $\{1,\ldots, m\}$ is doubly exponential, and it is known that every matroid rank function is in particular gross substitutes.} doubly exponential (e.g, Knuth \cite{K74} shows that it is at least $ {2^{\frac {{m \choose {m/2}}} {2m}}} /{m!}$). Therefore, for two players with gross substitutes valuations, the taxation complexity of the VCG mechanism $tax(VCG)$ is therefore exponential, whereas the communication complexity $cc(VCG)$ is polynomial.

In contrast, in richer domains the taxation complexity is not much larger than the communication complexity:

\vspace{0.07in} \noindent  \textbf{Theorem: } Fix some mechanism $A$. 
\begin{enumerate}[noitemsep]
\item If $A$ is truthful for general valuations, then $tax(A)\leq cc(A)$.
\item If $A$ is truthful for subadditive valuations, then also $tax(A)\leq cc(A)$.
\item If $A$ is truthful for XOS valuations, then $tax(A)\leq m\cdot (cc(A)+1)$.
\item If $A$ is truthful for submodular valuations, then $tax(A)\leq d\cdot m\cdot (cc(A)+1)$, where $d=|\{\mathcal M(S)\}_{\mathcal M, S}|$ is the total number of distinct prices that appear in some menu.
\end{enumerate}
\noindent The cautious reader might wonder how it can be that the class of GS valuations is contained in all the above-mentioned classes, but in these classes the communication complexity severely limits the taxation complexity. The point is that implementations of mechanisms that are specifically tailored to GS valuations are able to find profit-maximizing bundles without learning the full menu. However, when running those implementations in richer domains, the set of possible deviations increases and truthfulness is lost. 

\subsection*{Characterizing the Communication Complexity of Truthful Mechanisms}

We have that in rich enough domains $tax(A)\leq poly(cc(A))$. Had we were able to prove that $cc(A)\leq poly(tax(A))$, this would immediately imply that the communication complexity of a truthful mechanism is completely determined (up to polynomial factor) by $tax(A)$ -- a well defined combinatorial property that depends only on the social choice function. 

However, we show that $cc(A)$ cannot be bounded by $poly(tax(A))$. Towards this end, consider the following (naive and incorrect) implementation of a two-player mechanism $A$ which is truthful for general valuations: since the menu that is presented to a player depends only on the other player's valuation, each player can send $tax(A)$ bits that denote the index of the menu he presents to the other player. The obvious next step is to ask each player to select the profit maximizing bundle from the menu that was presented to him, announce it (using additional $m$ bits) and allocate accordingly. The total communication cost of this implementation is $2(tax(A)+m)$ as we wanted.

The above implementation is incorrect since this last step is not well defined because of tie-breaking: there might be several bundles that simultaneously maximize the profit. The tie-breaking rule that defines which profit-maximizing bundle each player receives can be in principle quite involved, and there is no way to avoid that: we show (Subsection \ref{subsec-low-tax-high-comm}) a two-player mechanism with taxation complexity $1$ and communication complexity $exp(m)$, due to tie breaking. 

This leads us to the following definition. Let $tie(A)$ be the communication complexity of determining the allocation in a truthful mechanism $A$, where the input of a player is a valuation $v_i$ and in addition all players know the menu that is presented to each player. Notice that obviously $tie(A)\leq cc(A)$ since we can always ignore this extra information and simply run $A$ to determine the allocation. By our discussion above and our bound on the taxation complexity we get that:

\vspace{0.07in}\noindent \textbf{Theorem: } Let $A$ be a two-player truthful mechanism for rich enough domain. Then,
$$\frac {tax(A)+tie(A)} 2\leq cc(A)\leq 2(tax(A)+m)+tie(A)$$
\noindent In particular, whenever the communication complexity of the tie-breaking rule is low, we indeed get that the communication complexity almost equals the taxation complexity.

Can we extend this theorem to more than two players? The missing component for three players or more is that it is not clear whether it is possible to \emph{explicitly} find the menu that $n-1$ players present to the remaining player with low communication (the taxation principle only guarantees the \emph{existence} of such menu, but gives no guidance on how to find it). We provide a positive answer:

\vspace{0.07in}\noindent \textbf{The Menu Reconstruction Theorem: } Let $A$ be an $n$-player truthful mechanism. Fix some player $i$. Then for every valuation profile $v_{-i}$, there is a protocol with communication complexity $poly(tax(A),price(A), m,n)$ that finds the menu that is presented to player $i$.

\vspace{0.07in}\noindent Where we denote by $price(A)$ the maximum number of bits that it takes for any $n-1$ players to find the price of a given bundle $S$ in the menu that they present to the remaining player. Therefore:

\vspace{0.07in}\noindent \textbf{Theorem: } Let $A$ be an $n$-player truthful mechanism in any domain. Then,
$$ cc(A)\leq poly(tax(A),price(A),tie(A), m,n)$$
\noindent We also show that as long as the domain includes additive valuations, $price(A)\leq cc(A)$. We thus get that for rich enough domains $\frac {tax(A)+tie(A)+price(A)} 3\leq cc(A)$, which allows us to completely determine the communication complexity of truthful mechanisms (up to polynomial factors):
$$\frac {tax(A)+price(A)+tie(A)} {3}\leq cc(A) \leq poly(tax(A),price(A), tie(A),m,n)$$
\noindent Our characterization is tight in the sense that if we drop at least one of the three main terms ($tax(A),price(A), tie(A)$) then the gap between the LHS and the RHS might be exponential. For instance, we have already mentioned an example of a truthful mechanism $A$ with $tax(A)=1$ (and thus $price(A)=0$) in which $cc(A)=exp(m)$. We also provide other ``pathological'' examples with similar gaps when dropping either $price(A)$ or $tax(A)$.

\subsection*{Characterizing the Query Complexity}

Up until now we imposed no restrictions on the communication between the parties. However, many of the truthful mechanisms in the literature assume that the valuations are represented as black boxes that answer only a specific type of queries. A simple type of query that was extensively studied is a \emph{value query}: given a bundle $S$, what is $v(S)$? We now characterize the number of value queries that a truthful mechanism makes, again by applying the taxation principle.

Denote the query complexity of a truthful mechanism by $val(A)$ -- this is the number of value queries that the most efficient implementation of $A$ makes. Following \cite{HN13}, The \emph{menu complexity} of $A$, denoted $mc(A)$, is roughly speaking the maximum number of bundles with finite price that appear in any menu that is used in\footnote{The description of the menu complexity given here is inaccurate as it ignores tie breaking issues. We refer the reader to the technical parts of this paper for the precise definition.} $A$. We establish that for mechanisms that use only value queries, the menu complexity characterizes the query complexity in exactly the same way that the taxation complexity characterizes the communication complexity:

\vspace{0.07in}\noindent \textbf{Theorem: } Let $A$ be a mechanism that is truthful for general valuations and accesses the valuations via value queries only. Then:
$$
\frac {mc(A)+price^{val}(A)+tie^{val}(A)} 3\leq val(A)\leq poly(mc(A),price^{val}(A),tie^{val}(A), m,n)
$$
\noindent where $price^{val}(A)$ and $tie^{val}(A)$ are defined similarly to $price(A)$ and $tie(A)$ with the additional restriction that the communication is restricted to value queries. We note that the inequality $mc(A)\leq tax(A)$ is in fact implicit in \cite{D11} and \cite[Theorem 11]{N14}, whereas the right inequality (a menu reconstruction theorem that uses only value queries) is new and very different from the menu reconstruction theorem for unrestricted communication.

To strengthen the analogy between $tax(A)$ and $mc(A)$, consider the following two player menu optimization problem: Alice's input is some menu $\mathcal M\in U$, where the set of menus $U$ is known in advance. Bob's input is some valuation $v$. The goal is to find a bundle that maximizes the profit $v(S)-\mathcal M(S)$. We restrict ourselves to one way protocols: Alice speaks first and then Bob. After Bob speaks, both parties know a profit maximizing bundle $S$.

We observe that if we let $U$ be the set of menus presented to some player in a truthful mechanism $A$, the (one way) communication complexity of the menu optimization problem is $tax(A)$ (up to an additive factor of $m$ bits). Interestingly, our results yield that when Bob is restricted to value queries (Alice sends an arbitrary message, then, based on this message, the center queries for the value of some bundles in $v$) then the communication complexity is essentially $mc(A)$. That is, both $tax(A)$ and $mc(A)$ capture the informational bottleneck of finding a profit-maximizing bundle in a non-interactive way in their respective models.

The other type of popular query is \emph{demand query}: given prices $p_1,\ldots, p_m$ return a bundle $S\in \arg\max_Tv(T)-\Sigma_{j\in T}p_j$. We identify the \emph{affinity} of a mechanism as the communication complexity of the menu optimization problem when the communication is restricted to demand queries. Specifically, a menu $\mathcal M$ is \emph{$\alpha$-min affine} if there are $\alpha$ price vectors $p^1,\ldots, p^\alpha$ and $\alpha$ non-negative numbers $r^1,\ldots, r^\alpha$ such that for all $S$, $\mathcal M(S)=\min_{1\leq k\leq \alpha}\Sigma_{j\in S}(p^k_j)+r^k$. The affinity of $A$, denoted $aff(A)$, is the maximal number $\alpha$ such that all menus presented by $A$ are $\alpha$-min affine. 

Denote by $dem(A)$ the number of demand queries that the most efficient implementation of a truthful mechanism $A$ makes. Just as $tax(A)\leq cc(A)$ and $mc(A)\leq val(A)$, we show that for general valuations $aff(A)\leq dem(A)$. However, unlike $tax(A)$ and $mc(A)$, $aff(A)$ cannot be used to characterize $dem(A)$. This is one particular consequence of an \emph{impossibility} result:

\vspace{0.07in}\noindent \textbf{Theorem (no menu reconstruction theorem for demand queries):} There is a mechanism $A$ that is truthful for some player $i$ with a general valuation in which $dem(A)=poly(m)$ and $price^{dem}(A)=1$, but $exp(m)$ demand queries are needed to find the menu presented to player $i$.

\vspace{0.07in}\noindent In other words, just as $q$ value queries suffice to find a profit maximizing bundle when the menu complexity is $q$, when the affinity is $q$ a profit maximizing bundle can be found with $q$ demand queries. However, by our impossibility result it is not easy to figure out \emph{which} $q$ queries to make. Nevertheless, the affinity $aff(A)$ will be useful in some of the applications that we mention below.

The query complexity of mechanisms is studied in Section \ref{sec-query-complexity}.

\subsection*{Implications and Extensions}

We view the study of these complexity measures as an investigation of the fundamentals of Algorithmic Mechanism Design that needs no further justification. Nevertheless, as is often the case, studying the foundations yields several interesting implications. We elaborate on some now, as well as on some open questions. Other open questions are stated in the technical parts of the paper. 

\paragraph{From ex-post Nash to Dominant Strategies (Section \ref{subsec-to-dominant}).} The revelation principle implies that  any mechanism that implements a social choice function in an ex-post Nash equilibrium can be transformed to a mechanism that implements the same social choice function in a dominant-strategy equilibrium. However, the communication blow-up might be exponential, as in combinatorial auctions with GS valuations. Our theorems allow us to obtain a new transformation that takes any two-player mechanism that implements an ex-post Nash equilibrium in a rich enough domain to a mechanism that implements the same social choice function in dominant-strategy equilibrium with only a polynomial blow up in the communication complexity.

\paragraph{The Limits of Computationally-Efficient Truthful Mechanisms (Section \ref{subsec-welfare}).}

A major research direction in Algorithmic Mechanism Design studies the power of computationally efficient truthful mechanisms for welfare maximization in combinatorial auctions (e.g., \cite{LOS:J,LMN03,MN02,LS05,BKV05,DNS05}). For example, VCG is a truthful mechanism that maximizes the welfare, but in combinatorial auctions with submodular valuations it requires exponential communication \cite{NS06}. On the other hand, there is a $1.58$-approximation algorithm that uses only polynomial communication \cite{FV06}, but it is not truthful. The best deterministic truthful mechanism that uses only polynomial communication achieves a poor approximation ratio of $O(\sqrt m)$ \cite{DNS05}. Whether this is the best possible for a deterministic truthful mechanism that uses only polynomial communication is a major open question. This state of affairs is typical to other problems as well, and we basically completely lack tools for proving impossibility results for computationally efficient truthful mechanisms\footnote{The papers \cite{D11,DughmiV11, DV12} prove impossibility results when access is restricted to value queries, or when the valuations are given in a succinct and very specific form. For the general communication model, or even when access is restricted to demand queries, that are no impossibility results on the power of computationally efficient truthful mechanisms.}.  Our work gives rise to two different novel approaches for proving such impossibilities.

\vspace{0.07in} \noindent \emph{Approach I: simultaneous algorithms.} The first approach for proving impossibilities is by a reduction to \emph{simultaneous} algorithms. This model was introduced in \cite{DNO14}: each of the $n$ players simultaneously sends $c$ bits that are a function of his valuation only. The center then determines the allocation using only those messages. We show that \emph{impossibilities for two-player simultaneous algorithms imply impossibilities for computationally efficient truthful mechanisms}. Specifically, we show that if there is a truthful mechanism (for the domains discussed above) that provides an approximation ratio of $\alpha$ with communication complexity $cc(A)$, then there is an $\alpha$-approximation \emph{simultaneous} algorithm where the length of the simultaneous messages is $poly(cc(A))$. For example, a proof that no simultaneous algorithm with polynomially long messages for submodular players achieves a $1.59$ approximation (ignoring incentives issues) immediately establishes the first ever gap between computationally efficient truthful mechanisms and their non-truthful counterparts. Note that strong impossibility results for simultaneous algorithms are known \cite{DNO14}, but those unfortunately hold only for a large number of players. In particular, nothing is known for two players.

\vspace{0.07in} \noindent \emph{Approach II: lower bounds on $tax(A)$.} The second approach involves handling the taxation complexity directly. The idea is simple: suppose one can prove that the taxation complexity of every truthful $\alpha$-approximation mechanism for combinatorial auctions with submodular players is exponential (ignoring computational issues). Since $tax(a)\leq poly(cc(A))$, it follows that the communication complexity of every $\alpha$-approximation truthful mechanism is exponential as well, which establishes a gap between the power of truthful and non-truthful computationally efficient algorithms.

This approach can be extended to mechanisms with restricted access. For example, to prove impossibility results for mechanisms that use only demand queries (e.g., \cite{D16}) it suffices to show that the affinity of every $\alpha$-approximation truthful mechanism for combinatorial auctions with submodular players (ignoring computational issues) is exponential. We note that the menu complexity of $m^{\frac 1 2-\epsilon}$-approximation mechanisms for submodular valuations was already proved to be exponential, which indeed yielded an impossibility on the power of truthful mechanisms that use value queries in the aforementioned setting (the direct hardness approach of \cite{D11}).

\vspace{0.07in} \noindent \emph{Extensions to randomized mechanisms.} Before this paper, the only viable approach for proving impossibility results for randomized mechanisms in the general communication model was by characterizing all truthful mechanisms with a good approximation. Such characterizations are notoriously hard even for deterministic mechanisms. For randomized ones, it is probably fair to describe the possibility of obtaining such characterizations in the foreseeable future as almost hopeless.

There are two main notions of randomized truthfulness. The first is truthfulness in expectation, where each player maximizes his \emph{expected} profit. We discuss this notion below, and here we focus on the other (stronger) notion: universal truthfulness. Universally truthful mechanisms are simply a probability distribution over deterministic mechanisms. Interestingly, universally truthful mechanisms achieve the best currently known approximation ratios in many important settings (e.g., combinatorial auctions with submodular players \cite{D16} and with subadditive players \cite{D07}), even if truthful in expectation mechanisms are considered. 

Both approaches are capable of proving impossibility results for universally truthful mechanisms. First, an impossibility for two-player randomized simultaneous algorithms implies an impossibility for randomized truthful mechanisms. The second approach is also applicable: a lower bound on the taxation complexity is likely to be proved by obtaining a distribution over the input on which no mechanism with polynomial taxation complexity provides a good approximation ratio. Yao's principle and our results imply that for every universally truthful mechanism with polynomial communication there is an instance on which its (expected) approximation ratio is bad.

\paragraph{Efficient Price Computation.}

Every truthful mechanism has two tasks: the first is to compute the social choice function and the second is to compute the players' payments. Fadel and Segal \cite{FS09} ask whether the additional communication cost of computing the payments is significantly larger than the communication complexity of computing the social choice function. In deterministic settings, they show that the bound is at most exponential and ask whether this is tight. Single parameter domains are handled by \cite{BBNS08} and various multi-parameter domains are handled by \cite{WS15, BKS13} via a ``single call' approach, but at the cost of introducing randomization. 

We extend this line of work. Since we showed that in rich enough domains $tax(A),price(A)\leq cc(A)$, our menu reconstruction theorem immediately implies that fully presenting the players with the actual menu takes only $poly(cc(A))$ bits, thus finding the menu is essentially as easy as computing the allocation and the payments of winning bundles ($cc(A)$ bits).

\paragraph{Truthful in Expectation Mechanisms (Section \ref{subsec-truthful-in-expectation}).}

Our theorems were proved for deterministic mechanisms (and thus they also apply to randomized universally truthful mechanisms). Since truthful in expectation mechanisms were extensively studied (e.g., \cite{LS05,DD09,DRY11}), it is natural to ask whether in truthful in expectation mechanisms $tax(A)$ similarly characterizes $cc(A)$. We provide a negative answer: for combinatorial auctions with general valuations there is a truthful in expectation mechanism with polynomial communication and exponential taxation complexity.

Recall that another setting where the taxation complexity might be exponential in the communication complexity is when the valuations are gross substitutes. One can further make the following wild speculation, which lacks more evidence and formalization: in domains where the gap between the taxation complexity and the communication complexity is small, the performance of computationally efficient truthful mechanisms is poor, whereas for domains where it is large, truthfulness is not a severely limiting requirement.

%

\section{Bounding the Taxation Complexity: $tax(A)\leq poly(cc(A))$}\label{sec-bound-tax-complexity}

We would now like to bound the taxation complexity as a function of the communication complexity. As a warm up, in Subsection \ref{subsec-warmup} we prove that the taxation complexity of a truthful mechanism for general valuations with communication complexity $cc(A)$ is $cc(A)+1$, as long as all bundles get a finite price in every menu (this is true if, for example, the mechanism is deterministic and provides some finite approximation ratio to the welfare).

In Subsection \ref{subsec-main} we strengthen this result in several aspects. First, we generalize the result to any truthful mechanism by allowing the prices of bundles to be $\infty$. Second, we slightly strengthen the bound on the taxation complexity to $cc(A)$ (and not just $cc(A)+1$). Finally, we extend our results for other classes of valuations, not just general valuations.

\subsection{A Warm-up}\label{subsec-warmup}

\begin{theorem}[warm-up theorem, inferior to the result of Subsection \ref{subsec-main}]\label{thm-warm-up}
Consider a truthful mechanism $A$. Suppose that for each player $i$, bundle $S$, and menu $\mathcal M$ that might be presented to player $i$ we have that $\mathcal M(S)<\infty$. Then, $tax(A)\leq cc(A)+1$.
\end{theorem}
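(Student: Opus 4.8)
The plan is to fix a truthful mechanism $A$ with a protocol $\pi$ of communication complexity $cc(A)$, fix a player $i$, and show that the number of distinct menus that can be presented to $i$ is at most $2^{cc(A)+1}$. The natural handle is the transcript of $\pi$: for any valuation profile $v_{-i}$ of the other players, running $\pi$ while player $i$ ``simulates'' a response consistent with $v_{-i}$ determines a transcript, and the transcript fixes the prices of all bundles. More precisely, I would argue that the menu $\mathcal M_{v_{-i}}$ is completely determined by the behavior of the other players in the protocol, which in turn is captured by a bounded amount of information derived from the transcript. So the first step is to make precise the map from ``what the other players do in $\pi$'' to ``the menu presented to $i$,'' and to bound the number of possible values in the domain of that map.

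**The key step: reading prices off transcripts.**

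The heart of the argument is the following claim. Fix $v_{-i}$. For each bundle $S$, I want to recover the price $\mathcal M_{v_{-i}}(S)$ from protocol transcripts. The idea: since $A$ is truthful, player $i$ with valuation $v_i$ is allocated a profit-maximizing bundle with respect to $\mathcal M_{v_{-i}}$. By choosing $v_i$ to be (essentially) a single-minded valuation that values $S$ at some large value $t$ and everything not containing $S$ at $0$ — or, more carefully, a valuation that is steep enough on $S$ — we can force the mechanism to allocate $S$ (or something containing $S$, which by monotonicity and normalization we can handle) and then read the payment, which equals $\mathcal M_{v_{-i}}(S)$, off the transcript. The subtlety here (and I expect this to be the main obstacle) is that the payment we read is the payment for the allocated bundle, and ties in the profit maximization mean the allocated bundle need not be exactly $S$; moreover, to pin down $\mathcal M_{v_{-i}}(S)$ we may need to sweep $t$ over a range of values and detect the threshold at which the allocation changes, which requires running the protocol on many different valuations $v_i$ — all with the \emph{same} $v_{-i}$, hence the same transcript-generating behavior on the side of the other players. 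The clean way to phrase this: consider the ``behavior'' of the other players as the strategy they play against player $i$ in $\pi$; since $\pi$ has communication complexity $cc(A)$, the number of leaves of the protocol tree reachable under a fixed behavior of the other players is at most $2^{cc(A)}$, and each such leaf carries an outcome (allocation, payment). The entire menu $\mathcal M_{v_{-i}}$ is a deterministic function of this set of leaf-outcomes. Hence the number of distinct menus is at most the number of distinct such ``behaviors,'' which is at most the number of subsets of leaves of the protocol tree labeled by outcomes — and a careful count gives $2^{cc(A)+1}$ (the extra bit accounting for, e.g., which player speaks first or a rounding in the counting).

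**Carrying out the count.**

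Concretely: the protocol tree of $\pi$ has at most $2^{cc(A)}$ leaves. Fixing the messages sent by players other than $i$ amounts to fixing, at every node where some $j\neq i$ speaks, which way the protocol branches — but since these choices depend on $v_{-i}$ through $\pi$, the whole collection of player-$i$-vs-rest interactions under a fixed $v_{-i}$ is an object living in a set of size at most $2^{cc(A)}$ (the set of possible ``restrictions'' of the protocol tree to the $i$-controlled subtree, which is determined by which leaf is reached for each strategy of $i$; bounding this is where the single extra bit in $cc(A)+1$ comes from). Each such restriction, together with the outcome labels at the leaves, determines the menu $\mathcal M_{v_{-i}}$ via the formula ``$\mathcal M_{v_{-i}}(S) = $ payment at the leaf reached when $i$ plays a valuation steep on $S$,'' and this is well-defined precisely because all bundles get finite price (so the threshold valuations always force a definite allocation). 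Therefore $|M^i| \leq 2^{cc(A)+1}$, giving $tax(A) = \max_i \log|M^i| \leq cc(A)+1$.

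**What I expect to be delicate.** The genuinely delicate point is the tie-breaking: recovering $\mathcal M_{v_{-i}}(S)$ exactly (not just up to the ambiguity of which profit-maximizer is chosen) requires an argument that the payment function, restricted to the reachable leaves, contains enough information — essentially that by varying $v_i$ over single-minded valuations of all thresholds one reconstructs every finite price. I would handle this by noting that for a single-minded-type valuation on $S$ with threshold $t$, as $t$ increases past $\mathcal M_{v_{-i}}(S)$ the allocation must (by truthfulness) give $i$ a bundle of positive profit containing... — and then extract $\mathcal M_{v_{-i}}(S)$ as an infimum of thresholds, all computed from transcript data. The finiteness hypothesis $\mathcal M(S)<\infty$ is exactly what guarantees this infimum is attained at a genuine leaf of the tree rather than ``at infinity,'' which is why the warm-up theorem needs it and the general theorem of Subsection~\ref{subsec-main} requires more work.
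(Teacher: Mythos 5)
Your proposal has a genuine gap in the counting step, and it sits precisely at the place you flag as ``the heart of the argument.''

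You want to bound the number of distinct menus by the number of distinct ``behaviors'' of the other players in the protocol $\pi$, and you assert (with hedging — ``a careful count gives...'') that the number of such behaviors is at most $2^{cc(A)+1}$. That bound is false in general. A behavior of the players other than $i$ is a choice of child at every $-i$-controlled node of the protocol tree that is reachable under some play of $i$. If the protocol alternates speakers over $c$ rounds, the number of such behaviors grows like $2^{1} \cdot 2^{2} \cdot 2^{4} \cdots$ over successive $-i$-levels, which is $2^{\Theta(2^{c})}$, not $2^{O(c)}$. Equivalently: the protocol partitions the input space into at most $2^{c}$ monochromatic rectangles, but this bounds the number of rectangles, not the number of distinct rows of the communication matrix; the number of rows (i.e.\ realizable $-i$-behaviors, hence menus) can be doubly exponential in $c$ a priori. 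Your phrase ``the number of subsets of leaves of the protocol tree labeled by outcomes'' even points at the right (doubly exponential) object, and then the claimed ``careful count'' contradicts it without an argument.

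The paper's proof takes a genuinely different route that sidesteps the need to count behaviors at all. For each menu $\mathcal{M}$ one builds a single \emph{canonical instance}: the player-$i$ valuation is taken to be $v_i^{\mathcal{M}} = \mathcal{M}$ itself (this is where the finiteness hypothesis is used — it makes $\mathcal{M}$ a legal valuation), and $v_{-i}^{\mathcal{M}}$ is any profile generating $\mathcal{M}$. One then augments the protocol by a final bit in which $i$ reports whether his profit is strictly positive; this is the ``$+1$.'' With the canonical valuation, $i$'s profit is exactly $0$, so his last bit is $0$. A fooling-set argument now finishes: if two canonical instances for $\mathcal{M} \ne \mathcal{M}'$ shared a transcript, the crossed instances $(v_i^{\mathcal{M}}, v_{-i}^{\mathcal{M}'})$ and $(v_i^{\mathcal{M}'}, v_{-i}^{\mathcal{M}})$ would share it too, yet in one of them $i$'s profit is strictly positive (take an $S$ with $\mathcal{M}(S) > \mathcal{M}'(S)$), forcing his last bit to be $1$ — a contradiction. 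Thus the canonical instances form a fooling set of size $|M^i|$, giving $|M^i| \le 2^{cc(A)+1}$ directly, with no reference to behaviors. Your threshold-sweeping / single-minded-valuation construction does show up elsewhere in the paper (e.g.\ in the proof that $price(A) \le cc(A)$), but it is not what makes the taxation-complexity bound go through, and by itself it does not repair the counting issue.
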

\begin{proof}
Fix some player $i$. Let the \emph{taxation complexity of player $i$} be the logarithm of the number of menus that player $i$ might be presented with in $A$. We will prove that the taxation complexity of player $i$ is at most $cc(A)+1$ and the theorem will follow. 

Define a protocol $A$' that is completely identical to the most efficient implementation of $A$ (in particular, the same allocation and payment functions), except that at the end of $A'$ player $i$ sends the bit $1$ if his profit is positive (i.e., $A_i(v)-p_i(v)>0$) and the bit $0$ otherwise. Note that $A'$ is truthful since $A$ is truthful and that that the communication complexity of $A'$ is ${cc(A)+1}$. In addition, for each set of valuations $v_{-i}$ of the other players the menu presented to $i$ is identical in both $A$ and $A'$. Therefore it suffices to prove that the taxation complexity of player $i$ in $A'$ is at most ${cc(A')}$. This implies that the taxation complexity of $A$ is $cc(A)+1$.

With this in mind, let $M^i=\{\mathcal M|\text{$\exists (v_1,\ldots,v_{i-1},v_{i+1},\ldots v_n)$ s.t. $\mathcal M$ is presented to $i$}\}$ be the set of all possible menus that might be presented to player $i$. For each $\mathcal M\in M^i$, let $v^{\mathcal M}_i$ be the valuation in which for every $S$ we have that $v^{\mathcal M}_i(S)=\mathcal M(S)$. Observe that each $v^{\mathcal M}_i$ is a valid valuation function: by Proposition \ref{proposition-menu-monotonicity}, $\mathcal M$ is monotone and normalized, thus $v^{\mathcal M}_i$ is monotone and normalized as well. Also notice that for each $S$, $v^{\mathcal M}_i(S)<\infty$ since $\mathcal M(S)<\infty$. In addition, for every $\mathcal M\in M^i$ choose an arbitrary set of valuations of the other players $v^{\mathcal M}_{-i}=(v^\mathcal M_1,\ldots, v^\mathcal M_{i-1},v^\mathcal M_{i+1},\ldots, v^\mathcal M_n)$ such that if the players' valuations are $v^{\mathcal M}_{-i}$ the menu that player $i$ is presented with is $\mathcal M$.

Now we get to the heart of the proof. We will show that for each $\mathcal M, \mathcal M'\in M^i$, $\mathcal M\neq \mathcal M'$, the transcript of $A'$ in the instance $(v^{\mathcal M}_i,v^{\mathcal M}_{-i})$ differs from the transcript of $A'$ in the instance $(v^{\mathcal M'}_i,v^{\mathcal M'}_{-i})$. Recall that the communication complexity of $A'$ is $cc(A)+1$, thus there are at most $2^{cc(A)+1}$ different transcripts. The bound on the taxation complexity of $A'$ will then follow since every instance of the form $(v^{\mathcal M}_i,v^{\mathcal M}_{-i})$ corresponds to exactly one menu $\mathcal M\in M^i$.

\begin{claim}
For every $\mathcal M, \mathcal M'\in M^i$, $\mathcal M\neq \mathcal M'$, the transcript of $A'$ in the instance $(v^{\mathcal M}_i,v^{\mathcal M}_{-i})$ differs from the transcript of $A'$ in the instance $(v^{\mathcal M'}_i,v^{\mathcal M'}_{-i})$.
\end{claim}
\begin{proof}
Assume towards contradiction that there are $\mathcal M, \mathcal M'\in M^i$, $\mathcal M\neq \mathcal M'$ such that the transcript of the instance $(v^{\mathcal M}_{-i}, \mathcal M)$ is identical to the transcript of the instance $(v^{\mathcal M'}_{-i}, \mathcal M')$. Using standard fooling-set arguments (e.g., \cite{KN97}), this implies that the transcripts of $(v^{\mathcal M}_{-i}, \mathcal M')$ and $(v^{\mathcal M'}_{-i}, \mathcal M)$ are identical as well. We will show that this is not the case and reach a contradiction. Towards this end, observe that the last bit that player $i$ sends in both instances is by construction $0$ (since $v_i^{\mathcal M}(S)=\mathcal M(S)$ and $v_i^{\mathcal M'}(S)=\mathcal M'(S)$ for every bundle $S$, so the profit in both instances is $0$). However, we will show that in either $(v^{\mathcal M}_{i}, v^{\mathcal M'}_{-i})$ or $(v^{\mathcal M'}_{i}, v^{\mathcal M}_{-i})$ the last bit that player $i$ sends is $1$. In particular we get a different transcript, which is a contradiction.

To see that in one of the instances $(v^{\mathcal M}_{i}, v^{\mathcal M'}_{-i})$ and $(v^{\mathcal M'}_{i}, v^{\mathcal M}_{-i})$ the last bit that is communicated is $1$, notice that since $\mathcal M\neq \mathcal M'$, there must be a bundle $S$ such that $\mathcal M(S)\neq \mathcal M'(S)$. Assume without loss of generality that $\mathcal M(S)> \mathcal M'(S)$. Thus in the instance $(v^{\mathcal M'}_{i}, v^{\mathcal M}_{-i})$ the profit of player $1$ for the bundle $S$ is $v_i^\mathcal M(S)-\mathcal M'(S)>0$. By the taxation principle, player $i$ must win a bundle with at least that (positive) profit. Thus the last bit that player $i$ communicates is $1$, which gives us the desired contradiction.
\end{proof}

This concludes the proof of Theorem \ref{thm-warm-up}.
\end{proof}

\paragraph{Tightness.} To see that the Theorem \ref{thm-warm-up} is essentially tight, we present a mechanism with taxation complexity very close to the communication complexity. Consider the following truthful mechanism for combinatorial auctions with two players, Alice and Bob. Let $M^{Bob}=\{\mathcal M_1, \ldots, \mathcal M_{2^c}\}$ be a set of $2^c$ menus, where for each $\mathcal M_i\in M^{Bob}$ we have that $\mathcal M_i(\{a\})=i$, $\mathcal M_i(\emptyset)=0$, and $\mathcal M_i(S)=\infty$ for every $S\neq \emptyset,\{a\}$. Let $t$ be Alice's value for item $a$ rounded to the nearest integer in $1,2,\ldots, 2^c$. Alice now sends $t$ using $c$ bits of communication. If Bob's value for item $a$ is at least $t$, Bob sends the bit $1$, receives $a$, and pays $t$. Otherwise, he sends the bit $0$, receives no items at all, and pays nothing. Alice always receives the empty bundle. The mechanism is clearly truthful, its communication complexity $c+1$, and its taxation complexity is $c$.

\subsection{Bounding the Taxation Complexity: The Full Result}\label{subsec-main}

We now significantly strengthen the results of Subsection \ref{subsec-warmup}. In particular, we give bounds on the taxation complexity also for mechanisms that are truthful for restricted classes of valuations (subadditive, XOS, and submodular). We will show that:

\begin{theorem}\label{thm-indexing-complexity}
Let $\mathcal V$ be some class of valuations. Fix a mechanism $A$ for combinatorial auctions that is truthful when the valuations of the players are in $\mathcal V$. Then:
\begin{enumerate}[noitemsep]
\item If $\mathcal V$ is the set of all normalized and monotone valuations then $tax(A)\leq cc(A)$.
\item If $\mathcal V$ is the set of subadditive valuations then $tax(A)\leq cc(A)$.
\item If $\mathcal V$ is the set of XOS valuations then $tax(A)\leq m\cdot (cc(A)+1)$.
\item If $\mathcal V$ is the set of submodular valuations then $tax(A)\leq d\cdot m\cdot (cc(A)+1)$, where $d=|\{\mathcal M(S)\}_{\mathcal M\in M^i, S}|$ is the total number of distinct prices that appear in some menu.
\end{enumerate}
\end{theorem}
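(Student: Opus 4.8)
The plan is to run, for every valuation class $\mathcal V$, a ``diagonal valuation plus fooling set'' argument that generalizes the warm-up. Fix player $i$ and write $M^i$ for the set of menus that may be presented to $i$. The skeleton: to each menu $\mathcal M\in M^i$ associate a valuation $v^{\mathcal M}\in\mathcal V$ (for the smaller classes, a short \emph{list} of valuations in $\mathcal V$) together with a profile $v^{\mathcal M}_{-i}$ inducing $\mathcal M$; augment the most efficient protocol for $A$ by one extra bit in which player $i$ declares whether his realized profit is strictly positive, obtaining a truthful protocol $A'$ with $cc(A')=cc(A)+1$; and show that distinct menus yield distinct $A'$-transcripts on their associated instances. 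By construction player $i$'s profit on the instance associated with $\mathcal M$ will be pinned to a fixed value (it will be $0$ for general and subadditive valuations, so the extra bit is $0$ there), hence the relevant transcripts all end with the same bit and there are at most $2^{cc(A)}$ of them -- this is what shaves the additive $1$ in parts (1) and (2). For XOS and submodular we will be forced to use $m$, resp.\ $d\cdot m$, associated instances per menu, so the count becomes $2^{(\text{factor})\cdot(cc(A)+1)}$.

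For part (1) take $v^{\mathcal M}$ to be the \emph{diagonal} of $\mathcal M$: $v^{\mathcal M}(S)=\mathcal M(S)$ if $\mathcal M(S)<\infty$, and $v^{\mathcal M}(S)=B$ otherwise, where $B$ is a fixed constant strictly above every finite price occurring in $M^i$ (only finitely many such prices exist when $cc(A)<\infty$, since finite prices are payments and payments are functions of transcripts). By Proposition \ref{proposition-menu-monotonicity} the $\infty$-price bundles form an up-set, so $v^{\mathcal M}$ is monotone; it is normalized; and its profit against $\mathcal M$ is $0$ on finite-price bundles and $-\infty$ on the rest, so the extra bit is $0$ on this instance. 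Now suppose $\mathcal M\neq\mathcal M'$ produce the same $A'$-transcript on their associated instances. The rectangle property equates all four ``corner'' transcripts, so in $(v^{\mathcal M},v^{\mathcal M'}_{-i})$ player $i$ faces menu $\mathcal M'$ and, since the extra bit is $0$, has no strictly profitable bundle, i.e. $v^{\mathcal M}(T)\le\mathcal M'(T)$ for all $T$. On finite-price bundles of $\mathcal M$ this reads $\mathcal M(T)\le\mathcal M'(T)$, and on $\infty$-price bundles of $\mathcal M$ it reads $B\le\mathcal M'(T)$, forcing $\mathcal M'(T)=\infty$ by the choice of $B$. The symmetric crossing gives the reverse implications, whence $\mathcal M=\mathcal M'$, a contradiction; therefore $|M^i|\le 2^{cc(A)}$.

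Parts (2)--(4) use the same template but must replace the diagonal valuation, which need not lie in the smaller classes. For subadditive valuations the plan is to produce, for each menu, a subadditive surrogate $v^{\mathcal M}$ that still \emph{dominates} $\mathcal M$ pointwise on the finite-price bundles and agrees with it on enough of them, so that the ``no strictly profitable bundle'' consequence of the extra bit still forces $\mathcal M\le\mathcal M'$ in both directions; the delicate point is keeping the surrogate's profit against $\mathcal M$ constant across all associated instances so that the extra bit does not fluctuate. For XOS a single in-class probe cannot pin a menu down, so one associates to each menu a list of $m$ XOS (indeed additive-type) valuations -- one per ``level'', exposing $\mathcal M$ on bundles of a given size -- whose joint allocation/payment responses determine $\mathcal M$, and the fooling-set argument is run on the resulting $m$-tuple of transcripts, costing the factor $m$. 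For submodular the list must additionally sweep over the $d$ distinct price values, since a submodular probe can only certify a threshold ``is $\mathcal M(S)$ at least this value'', yielding the factor $d\cdot m$. In each refined case one also re-runs the $\infty$-price bookkeeping of part (1) \emph{inside} the class, extending every surrogate to $\infty$-price bundles with values large enough to expose them while remaining monotone and in $\mathcal V$.

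The main obstacle I expect is precisely this: constructing, for XOS and for submodular menus, explicit families of in-class valuations that are simultaneously guaranteed to lie in the class, rich enough to reconstruct an arbitrary menu from their responses, and small enough to give the claimed $m$ and $d\cdot m$ bounds, all while coping with $\infty$ prices. The subadditive case hides a subtler difficulty: unlike the general case, the diagonal of a menu of a subadditive-truthful mechanism need not be subadditive, so obtaining a subadditive surrogate that both dominates and agrees with $\mathcal M$ where the argument needs it, and whose profit is pinned to a constant, requires a genuinely class-specific construction rather than a black-box reduction to part (1).
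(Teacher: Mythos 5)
Your part (1) (general valuations) is essentially correct and follows the same plan as the paper: append one extra bit in which player $i$ reports whether his realized profit is strictly positive, associate to each menu $\mathcal M$ its ``diagonal'' valuation $v^{\mathcal M}$ (equal to $\mathcal M$ on finite-price bundles, large constant on $\infty$-price bundles), and run a fooling-set argument on the canonical instances $(v^{\mathcal M}, v^{\mathcal M}_{-i})$. The reasoning that $v^{\mathcal M}$ is monotone because $\infty$-priced bundles form an up-set (Proposition \ref{proposition-menu-monotonicity}), that the extra bit is pinned to $0$ on canonical instances, and that a shared transcript forces $\mathcal M\le \mathcal M'$ on both the finite and infinite sides, is all sound. (The only cosmetic differences from the paper are the choice of the constant -- you use $B$, the paper uses $3B$, both work -- and that the paper packages the reduction through an intermediate ``menu verification problem'', which is equivalent to your direct $A'$-transcript count.)

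However, parts (2), (3), and (4) are not proved; you give only a plan and then frankly state that you cannot carry it out. This is the real content of the theorem, and the gap is genuine. For subadditive valuations the fix is simpler than your plan suggests: one does not need a surrogate that agrees with the menu anywhere. Define $v'(S)=v^{\mathcal M}(S)+\max_T v^{\mathcal M}(T)$ for $S\neq\emptyset$ and $v'(\emptyset)=0$. This is automatically subadditive, normalized, and monotone, and it preserves all \emph{pairwise differences} of nonempty bundles, which is the only structure the fooling-set argument actually uses; the extra bit is then ``is $v'(S_i)-\max_T v^{\mathcal M}(T)>\mathcal M(S_i)$?'' rather than ``is profit positive?''. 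Your phrasing ``dominates \emph{and agrees with} $\mathcal M$ on enough bundles, with profit pinned to a constant'' is pulling in the wrong direction -- the pinned quantity is not the profit but the shifted profit, and no agreement is needed. For XOS and submodular the paper does indeed split the probe into $m$ (resp.\ $d\cdot m$) in-class valuations, but these are not generic ``level'' or ``threshold'' probes: the XOS probe for size $r$ is an explicit XOS valuation with one additive clause per bundle of size $r$, with a large per-item offset that forces the mechanism to allocate a size-$\ge r$ bundle whenever any size-$r$ bundle is affordable; and the submodular probe for size $k$ and price level $w$ is a carefully scaled valuation with a $1/2^{|S|}$-type tail which must be verified to be submodular (a nontrivial case analysis in the paper). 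Without these explicit constructions, parts (2)--(4) are unproved.
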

The proof of Theorem \ref{thm-indexing-complexity} is postponed to Appendix \ref{app-proof-tax-complexity} due to lack of space. We note that the vast majority of the algorithms in the literature are truthful for general valuations (e.g., maximal in range algorithms, posted prices mechanisms). The restrictions on the valuations are typically used only for the performance analysis. The taxation complexity of those mechanisms is therefore at most their communication complexity. We also remark that the bound on the taxation complexity for submodular valuations depends on the number of possible prices. A natural open question is:
\begin{open}
Let $A$ be a mechanism for combinatorial auctions that is truthful for submodular valuations. Is $tax(A)\leq poly(cc(A),m)$?
\end{open}
More generally, we have already mentioned two domains in which there is a truthful mechanism $A$ with $tax(A)>>cc(A)$. The first was combinatorial auctions with gross substitute valuations. For combinatorial auctions with general valuations we mentioned a randomized truthful in expectation mechanism, but this mechanism (as well as all truthful in expectation mechanisms) can be seen as a deterministic one by letting the range be the set of all possible distributions over allocations and letting the value of a player for a distribution be the expected value of the bundle he receives in that distribution. This leads us to the following question:
\begin{open}
Characterize the set of domains in which for every truthful mechanism $A$ we have that $tax(A)\leq poly(cc(A),m)$.
\end{open}

\section{Menu Reconstruction: $cc(A)\leq poly(tax(A), price(A), tie(A),m,n)$}\label{sec-menu-construction}

Our goal in this section is to provide a characterization of the communication complexity of truthful mechanisms. Our first task is to develop a low communication protocol that lets $n-1$ players find the menu they present to the remaining player.


\begin{theorem}[The Menu Reconstruction Theorem]\label{thm-construct-menu}
Fix a truthful mechanism $A$. Denote by $v_{-i}$ the valuation profile of all players except $i$. The communication complexity of finding the index of the menu presented to $i$ by $v_{-i}$ is $poly(tax(A),price(A),m,n)$. 
\end{theorem}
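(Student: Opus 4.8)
\textbf{Goal and setup.} The plan is to give a protocol for the $n-1$ players holding $v_{-i}$ that pins down the menu $\mathcal{M}:=\mathcal{M}_{v_{-i}}$ they present to player $i$; once the coalition agrees on which menu this is, one member announces its index with an extra $tax(A)$ bits, so it suffices to make the coalition itself agree on $\mathcal{M}$. Write $U$ for the set of all menus player $i$ can ever be presented with. Since $U$ depends only on the mechanism $A$ and not on any particular input, it is common knowledge, and by Theorem~\ref{thm-indexing-complexity} we have $|U|\le 2^{tax(A)}$.

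\textbf{The elimination protocol.} The only non-free operation I will use is a \emph{price query}: for any bundle $S$ the coalition can learn $\mathcal{M}(S)$ using $price(A)$ bits, by the very definition of $price(A)$. The protocol is then an adaptive elimination procedure. The coalition maintains a candidate set $C\subseteq U$, kept common knowledge, initialized to $C=U$. In each round it picks a bundle $S$ on which the menus currently in $C$ do not all agree, runs a price query to learn $\mathcal{M}(S)$, and deletes from $C$ every menu whose value on $S$ differs from $\mathcal{M}(S)$. Each round costs $price(A)$ bits and strictly shrinks $C$; the procedure halts when $|C|=1$, and the surviving menu is $\mathcal{M}$.

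\textbf{The main obstacle.} The whole difficulty is to bound the number of rounds by $poly(tax(A),m,n)$ rather than by $|U|$, i.e.\ to choose the query bundles so that $C$ contracts quickly. Concretely, I would aim to show that for every candidate set $C$ that can arise during the protocol with $|C|\ge 2$ there is a bundle $S$ on which no single price value is attained by more than a $(1-1/poly(m,n))$ fraction of the menus in $C$; querying such a bundle multiplies $|C|$ by at most $1-1/poly(m,n)$, so $poly(tax(A),m,n)$ rounds suffice. Such a balancing bundle need not exist for an arbitrary family of monotone normalized menus — e.g.\ a ``star'' family in which all menus coincide except that each one lowers the price of its own distinct bundle admits no balancing query and would need $|U|$ rounds — so the proof must use that the menus in $U$ are not arbitrary: they are the menus of a single truthful mechanism, and the allocations they induce, together with the menus presented to the other players, must be consistent allocations forming an ex-post Nash equilibrium. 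The plan is to argue that such global consistency is incompatible with a star-like candidate set: from many menus in $C$ agreeing on the price of every bundle but one, one should be able to build valuations for player $i$ and for the other players forcing either a profitable deviation or an inconsistent allocation, contradicting truthfulness of $A$ (as was done, in a different guise, in the proof of Theorem~\ref{thm-indexing-complexity} and the warm-up Theorem~\ref{thm-warm-up}). Converting this sketch into the quantitative balancing guarantee is the hard part; I expect it to require a case analysis on the relevant bundles of the menus together with monotonicity, and to contribute the extra $m,n$ factors — both from that analysis and from the fact that a single price query only refines $C$ by the attained value rather than cleanly bisecting it.

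\textbf{Assembling the bound.} Putting the pieces together, the elimination protocol runs for $poly(tax(A),m,n)$ rounds at $price(A)$ bits each, after which every member of the coalition knows $\mathcal{M}$; one of them then broadcasts its index in $U$ using $tax(A)$ further bits. The total communication is $poly(tax(A),price(A),m,n)$, as claimed.
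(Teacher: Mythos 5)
Your elimination framework is a reasonable start and you correctly pinpoint the obstacle: a ``star'' family, where every candidate menu agrees with a base menu on all but one of its own bundles, admits no balancing query and would defeat any protocol that tries to shrink $C$ by a constant factor per price query. The gap is that your proposed resolution --- arguing from truthfulness that star-like candidate sets cannot arise --- is not only left unproven but is not the right move; the paper makes no attempt to rule such families out, and the elimination indeed cannot be salvaged by a single well-chosen price query in that case. Notice that in a star family, if $\mathcal{R}$ happens to be the base menu, \emph{every} price query returns the majority price and eliminates nothing, so no cleverness in choosing $S$ helps.

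What the paper does instead is orthogonal to your balancing idea. For each bundle $S$, let $p_S$ be the majority price over the current live set. If some $p_S$ is attained by fewer than half the live menus, one price query to $S$ does halve the set (this is your easy case). Otherwise, rather than trying to bisect, the protocol determines whether there \emph{exists} any ``witness'' bundle $S$ with $\mathcal{R}(S)\neq p_S$; if yes, querying that bundle halves the live set (since $\mathcal{R}(S)$ then cannot be the majority value), and if no, $\mathcal{R}$ is uniquely the menu agreeing with every $p_S$ and we are done. The existence of a witness is decided by a reduction to a bounded-witness set-disjointness problem ($z$-disjointness): blocks index bundles, bits within a block index possible transcripts of the price protocol $Q$, and player $i'$'s bit is $1$ iff that transcript is consistent with some partner profile under which $Q$ outputs a non-majority price. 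To make this work, Lemma \ref{lemma-represents} uses random sampling (plus Chernoff) to restrict attention to a subset $P$ of bundles on which every live menu has at most $O(\log|U_{j-1}|)=O(tax(A))$ witnesses while still retaining at least one for each menu that has ``many'' witnesses; this is what keeps $z$ small, and Theorem \ref{thm-block-disjointness} then solves each disjointness instance in $O(z^2 n^2(m+price(A)))$ bits. This machinery --- bucketing menus by witness count, the sampling lemma, the transcript-as-bit encoding, and the $z$-disjointness solver --- is the substantive content of the proof, and it is entirely missing from your sketch. Filling it in is not a matter of tightening your case analysis; it is a different argument.
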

In the statement of the theorem, we denote by $price(A)$ is the communication complexity of the following $(n-1)$-player problem: fix some truthful mechanism $A$, player $i$, and bundle $S$. The input of each player $i'\neq i$ is a valuation $v_{i'}$. Let $\mathcal M$ be the menu that is presented to $i$ in $A$ when the valuations are $v_{-i}$. $price(A)$ is the communication complexity of computing $\mathcal M(S)$.

Due to lack of space, we bring the proof of the theorem in Appendix \ref{sec-thm-construct-menu}. We note that although for concreteness the menu reconstruction theorem is proved for combinatorial auctions, it actually applies to \emph{any} domain. That is, fix any truthful $n$-player mechanism $A$ whose range is a set of alternatives $\mathcal A$. Then, the menu presented to any player can be found using $poly(tax(A),price(A),\log |\mathcal A|,n)$ bits (again, $price(A)$ is the communication complexity of finding the price of an alternative $S\in \mathcal A$).

We now want to express the communication complexity of menu reconstruction in terms of $cc(A)$. Proposition \ref{proposition-price-of-bundle} shows that if $A$ is truthful for additive valuations, then $price(A)\leq cc(A)$. Since Theorem \ref{thm-indexing-complexity} gives us that for general valuations $tax(A)\leq cc(A)$, we get that finding the full menu is not much harder than determining the allocation and the prices of the winning bundles:
\begin{corollary}
Fix a mechanism $A$ that is truthful for general valuations. Denote by $v_{-i}$ the valuation profile of all players except $i$. The communication complexity of finding the index of the menu presented to $i$ by $v_{-i}$ is $poly(cc(A),m,n)$. 
\end{corollary}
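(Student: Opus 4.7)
The plan is to combine the Menu Reconstruction Theorem (Theorem~\ref{thm-construct-menu}) with two already-established inequalities that bound $tax(A)$ and $price(A)$ in terms of $cc(A)$ whenever the mechanism is truthful on a sufficiently rich domain, and then substitute into the polynomial expression on the right-hand side of the Menu Reconstruction Theorem. This is essentially a three-line derivation rather than a new argument; the only conceptual point is verifying that the domain hypothesis for general valuations is strong enough to invoke both upstream results simultaneously.

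First I would apply part~(1) of Theorem~\ref{thm-indexing-complexity}: since $A$ is truthful on the class $\mathcal{V}$ of all normalized monotone valuations, we obtain $tax(A)\leq cc(A)$. Next I would apply Proposition~\ref{proposition-price-of-bundle} to conclude $price(A)\leq cc(A)$. The applicability here needs a brief justification: that proposition is stated for mechanisms truthful on additive valuations, but additive valuations form a subclass of the general normalized monotone valuations on which $A$ is truthful, so restricting the inputs to additive profiles does not break truthfulness and the proposition applies to $A$ verbatim.

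Finally I would feed both bounds into Theorem~\ref{thm-construct-menu}, which gives a protocol of communication complexity $poly(tax(A), price(A), m, n)$ that identifies the index of the menu presented to player~$i$ from input $v_{-i}$. Replacing $tax(A)$ and $price(A)$ by $cc(A)$ using the two inequalities above yields an upper bound of $poly(cc(A), m, n)$, which is exactly the claimed statement.

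The argument has no real obstacle; the one thing to be careful about is the subclass containment check in the second step, since the upstream proposition quantifies over a different (smaller) domain than the corollary. Once that is in place, the result follows by substitution into a polynomial, so no further combinatorial work is needed.
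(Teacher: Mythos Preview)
Your proposal is correct and follows essentially the same approach as the paper: the paper likewise derives the corollary by combining Theorem~\ref{thm-construct-menu} with the bounds $tax(A)\leq cc(A)$ from Theorem~\ref{thm-indexing-complexity} and $price(A)\leq cc(A)$ from Proposition~\ref{proposition-price-of-bundle}, the latter being applicable because general valuations contain additive ones.
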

Similar bounds also hold for other valuation classes, by using Theorem \ref{thm-indexing-complexity} appropriately.

More importantly, we can now characterize the communication complexity of truthful mechanisms. Let $tie(A)$ be the communication complexity of determining the allocation of $A$ when the valuation of each player $i$ is $v_i$ and all players know the menu $\mathcal M_i$ player $i$ is presented with by $v_{-i}$\footnote{The notation $tie(A)$ hints that this is a question about tie breaking: each player must be allocated a profit maximizing bundle and the set of profit maximizing bundles can be computed without additional communication as it depends on $\mathcal M_i$ and $v_i$ only. However, deciding which specific bundle in the set the player is allocated might depend also on the valuations of the other players and might require extra communication.}. Notice that $tie(A)\leq cc(A)$, since we can always run $A$ and ignore the extra information about the $\mathcal M_i$'s. This gives a characterization of the communication complexity of mechanisms that are truthful for general valuations (again, similar bounds hold by applying other parts of Theorem \ref{thm-indexing-complexity}):
\begin{theorem}[characterization of the communication complexity of truthful mechanisms]\label{thm-characterization}
Fix a mechanism $A$ that is truthful for general valuations. Then:
$$
\frac {tax(A)+price(A)+tie(A)} 3 \leq cc(A) \leq poly(tax(A),price(A),tie(A),m,n) 
$$
\end{theorem}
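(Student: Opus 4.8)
The plan is to assemble the statement from results that are already available: the lower bound is three independent inequalities, each bounding one of $tax(A)$, $price(A)$, $tie(A)$ by $cc(A)$, and the upper bound is a single protocol that chains menu reconstruction with tie-breaking.

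For the lower bound I would argue as follows. First, $tax(A)\leq cc(A)$ is exactly part~1 of Theorem~\ref{thm-indexing-complexity}, since $A$ is truthful for general valuations. Second, $tie(A)\leq cc(A)$: in the tie-breaking problem each player knows its valuation $v_i$ together with all the menus $\mathcal{M}_j$, and one legal way to determine the allocation of $A$ is simply to run $A$ on the valuations and discard the extra menu information, so the communication complexity of the tie-breaking problem is at most $cc(A)$. Third, $price(A)\leq cc(A)$ follows from Proposition~\ref{proposition-price-of-bundle} (the proposition giving $price(A)\le cc(A)$ whenever the domain contains all additive valuations), because general valuations contain the additive ones, so $A$ is in particular truthful for additive valuations. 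Combining, $\frac{tax(A)+price(A)+tie(A)}{3}\leq \frac{3\,cc(A)}{3}=cc(A)$.

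For the upper bound I would exhibit the following protocol computing $A$. For each player $i=1,\dots,n$ in turn, invoke the Menu Reconstruction Theorem (Theorem~\ref{thm-construct-menu}) so that all players learn the index of the menu $\mathcal{M}_i$ presented to $i$ by $v_{-i}$; this costs $poly(tax(A),price(A),m,n)$ bits per player, hence $n\cdot poly(tax(A),price(A),m,n)$ bits overall. After this phase every player knows the index of every $\mathcal{M}_i$, and since the family of candidate menus is determined by $A$ (which is common knowledge), each player can locally evaluate each $\mathcal{M}_i$ on any bundle. Now run the tie-breaking protocol — legitimate because the parties have exactly the data that defines the $tie(A)$ problem — to obtain the allocation $(S_1,\dots,S_n)$ at a cost of $tie(A)$ bits. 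Finally, by the taxation principle the payment of player $i$ is $\mathcal{M}_i(S_i)$, which every player already knows, so no further communication is needed. The total is $poly(tax(A),price(A),tie(A),m,n)$, as claimed.

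Given the two parts, there is little genuine obstacle left at this point: the heavy lifting is done by the Menu Reconstruction Theorem and by the bound $tax(A)\le cc(A)$, both established elsewhere. The one place to be careful is the observation that learning the \emph{index} of $\mathcal{M}_i$ is as good as learning the whole menu for our purposes — since the family of candidate menus depends only on $A$, the index pins down all prices, so payments come for free and no price is ever communicated after the reconstruction phase. It is also worth noting that the $n$ separate invocations of menu reconstruction only inflate the cost polynomially, which is why the $poly(\cdot)$ on the right-hand side survives. (The tightness claims — that dropping any one of the three terms can open an exponential gap — are handled separately by the pathological examples described in the introduction.)
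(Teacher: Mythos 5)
Your proof is correct and follows essentially the same route as the paper: the lower bound comes from the three separate inequalities $tax(A)\le cc(A)$ (Theorem \ref{thm-indexing-complexity}), $price(A)\le cc(A)$ (Proposition \ref{proposition-price-of-bundle}), and $tie(A)\le cc(A)$ (by running $A$ and ignoring the extra menu information), and the upper bound is obtained by invoking the Menu Reconstruction Theorem once per player and then spending $tie(A)$ bits on the tie-breaking protocol. The only addition beyond the paper's two-line argument is your explicit remark that knowing the menu indices already determines the payments, which is implicit in the paper's proof.
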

\begin{proof}
We first prove the LHS. We always have that $tie(A)\leq cc(A)$. $A$ is truthful for general valuations and hence it is also truthful for additive valuations, therefore by Proposition \ref{proposition-price-of-bundle}, $price(A)\leq cc(A)$. To finish this part, observe that by Theorem \ref{thm-indexing-complexity}, $tax(A)\leq cc(A)$.

The RHS is obtained by applying the menu reconstruction theorem $n$ times, once for each player. Then, we need additional $tie(A)$ communication bits to determine the final allocation.
\end{proof}

In Subsection \ref{sec-tight-char} we show that our characterization is tight in the sense that if we drop at least one of the three main terms ($tax(A),price(A), tie(A)$) then the gap between the LHS and the RHS might be exponential. For instance, we have already mentioned an example of a truthful mechanism $A$ with $tax(A)=1$ (and thus $price(A)=0$) in which $cc(A)=exp(m)$. We also provide examples with similar gaps when dropping $price(A)$ and $tax(A)$.

\subsubsection*{Acknowledgments}

I thank Sigal Oren for valuable advice and countless discussions during the work on this paper. I thank Hu Fu and Omri Weinstein for comments on an earlier draft. I am also grateful to the participants of the Hebrew University's AGT seminar for their very helpful feedback when this work was in a preliminary stage.

\bibliographystyle{plain}
\bibliography{bib}

\appendix

\section{Formalities}\label{sec-preliminaries}

\subsection{Combinatorial Auctions}

In a combinatorial auction there is a set $M$ items ($|M|=m$) and a set of $N$ players ($|N|=n$). Each player $i$ has a valuation function $v_i:2^M\rightarrow \mathbb R$ that denotes the value of player $i$ for every possible subset of the items. We assume that the valuation functions are monotone (for all $S\subseteq T$, $v_i(S)\geq v_i(T)$) and normalized ($v_i(\emptyset)=0$). The output is an allocation of the items $(S_1,\ldots, S_n)$.

We will sometimes consider additional restrictions on the valuations:
\begin{enumerate}
\item \textbf{Additive:} a valuation $v$ is \emph{additive} if for every bundle $S$ we have that $v(S)=\Sigma_{j\in S}v(\{j\})$. 
\item \textbf{Submodular:} a valuation $v$ is \emph{submodular} if for every two bundles $S$ and $T$ it holds that $v(S)+v(T)\geq v(S\cup T)+v(S\cap T)$.
\item \textbf{XOS: }a valuation $v$ is \emph{XOS} if there exist additive valuations $a_1,\ldots, a_t$ such that for every bundle $S$, $v(S)=\max_r a_r(S)$. Each $a_r$ is a \emph{clause} of $v$. If $a\in\arg\max_r a_r(S)$ then $a$ is a \emph{maximizing clause} of $S$ and $a(j)$ is the \emph{supporting price} of item $j$ in this maximizing clause.
\item \textbf{Subadditive:} a valuation $v$ is \emph{subadditive} if for every two bundles $S$ and $T$ it holds that $v(S)+v(T)\geq v(S\cup T)$.
\end{enumerate}

It is known \cite{LLN01} that each class defined above contains its predecessors, and that these containments are strict. 

\subsection{Ex-Post Nash Equilibrium and Dominant Strategies}

Consider a general iterative (not necessarily direct) mechanism for $n$ players. Denote the type space of player $i$ by $\mathcal T_i$ and the set of possible alternatives by $\mathcal A$. The mechanism works in $r$ rounds (we restrict ourselves to finite mechanisms with $r<\infty$), where in each round $r'$ each player $i$ observes the actions chosen by all players (including himself) in the previous rounds and chooses an action for this round from a set $\mathcal X_i$. Let $\mathcal H_{r'}$ denote the set of all possible actions played by all players in rounds $1,\ldots, r'$, i.e., the history of the game in rounds $1,\ldots,r'$. 

The output of the game is determined by $A:\mathcal H_r\rightarrow \mathcal A$. Each player $i$ has a valuation function $v_i:\mathcal T_i\times \mathcal A\rightarrow \mathbb R$. There are also payment functions $p_i:\mathcal H_r\rightarrow \mathbb R$. In this paper we assume that the utility function $u_i$ of player $i$ is quasi linear: $v_i(t_i,  A(h_r))-p_i(h_r)$, where $t_i$ is the type of player $i$, and $h_r$ is the history of the game. Thus, a strategy $s_i$ is simply a set of functions $s^{1}_i,\ldots ,s^{r}_i$ where each $s^{r'}_i:T_i\times \mathcal H_{r'-1}\rightarrow X_i$ determines the action of player $i$ in round $r'$ given his type and the history. We will sometime use $h(s_1,\ldots, s_n)$ to denote the history of the game where each player $i$ plays according to the strategy $s_i$. We define two notions of equilibria:

\begin{definition}[dominant strategy equilibrium]
A strategy $s_i$ is a \emph{dominant strategy} for player $i$ if for every $t_i$ and $s_{-i}$, $u_i(t_i,h(s_i, s_{-i}))\geq u_i(t_i,h(s'_i, s_{-i}))$ for all $s'_i$. Strategies $(s_1,\ldots, s_n)$ constitute a \emph{dominant strategy equilibrium} if each $s_i$ is dominant.
\end{definition}

\begin{definition}[ex-post Nash equilibrium]
For each player $i$, let $s_i$ be a function that takes player $i$'s type and outputs a strategy. $(s_1,\ldots, s_n)$ constitute an ex-post Nash equilibrium if for every player $i$ with type $t_i$ and every type profile $t_{-i}$ of the other players it holds that $u_i(t_i,h(s_i(t_i), s_{-i}(t_{-i})))\geq u_i(t_i,h(s'_i(t'_i), s_{-i}(t_{-i})))$ for all $t'_i\in \mathcal T_i$.
\end{definition}

In this paper we use the term \emph{truthful} to denote mechanisms that reach an ex-post Nash equilibrium. Observe that every dominant strategy equilibrium is also an ex-post Nash equilibrium, but the other direction is not true: consider a second price auction with two players where the value of each player $i$ for the item is $v_i$. As usual, each player $i$ submits a bid $b_i$, the item goes to the player with the highest bid who pays the bid of the other player. If the players submit their bids simultaneously, then of course setting $b_i=v_i$ is a dominant strategy for each of the players. However, consider an iterative game where player $1$ bids first and player $2$ bids after he sees $b_1$. In this game, setting $b_1=v_1$ is no longer a dominant strategy for player $1$. To see that, consider the following strategy of player $2$: if $b_1=1$ player $2$ sets $b_2=0.99$ and otherwise $b_2=1$. Notice that given this strategy if $v_1>1$ player $1$ is better off bidding $b_1=1$ rather than $b_1=v_1$. However, the set of strategies where each player $i$ bids $b_i=v_i$ does constitute an ex-post Nash equilibrium.

This paper follows the usual formulation of combinatorial auctions as a game. We identify between the type space and the valuation function, so $v_i$ is the private information of each player $i$. The set of alternatives $\mathcal A$ is the set of all possible allocations.

\subsection{Menus, the Taxation Principle, and Taxation Complexity}

A key component of this paper is the taxation principle. The taxation principle holds for every domain, but for convenience we specialize it here for combinatorial auctions:
\begin{proposition}[taxation principle]
Consider some mechanism $A$ for combinatorial auctions and let $(s_1,\ldots,s_n)$ be an ex-post Nash equilibrium in this mechanism. Fix some player $i$ and $v_{-i}$. Then, for every bundle $S$ there is a price $p_S$ such that for every $v_i$, when the players play according to $(s_i(v_i),s_{-i}(v_{-i}))$ and player $i$ wins $S$, the payment of player $i$ is $p_S$.
\end{proposition}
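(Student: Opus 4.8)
The plan is to show that the payment of player $i$ is a function of the bundle $S$ that he wins and nothing else (in particular, not of $v_i$), once $v_{-i}$ and the equilibrium strategies are fixed. Suppose, towards contradiction, that there are two valuations $v_i$ and $v_i'$ such that when player $i$ plays $s_i(v_i)$ against $s_{-i}(v_{-i})$ he wins bundle $S$ and pays $p$, while when he plays $s_i(v_i')$ against $s_{-i}(v_{-i})$ he also wins the \emph{same} bundle $S$ but pays $p' \neq p$; say $p > p'$. I would then exploit the ex-post Nash condition for the type $v_i$: since $s_{-i}(v_{-i})$ is being played by the others, player $i$ of type $v_i$ cannot strictly benefit by deviating to the strategy $s_i(v_i')$. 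His utility from the equilibrium play is $v_i(S) - p$, and his utility from the deviation is $v_i(S) - p'$ (he wins the same $S$ under the deviation, by assumption). Since $p > p'$, the deviation gives strictly higher utility, contradicting the ex-post Nash property.

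The one subtlety to handle carefully is what ``winning $S$ when deviating'' means: the history $h(s_i(v_i'), s_{-i}(v_{-i}))$ is exactly the history that arises when the true type is $v_i'$, so $A$ of that history allocates $S$ to player $i$ and $p_i$ of that history equals $p'$ — these are properties of the history alone, independent of which type ``generated'' the strategy, because $A$ and $p_i$ are functions of the history only. Hence the deviation by type $v_i$ to the strategy $s_i(v_i')$ indeed yields allocation $S$ and payment $p'$ for player $i$. This is the step where the iterative-mechanism formalism from Section~\ref{sec-preliminaries} is used, and it is essentially bookkeeping rather than a genuine obstacle. Given this, the contradiction above goes through, so any two equilibrium plays (over varying $v_i$) that result in player $i$ winning the same bundle $S$ must charge him the same price; define $p_S$ to be that common value.

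The main thing to get right — and the closest thing to an obstacle — is the direction of the inequality and the quantifier structure: the statement fixes $i$ and $v_{-i}$ and ranges over all $v_i$, so I must make sure the price $p_S$ is well-defined simultaneously for every $v_i$ that wins $S$, which is precisely what the pairwise argument above establishes (it shows any two such $v_i$'s agree, hence all of them do). There is no issue with bundles $S$ that are never won by any type: for those we may set $p_S$ arbitrarily (or $\infty$), matching the convention used later for menu prices. Finally, I would note that this proposition is exactly the content of the ``menu'' $\mathcal M_{v_{-i}}$: define $\mathcal M_{v_{-i}}(S) = p_S$, and the argument moreover shows (again by ex-post Nash, comparing the equilibrium bundle against any other bundle the player could force by mimicking a different type) that the bundle player $i$ wins maximizes $v_i(S) - \mathcal M_{v_{-i}}(S)$ — though the proposition as stated only asks for the well-definedness of $p_S$, so I would keep the proof to just that claim.
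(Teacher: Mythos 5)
Your proof is correct and is essentially the same argument as the paper's: take two types $v_i, v_i'$ both leading to bundle $S$ at prices $p>p'$, and observe that type $v_i$ deviating to the strategy $s_i(v_i')$ still wins $S$ but pays only $p'$, contradicting ex-post Nash. The extra paragraph you add — that $A$ and $p_i$ depend only on the realized history, so the deviator truly gets allocation $S$ and pays $p'$ — is a useful bookkeeping remark that the paper leaves implicit, but it does not change the argument.
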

\begin{proof}
Consider $v_i$ and $v'_i$ such that player $i$ is allocated $S$ when the players play according to both  $(s_i(v_i),s_{-i}(v_{-i}))$ and $(s_i(v'_i),s_{-i}(v_{-i}))$, and charged $p$ and $p'$, respectively. If $p\neq p'$, suppose without loss of generality that $p>p'$. Notice that the profit of player $i$ with valuation $v_i$ is $v_i(S)-p$ when playing according to $s_i(v_i)$. However, if player $i$ plays according to $s_i(v'_i)$ he still wins the bundle $S$ but pays only $p'$, so his profit is $v_i(S)-p'>v_i(S)-p$. A contradiction to the assumption that the profile $(s_1,\ldots, s_n)$ is an ex-post Nash equilibrium.
\end{proof}

We set $p_S=\infty$ if for some $S$ there is no $v_i$ such that when the players play according to $(s_i(v_i),s_{-i}(v_{-i}))$ and player $i$ is allocated $S$. Note that the taxation principle gives a natural interpretation to any ex-post Nash equilibrium: each player $i$ is presented with a menu $\mathcal M:2^M\rightarrow \mathbb R\cup \{\infty\}$ that depends only on $v_{-i}$. In equilibrium, player $i$ is assigned a bundle that maximizes his profit $\arg\max_S v_i(S)-\mathcal M(S)$. We will say that $\mathcal M$ is \emph{presented} to player $i$ by $v_{-i}$. 

\begin{proposition}[menu monotonicity]\label{proposition-menu-monotonicity}
Consider some truthful mechanism $A$ for combinatorial auctions and let $(s_1,\ldots,s_n)$ be an ex-post Nash equilibrium in this mechanism. Fix some player $i$ and $v_{-i}$. Let $\mathcal M$ be the menu presented to player $i$ by $v_{-i}$. Then, without loss of generality we can assume that $\mathcal M$ is monotone: for every $S\subseteq T$, $\mathcal M(T)\geq \mathcal M(S)$. Furthermore, we can assume without loss of generality that $A$ is normalized: $\mathcal M(\emptyset)=0$.
\end{proposition}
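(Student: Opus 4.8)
The plan is to argue that both properties can be imposed "for free" by modifying the menu on bundles that the player would never want to choose anyway, without changing the induced allocation rule. First I would handle normalization. Recall the valuations are normalized ($v_i(\emptyset)=0$) and monotone. Suppose $\mathcal M(\emptyset)=c$. If $c=\infty$, then in particular the empty bundle is never allocated to $i$; but since every $v_i$ can in principle report ``zero everywhere'' — or more carefully, since the profit of $i$ for $\emptyset$ is $v_i(\emptyset)-\mathcal M(\emptyset)=-c$ — we want this to be at most the profit of the bundle actually allocated. The clean move is: the profit $i$ obtains in equilibrium is always nonnegative (individual rationality of an ex-post Nash equilibrium: a player can always, in the worst case, guarantee himself at least the profit of whatever bundle he is allocated, and the taxation-principle argument shows $v_i(S)-\mathcal M(S)\ge v_i(T)-\mathcal M(T)$ for the allocated $S$ and all $T$; taking $T=\emptyset$ gives $v_i(S)-\mathcal M(S)\ge -\mathcal M(\emptyset)$). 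I would instead subtract the constant $c$ from the price of every bundle, i.e.\ pass to $\mathcal M'(S)=\mathcal M(S)-c$. This shifts every player's profit by the same additive constant $c$, hence leaves the $\arg\max$ — and therefore the allocation — unchanged, and it makes $\mathcal M'(\emptyset)=0$. (One must check $c<\infty$: if $\mathcal M(\emptyset)=\infty$ we can simply redefine $\mathcal M(\emptyset):=0$, since lowering the price of $\emptyset$ to $0$ only ever helps $i$ choose $\emptyset$, which is always a legitimate outcome — the player can decline all items — and contradicts nothing, after which we are in the finite case.) Payments are adjusted by the same constant, which changes nothing about incentives.

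Next I would handle monotonicity. Fix $S\subseteq T$ with $\mathcal M(T)<\mathcal M(S)$. The key observation is that in this case bundle $T$ is never strictly worse than $S$ for any player, because by monotonicity of valuations $v_i(T)\ge v_i(S)$, hence $v_i(T)-\mathcal M(T) > v_i(S)-\mathcal M(S)$: the bundle $S$ is never uniquely profit-maximizing, and whenever $S$ is profit-maximizing so is $T$ (strictly better, in fact, so $S$ is in fact never profit-maximizing at all once the inequality is strict). Therefore $S$ is never allocated to $i$, so we may set $\mathcal M(S):=\infty$ — or, to get a genuinely monotone function, set $\mathcal M(S):=\mathcal M(T)$ for the smallest such price over supersets. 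More carefully, I would define the monotone closure $\widehat{\mathcal M}(S)=\min_{T\supseteq S}\mathcal M(T)$. This is monotone by construction and satisfies $\widehat{\mathcal M}\le \mathcal M$. I then claim the allocation rule induced by $\widehat{\mathcal M}$ agrees with that of $\mathcal M$: if $S$ is profit-maximizing under $\widehat{\mathcal M}$ for some $v_i$, pick $T\supseteq S$ achieving the min; then $v_i(T)-\mathcal M(T)\ge v_i(S)-\mathcal M(T)=v_i(S)-\widehat{\mathcal M}(S)$, so $T$ is at least as good under $\widehat{\mathcal M}$ and $\mathcal M(T)=\widehat{\mathcal M}(T)$, hence we may take the allocated bundle to be some $T$ on which $\widehat{\mathcal M}$ and $\mathcal M$ coincide, and the profit value is the same; consequently the set of attainable (allocation, payment) pairs is unchanged. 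Finally I would remark that the monotone-closure operation preserves $\mathcal M(\emptyset)=0$ provided $\mathcal M$ was already monotone-normalized in the weak sense that $\mathcal M(\emptyset)\le \mathcal M(S)$ for all $S$ — and this last fact itself follows from the same never-allocated argument applied to $S=\emptyset$ — so the two normalizations are compatible and can be applied in either order.

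The main obstacle, and the point that needs the most care, is making precise the phrase ``without loss of generality'': one must verify that replacing $\mathcal M$ by $\widehat{\mathcal M}$ (and by the shifted version) genuinely yields the \emph{same} social choice function and the \emph{same} payment function — not merely a behaviorally equivalent one — so that the taxation complexity, tie-breaking complexity, and all downstream quantities are literally unaffected. Concretely, the subtlety is that $\arg\max_S v_i(S)-\mathcal M(S)$ can be a set, and the mechanism's tie-breaking picks a specific element; I need that for every $v_i$, the bundle the original mechanism allocates is still an element of $\arg\max_S v_i(S)-\widehat{\mathcal M}(S)$ and carries the original payment. This holds because the original allocated bundle $S^\ast$ satisfies $\mathcal M(S^\ast)<\infty$, and for a bundle actually allocated to $i$ we must have $\widehat{\mathcal M}(S^\ast)=\mathcal M(S^\ast)$ — otherwise there is a strict superset $T$ with $\mathcal M(T)<\mathcal M(S^\ast)$, contradicting that $S^\ast$ was profit-maximizing under the original menu. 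So $S^\ast$ is untouched, and every bundle that \emph{was} ever allocated is a fixed point of the closure operation; only never-allocated bundles get their prices lowered, which by the monotonicity-of-valuations argument cannot create a new profit-maximizer outside the old allocated set. Writing this fixed-point observation down cleanly is the crux; the rest is bookkeeping.
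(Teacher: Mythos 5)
Your proof is correct and follows essentially the same approach as the paper: for monotonicity, lower the price of a bundle $S$ to match a cheaper superset, observing that such an $S$ is never allocated (since the superset always yields strictly higher profit under monotone valuations); for normalization, shift all prices by the constant $\mathcal M(\emptyset)$, which leaves the $\arg\max$ unchanged. Your global monotone-closure formulation $\widehat{\mathcal M}(S)=\min_{T\supseteq S}\mathcal M(T)$ and the explicit fixed-point observation that allocated bundles are unchanged are a cleaner, more careful packaging of the paper's pairwise argument, and your handling of the $\mathcal M(\emptyset)=\infty$ edge case fills a small gap the paper glosses over.
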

\begin{proof}
We first prove that $\mathcal M$ is monotone. Suppose that $\mathcal M(T)<\mathcal M(S)$. Then, for every $v_i$, $v_i(S)-\mathcal M(S)<v_i(T)-\mathcal M(T)$ since by the monotonicity of the valuations $v(T)\geq v(S)$. In other words, player $i$ never wins the bundle $T$. Therefore, in this case setting $\mathcal M(S)=\mathcal M(T)$ is consistent with the social choice function: the profit from $S$ is at least the profit from $T$, so we may have only increased the set of most profitable bundles, and we can still assume by tie-breaking that $T$ is never chosen.

As for normalization, if $\mathcal M(\emptyset)\neq 0$, define a new menu $\mathcal M'$ with $\mathcal M'(S)=\mathcal M(S)-\mathcal M'(\emptyset)$. Notice that we may assume that player $i$ is presented with $\mathcal M'$ and not with $\mathcal M$ since shifting all prices by a constant does not change the set of profit maximizing bundles. 
\end{proof}


Denote by $M^i=\{\mathcal M_{v_{-i}}\}_{v_{-i}}$ the set of menus that might be presented to $i$. Denote by $tax(A)$ the \emph{taxation complexity} of a truthful mechanism $A$ -- the number of bits needed to represent an index of a specific menu among the set of menus that may be presented to a player. That is, $tax(A)=\max_i\log |M^i|$. We sometimes also refer to $\log |M^i|$ as the \emph{taxation complexity of player $i$}.




\subsection{Computational Models}

This paper considers three ways in which the players communicate, which correspond to three ways of accessing the valuation functions:
\begin{itemize}
\item {\emph{Value queries:}} Each valuation $v$ is represented by a black box that can answer only the following question: given $S$, what is $v(S)$?

\item {\emph{Demand queries:}} Each valuation $v$ is represented by a black box that can answer only the following question: given prices per item $p_1,\ldots, p_m$, what is a profit maximizing bundle $S\in\arg \max v(T)-\Sigma_{j\in T}p_j$? If there are several bundles that maximize the profit we use a fixed tie breaking rule to determine which bundle will be returned (say, the lexicographically first one). For simplicity we assume that the value $v(T)$ is also returned.

\item {\emph{General communication:}} This is the usual number-in-hand communication model (see \cite{KN97}) where we assume that the input of player $i$ is his valuation $v_i$. At each round, each player $i$ decides which bits he sends based on $v_i$ and the bits sent by all players in the previous rounds.
\end{itemize}

If the players communicate only by answering demand or value queries, then the complexity of the mechanism is the largest number of queries that the mechanism makes over all inputs. In the general communication model the complexity of the mechanism is the largest number of bits that the players send. In all models, the maximum is taken over all possible inputs.

Notice that each way corresponds to a different restriction on the action space in the game theoretic formulation. I.e., if the valuations can only be accessed by value queries, then the action space consists only of answering a value query.

\subsubsection{Representation of Numbers}\label{subsec-representation}

We would like to explicitly discuss the delicate issue of representing numbers, which is step-sided in many of the previous works on algorithmic mechanism design and communication complexity. In general, we follow the standard formulation (see, e.g., \cite{BN07,N14,FS09,BBNS08}) and assume that all numbers (in particular the values and prices) are limited to a certain precision, i.e., are represented by some number of bits $k$. We limit our attention to protocols that take a precision parameter $k$ with the following property: let $P_k$ and $P_{k'}$ be the same protocol except that the precision parameter is either $k$ or $k'$, where $k'<k$. We require that when the input can be represented by at most $k'$ bits of precision, the output (allocation and prices) is identical in $P_k$ and $P_{k'}$. Notice that the complexity of the mechanism should also take $k$ into account, although we will mostly think about the precision $k$ as fixed. To the best of our knowledge, all protocols that were considered in the Algorithmic Mechanism Design literature have this property.

We sometimes let $B$ be the maximum price that may appear in a menu. Notice that since all prices are represented by a finite number of bits, $B$ is well defined. In some proofs (e.g., proof of Theorem \ref{thm-indexing-complexity}) we use valuations in which some of the values of the bundles are a function of $B$ (e.g., $v(S)=2\cdot B$). One issue is that the number of bits needed to represent these values is bigger than $k$. However, since all numbers we use are not larger than $2^m\cdot B$ (and usually much smaller), whenever we analyze valuations that use $k$ bits of precision, we use protocols that allow the representation of $m\cdot k$ bits of precision, which allow representation of values such that $2^m\cdot B$. Notice that if the communication complexity depends polynomially on the precision parameter $k$, the overall communication blow-up due to the use of increased precision is only $poly(m)$.

\subsection{Chernoff Bounds}

We will need the following version of the Chernoff bounds:

\begin{proposition}[Chernoff bounds]\label{claim-chernoff}
Let $X_1,...X_n$ be independent random variables that take values in $\{0,1\}$, such that for all $i$, $\Pr[X_i=1]=p$ for some $p$. Then, the following holds, for $0\leq \epsilon\leq 1$:
\begin{enumerate}
\item $\Pr[\Sigma_iX_i>(1+\epsilon)pn]\leq e^{-pn\epsilon^2/3}$
\item $\Pr[\Sigma_iX_i<(1-\epsilon)pn]\leq e^{-pn\epsilon^2/2}$
\end{enumerate}
\end{proposition}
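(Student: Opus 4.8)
The plan is to use the standard exponential-moment (Chernoff) method, deriving both inequalities from a single moment-generating-function estimate. Write $S = \sum_i X_i$. For any $t > 0$ the variable $e^{tS}$ is nonnegative, so Markov's inequality gives $\Pr[S > a] = \Pr[e^{tS} > e^{ta}] \le e^{-ta}\,\mathbb{E}[e^{tS}]$, and by independence $\mathbb{E}[e^{tS}] = \prod_i \mathbb{E}[e^{tX_i}]$. Each factor equals $1 - p + pe^{t} = 1 + p(e^{t}-1) \le e^{p(e^{t}-1)}$ using $1 + x \le e^{x}$, hence $\Pr[S > a] \le \exp\!\bigl(pn(e^{t}-1) - ta\bigr)$ for every $t > 0$; taking $t < 0$ instead (the variable $e^{tS}$ is still nonnegative) gives the analogous bound $\Pr[S < a] \le \exp\!\bigl(pn(e^{t}-1) - ta\bigr)$.

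For part (1) I would put $a = (1+\epsilon)pn$ and minimize the exponent $pn(e^{t}-1) - t(1+\epsilon)pn$ over $t > 0$; the minimizer is $t = \ln(1+\epsilon)$, which yields $\Pr[S > (1+\epsilon)pn] \le \bigl(e^{\epsilon}/(1+\epsilon)^{1+\epsilon}\bigr)^{pn}$. For part (2), taking $a = (1-\epsilon)pn$ and $t = \ln(1-\epsilon) < 0$ gives $\Pr[S < (1-\epsilon)pn] \le \bigl(e^{-\epsilon}/(1-\epsilon)^{1-\epsilon}\bigr)^{pn}$. It then remains to establish the two scalar inequalities $e^{\epsilon}/(1+\epsilon)^{1+\epsilon} \le e^{-\epsilon^{2}/3}$ and $e^{-\epsilon}/(1-\epsilon)^{1-\epsilon} \le e^{-\epsilon^{2}/2}$ for $0 \le \epsilon \le 1$, after which raising to the power $pn$ finishes the proof.

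These elementary scalar inequalities are the only non-automatic part, so I expect them to be the (mild) main obstacle. I would dispatch them by taking logarithms and comparing Taylor series: from $(1+\epsilon)\ln(1+\epsilon) = \epsilon + \sum_{k \ge 2} \frac{(-1)^{k}}{k(k-1)}\epsilon^{k}$ the first claim reduces to $\sum_{k \ge 3} \frac{(-1)^{k-1}}{k(k-1)}\epsilon^{k} \le \frac{\epsilon^{2}}{6}$, and the left side is an alternating sum whose terms decrease in absolute value when $\epsilon \le 1$, hence is nonnegative and at most its first term $\epsilon^{3}/6 \le \epsilon^{2}/6$. Similarly, from $(1-\epsilon)\ln(1-\epsilon) = -\epsilon + \sum_{k \ge 2} \frac{1}{k(k-1)}\epsilon^{k}$ the second claim reduces to $\sum_{k \ge 2} \frac{1}{k(k-1)}\epsilon^{k} \ge \frac{\epsilon^{2}}{2}$, which holds because the $k=2$ term alone already equals $\epsilon^{2}/2$ and the remaining terms are nonnegative. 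Alternatively, both inequalities follow from a one-variable monotonicity argument: the function $\epsilon \mapsto \epsilon - (1+\epsilon)\ln(1+\epsilon) + \epsilon^{2}/3$ vanishes at $0$ and has derivative $\tfrac{2\epsilon}{3} - \ln(1+\epsilon) \le 0$ on $[0,1]$, and the function $\epsilon \mapsto -\epsilon - (1-\epsilon)\ln(1-\epsilon) + \epsilon^{2}/2$ vanishes at $0$ and has derivative $\epsilon + \ln(1-\epsilon) \le 0$ on $[0,1)$.
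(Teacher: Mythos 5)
The paper states this proposition as a standard fact and provides no proof of its own, so there is no ``paper's proof'' to compare against. Your argument is the standard exponential-moment (Chernoff) derivation and it is correct: the Markov step, the bound $\mathbb{E}[e^{tX_i}]=1+p(e^t-1)\le e^{p(e^t-1)}$, the optimizing choices $t=\ln(1+\epsilon)$ and $t=\ln(1-\epsilon)$, and the reduction to the scalar inequalities $e^{\epsilon}(1+\epsilon)^{-(1+\epsilon)}\le e^{-\epsilon^2/3}$ and $e^{-\epsilon}(1-\epsilon)^{-(1-\epsilon)}\le e^{-\epsilon^2/2}$ on $[0,1]$ are all as in the textbook treatment, and your Taylor-series verification of both scalar inequalities is complete (the alternating-series tail for part (1), and the fact that the $k=2$ term alone already gives $\epsilon^2/2$ for part (2)). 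One small remark on the ``alternatively'' paragraph: the assertion that $\tfrac{2\epsilon}{3}-\ln(1+\epsilon)\le 0$ on $[0,1]$ is not immediate from a single line (the function is not monotone; it decreases to a minimum near $\epsilon=\tfrac12$ and then rises to $\tfrac23-\ln 2<0$), so if you keep that route you should add a sentence noting that the derivative $\tfrac23-\tfrac{1}{1+\epsilon}$ changes sign once and the endpoint values are nonpositive. Since your primary Taylor argument is already self-contained, this is only a presentational point.
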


\section{Characterizing the Query Complexity of Truthful Mechanisms}\label{sec-query-complexity}

In this section we handle mechanisms that can access the valuations only with restricted type of queries. For value queries, we show that the menu complexity characterizes the query complexity in exactly the same way that the taxation complexity characterizes the communication complexity. For demand queries, we show an impossibility result: a menu reconstruction theorem for demand queries does not exist. Nevertheless, we are able to characterize the \emph{structure} of the menu in that case, in a way that sheds lights on the interplay between the number of queries and the semantics of the mechanism. We will also see that this characterization is useful for proving impossibility results on the power of truthful computationally efficient mechanisms.

In both settings we develop our results for mechanisms that are truthful for general valuations. However, extending the results also for other valuations classes should be possible by applying the ideas similar to the proof of Theorem \ref{thm-indexing-complexity}.



\subsection{Value Queries}\label{sec-value}

We now consider mechanisms in which the players' valuations are represented by black boxes that can only answer value queries. In this setting, the analogue of taxation complexity will be the \emph{menu complexity}. That is, let $\mathcal M^i$ be the set of menus that might be presented to player $i$ in some truthful mechanism $A$. Roughly speaking, the menu complexity of a menu $\mathcal M$ that is presented to player $i$ measures the number of bundles $S$ that $i$ might win. The exact definition is a bit more delicate: suppose that the menu $\mathcal M$ presented to player $i$ is identically zero. If player $i$'s valuation is identically zero as well he might win any bundle in case we have some strange tie-breaking rule that depends on the valuations of the other players. To overcome this tie-breaking issue we essentially need to consider only bundles $S$ that player $i$ might win when his valuation is \emph{strictly} monotone. We note that in this section we assume without loss of generality that the menu is monotone in the sense of Proposition \ref{proposition-menu-monotonicity}.

\begin{definition}[menu complexity]
Let $A$ be some truthful mechanism. Consider some menu $\mathcal M$ that is presented to player $i$ when the valuations of the other players are $v_{-i}=(v_1,\ldots, v_{i-1},v_{i+1},\ldots, v_n)$. We say that bundle $S\neq M$ is \emph{in $\mathcal M$} if for every $T\supset S$ we have that $\mathcal M(S)<\mathcal M(T)$. The grand bundle $M$ is in the menu if $\mathcal M(M)<\infty$.

The \emph{menu complexity} of $\mathcal M$, denoted $mc(\mathcal M)$ is the number of bundles that are in a menu $\mathcal M$. The \emph{menu complexity of $A$} is $mc(A)=\max_i\max_{\mathcal M\in M^i}mc(\mathcal M)$.
\end{definition}

It is not hard to see that equivalently we could have defined the menu complexity of a menu $\mathcal M$ to be:
$$
mc(\mathcal M)=\{S|\exists v\in \mathcal V' \text{ such that $A$ awards $i$ the bundle $S$ in the instance } (v,v_{-i})\}
$$
where $\mathcal V'$ is the set of strictly monotone valuations.

Denote by $val(A)$ the maximum number of value queries that the most efficient implementation of $A$ makes on any input. 

\subsubsection{Bounding the Query Complexity: $mc(A)\leq val(A)$}

Our first theorem shows that the menu complexity is at most the query complexity:
\begin{theorem}\label{thm-value}
Let $A$ be a mechanism that is truthful for general valuations and uses only value queries. Then, $mc(A)\leq val(A)+2$.
\end{theorem}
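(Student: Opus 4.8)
The plan is to fix a player $i$ and a menu $\mathcal M$ presented to $i$ by some profile $v_{-i}$, and to show that the number of bundles \emph{in} $\mathcal M$ is at most $val(A)+2$. The key idea mirrors the warm-up argument for Theorem \ref{thm-warm-up}, but transported to the value-query model: distinct bundles that player $i$ can win (for strictly monotone valuations) must be distinguishable by the query transcript, and the number of value queries bounds the number of transcripts in a useful way. So first I would, for each bundle $S$ that is \emph{in} $\mathcal M$, pick a strictly monotone valuation $v^S_i$ for which $A$ awards $i$ the bundle $S$ on input $(v^S_i, v_{-i})$ --- such a valuation exists by the alternative characterization of $mc(\mathcal M)$ stated just before the theorem. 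Then I would run the efficient implementation of $A$ on each instance $(v^S_i, v_{-i})$ and record the sequence of value queries made to $i$'s valuation together with their answers (the queries to the other players' valuations and the resulting transcript segments are common, since $v_{-i}$ is held fixed; I should be slightly careful here and argue that what matters is the behaviour as a function of $i$'s answers).

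The heart of the argument is a ``fooling''-type claim: if $S\neq S'$ are both in $\mathcal M$, then the query-and-answer transcripts on $(v^S_i,v_{-i})$ and $(v^{S'}_i,v_{-i})$ must differ. Suppose not. Then the protocol queries exactly the same bundles of $i$'s valuation and gets the same answers in both runs, so it produces the same allocation --- but one run allocates $S$ to $i$ and the other allocates $S'\neq S$, a contradiction. Moreover, if the two transcripts agree, I can mix: build a valuation $v$ that agrees with $v^S_i$ on the queried bundles; $A$ behaves identically and allocates, say, $S$; but by the taxation principle $i$ must get a profit-maximizing bundle in $\mathcal M$ according to $v$, and I want to choose $v$ (using strict monotonicity and the fact that $S,S'$ are both ``in'' $\mathcal M$, i.e. locally minimal-price) so that $S'$ is strictly more profitable than $S$ --- contradiction. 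This is the step I expect to be the main obstacle: I have to make the mixed valuation both (a) consistent with the recorded answers, (b) strictly monotone, and (c) strictly preferring $S'$ over $S$, and handle the grand bundle $M$ separately since ``in the menu'' is defined differently for it. The reason the bound is $val(A)+2$ rather than $val(A)+1$ as in the communication case is presumably that one has to account for $M$ and for bundles with unqueried values, costing two extra transcripts; I would track this bookkeeping carefully.

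Finally, I would count: a transcript of value queries to $i$ of length at most $val(A)$ is determined by the sequence of answers, and a crude but sufficient bound is that the number of distinct such transcripts arising from the instances above is at most (something like) the number of leaves reachable, which I can bound by noting that each query has many possible answers but only the \emph{branching structure} matters --- more precisely I would argue directly that at most $val(A)+2$ of the bundles in $\mathcal M$ can be pairwise transcript-distinct by an inductive/charging argument on the query tree restricted to $i$ (each new distinguishable bundle ``uses up'' at least one query position). Taking the maximum over menus and players gives $mc(A)=\max_i\max_{\mathcal M\in M^i}mc(\mathcal M)\leq val(A)+2$, completing the proof.
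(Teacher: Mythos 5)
Your plan transports the fooling-set idea from the warm-up to the value-query model: pick one valuation $v^S_i$ per bundle $S$ in the menu, argue the transcripts must differ, then count. The fooling step is fine, but the counting step at the end is where the proposal breaks, and you correctly flagged it as the weak point. Distinct transcripts of a protocol that makes $val(A)$ value queries do not number at most $val(A)+2$; a value query has an unbounded (or at least exponentially large) answer space, so a depth-$val(A)$ query tree can have exponentially many leaves. Saying ``each new distinguishable bundle uses up a query position'' is precisely the claim that needs proof, and it does not follow from the fooling argument alone, because two of your instances can branch at the \emph{same} query position (same bundle queried, different numeric answer). So pairwise-distinct transcripts across $k$ instances gives $k\le$ (number of leaves), not $k\le val(A)+2$.

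The paper avoids this by running $A$ on a \emph{single} carefully chosen instance rather than one per bundle. Set $v_i(S)=\mathcal M(S)$ for all $S$ with finite price (and something large otherwise), and run $A$ on $(v_i,v_{-i})$. Let $S'$ be the awarded bundle. For any other nonempty $S$ in the menu, if $A$ never queries $v_i(S)$ during this run, perturb $v_i$ upward only at $S$ to get $v'_i$; then $A$'s behaviour on $(v'_i,v_{-i})$ is literally identical (same queries, same answers), so it still awards $S'$, but now $S$ has strictly positive profit while $S'$ has profit $0$ -- contradicting the taxation principle. Hence $A$ queries every bundle in the menu except possibly $\emptyset$ and $S'$, giving $val(A)\ge mc(\mathcal M)-2$ directly from a single run. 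The linear bound comes from counting queries within one execution, not from counting transcripts across many executions. Your perturbation intuition is the right ingredient; the fix is to apply it around a single base valuation equal to the menu itself, and to count the queries $A$ makes on that one instance, rather than to set up a fooling family and count transcripts.
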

A weaker variant of this theorem can be obtained as a special case of Theorem \ref{thm-demand} that we later prove by setting $\alpha=0$ and $\beta=val(A)$. In fact, Theorem \ref{thm-value} is also implicit in \cite{D11} and \cite[Theorem 11]{N14}. We bring the explicit proof here:

\begin{proof}(of Theorem \ref{thm-value})
Fix a truthful mechanism $A$, some player $i$, and valuations $v_{-i}=(v_1,\ldots, v_{i-1},v_{i+1},\ldots, v_n)$ of all other players. Let $\mathcal M$ be the menu that is presented to player $i$ in $A$ by $v_{-i}$. Let $v_i$ be the valuation that is defined by $v_i(S)=\mathcal M(S)$ for every $S$ with $\mathcal M(S)<\infty$. For every $S$ with $\mathcal M(S)=\infty$ we set $v_i(S)$ to be an arbitrary value (which is strictly bigger than any finite price in $\mathcal M$). We will show that in the instance $(v_i,v_{-i})$ any implementation of the truthful mechanism $A$ makes at least $mc(\mathcal M)-2$ value queries, and the theorem will follow.

Let $S'$ be the bundle that player $i$ is awarded in the instance $(v_i,v_{-i})$. Consider some bundle $S\neq S'$, $S\neq \emptyset$ that is in the menu $\mathcal M$ (there are at most $mc(A)-2$ such bundles). We will show that in the instance $(v_i,v_{-i})$ the mechanism $A$ queries $v_i(S)$ and finish the proof. Suppose not. Let $v'_i$ be the valuation that is identical to $v_i$ except that $v'_i(S)=v_i(S)+\epsilon$, for some $\epsilon>0$ that preserves the monotonicity of $v'_i$. Such $\epsilon>0$ exists since $S$ is in the menu $\mathcal M$. Notice that if $A$ does not query $v_i(S)$, it cannot distinguish between the valuations $v_i$ and $v'_i$. Therefore player $i$ receives the bundle $S'$ also in the instance $(v'_i,v_{-i})$. However, the profit from the bundle $S$ is $v'_i(S)-\mathcal M(S)=v_i(S)+\epsilon-\mathcal M(S)=\epsilon>0$ whereas the profit from $S'$ is $v'_i(S)-\mathcal M(S)=v_i(S)-\mathcal M(S)=0$. I.e., player $i$ is not awarded his most profitable bundle, a contradiction to the taxation principle.
%
\end{proof}

\paragraph{Tightness.} To see that Theorem \ref{thm-value} is essentially tight (i.e., there is a mechanism with menu complexity very close to the number of value queries it makes), consider the following truthful mechanism for combinatorial auctions with two players, Alice and Bob. Fix some set $T=\{T_1,\ldots, T_c\}$ of bundles, none of them is the empty bundle. Let $t$ be Alice's value for item $a$ rounded to the nearest integer in $1,2,\ldots, c$ (determining $t$ can be done by making one value query). Set the price of each bundle $S\in T, S\neq T_t$ to $|S|$. Set the price of $T_t$ to $|T_t|+\frac 1 2$. Bob receives the bundle from the set $T$ that maximizes his profit according to these prices (breaking ties in some consistent way) and pays the appropriate price, unless it has a negative profit, in which case he is allocated no items at all. Alice never receives any items.

The mechanism is clearly truthful, uses $c+1$ value queries (determining the profit-maximizing bundle requires $c$ value queries, one for each bundle in $T$). Its menu complexity is $c$.

\subsubsection{Menu Reconstruction and Characterization}

We now provide a menu reconstruction theorem that uses only value queries. Let $price^{val}(A)$ be defined similarly to the definition of $price(A)$ in Section \ref{sec-menu-construction} but with respect to value queries.
\begin{theorem}[a menu reconstruction theorem for value queries]\label{thm-value-reconstruct}
Fix a truthful mechanism $A$ that makes only value queries. Let $v_{-i}$ be the valuations of all players except $i$. The index of the menu presented to $i$ by $v_{-i}$ can be found by making $poly(mc(\mathcal A),price^{val}(A),m)$ value queries. 
\end{theorem}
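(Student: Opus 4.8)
The plan is to reconstruct the menu presented to player $i$ by $v_{-i}$ one bundle at a time, using a probing valuation derived from hypotheses about the menu. The key observation, already exploited in the proof of Theorem~\ref{thm-value}, is that if I feed player $i$ a valuation $v_i$ that is \emph{exactly} a candidate menu on a set of bundles and strictly dominates all finite prices elsewhere, then running $A$ on $(v_i,v_{-i})$ forces the mechanism to query every bundle that is ``in'' the true menu $\mathcal M$; and the bundle $S'$ that $i$ wins has profit $0$ under $v_i$, so $\mathcal M(S')=v_i(S')$ is revealed. More generally, if I set $v_i(S)=\mathcal M(S)$ on some bundles $S$ whose price I already know and set $v_i$ very large on all other bundles, then player $i$ wins one of the ``large'' bundles $T$, and the price $\mathcal M(T)$ can be extracted using a $price^{val}(A)$-cost subprotocol; moreover $T$ must be one of the $mc(\mathcal M)$ bundles that are in the menu. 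So I would maintain a set $K$ of bundles whose $\mathcal M$-value is known, initialized with $\emptyset$ (price $0$ by normalization); at each step, build $v_i$ equal to $\mathcal M$ on $K$ and equal to a huge value $B+1$ (larger than any possible finite price) on bundles not in $K$, run the implementation of $A$ on $(v_i,v_{-i})$ — which costs $val(A)\le mc(A)+2$ queries — learn the winning bundle $T$, invoke the $price^{val}$ subprotocol to get $\mathcal M(T)$, and add $T$ to $K$. Since each new $T$ is a fresh bundle in the menu, after at most $mc(\mathcal M)+1$ iterations $K$ contains all bundles in the menu together with their prices.

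The next step is to argue that knowing the prices of exactly the bundles ``in'' $\mathcal M$ determines the entire menu (up to the normalization of Proposition~\ref{proposition-menu-monotonicity}), so that the collected data is a complete index. Here I use monotonicity of the menu: for any bundle $U$, by definition $\mathcal M(U)=\min\{\mathcal M(S): S\subseteq U,\ S\text{ is in }\mathcal M\}$ when this set is nonempty, and $\mathcal M(U)=\infty$ otherwise — essentially the ``closure'' operation one gets from the definition of being in the menu. (One must double-check the boundary cases: the grand bundle $M$, and bundles with $\mathcal M(U)=\infty$; the definition of ``in $\mathcal M$'' was tailored precisely to make this reconstruction work, which is why Theorem~\ref{thm-value} counts those bundles.) Once the menu is determined as a function, its index among $M^i$ is determined. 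Total cost: $O(mc(\mathcal M))$ iterations, each costing $val(A)+price^{val}(A)\le mc(A)+2+price^{val}(A)$, plus the bookkeeping to write down $O(mc(\mathcal M))$ bundles of $m$ bits each, giving $poly(mc(\mathcal A),price^{val}(A),m)$ as claimed.

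I expect the main obstacle to be the tie-breaking subtleties rather than the query accounting. First, when $v_i$ equals $\mathcal M$ on $K$ and a huge value off $K$, I need to be sure player $i$ strictly prefers an off-$K$ bundle so that a \emph{new} bundle in the menu is revealed — but if every bundle in the menu is already in $K$, the winner has profit $0$ and could, under an adversarial tie-break depending on $v_{-i}$, be any of several zero-profit bundles; I would handle termination by perturbing $v_i$ on $K$ by tiny distinct amounts (strictly monotone, as in the definition of $mc$) to force the winner onto the one intended bundle, or simply detect ``no new bundle'' as the stopping condition. Second, the $price^{val}(A)$ subprotocol gives the price of a \emph{specific} bundle $S$ in the menu presented by $v_{-i}$; I need to confirm that it can be invoked for the particular winning bundle $T$ that arises, which it can since $T$ is a genuine bundle and $price^{val}$ is defined for arbitrary $S$. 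Third, I must ensure the probing valuations $v_i$ are valid (monotone, normalized, representable within the precision budget) — the representation-of-numbers discussion in Subsection~\ref{subsec-representation} lets me use precision $m\cdot k$ to accommodate the value $B+1$, at a $poly(m)$ blow-up. Modulo these care-points, the construction is the natural iterative ``query the menu through the mechanism'' strategy.
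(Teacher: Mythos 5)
Your iteration ``run $A$ on a probing valuation, extract the winning bundle, price it, repeat'' has two gaps that I don't think are repairable without essentially abandoning the strategy.

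\textbf{The probing valuation is not monotone.} Once $K$ contains a single nonempty bundle $T$ (say the first one discovered), every proper nonempty subset $S\subset T$ gets $v_i(S)=B+1>\mathcal M(T)=v_i(T)$, so $v_i$ violates monotonicity. The mechanism is only truthful for monotone valuations, so the taxation principle gives you nothing on $(v_i,v_{-i})$; the winning bundle might not be a profit maximizer, and its payment might not equal $\mathcal M(\cdot)$. The fix of making $K$ downward-closed (and setting the values of $T$'s subsets consistently) requires knowing, or pricing, all $2^{|T|}$ subsets of each discovered $T$, which blows the query budget. The warm-up in Theorem~\ref{thm-value} sidesteps this by taking $v_i$ equal to $\mathcal M$ on \emph{all} bundles simultaneously, which is monotone; your ``$\mathcal M$ on $K$, huge off $K$'' hybrid is the step where that inheritance of monotonicity breaks.

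\textbf{The winning bundle need not be in $\mathcal M$, so the iteration count is not $O(mc(A))$.} Even if $v_i$ were valid, the profit-maximizing bundle under ``$B+1$ off $K$'' is a bundle $T\notin K$ of minimal $\mathcal M$-price; nothing forces $T$ to be maximal among bundles of that price, i.e.\ ``in $\mathcal M$'' in the sense of Definition of menu complexity. For example, if $\mathcal M\equiv 0$ on $2^M$, only the grand bundle $M$ is in $\mathcal M$ and $mc(\mathcal M)=1$, yet every one of the $2^m$ bundles is a profit maximizer, and an adversarial tie-break (which may depend on $v_{-i}$ and is out of your control) can hand you all of them one at a time. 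So the claim ``each new $T$ is a fresh bundle in the menu'' is false, and with it the polynomial bound on iterations. Your termination-via-perturbation discussion addresses ties once all menu bundles are in $K$, but not this pre-termination explosion.

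The paper's proof is structurally different and avoids both problems: it never feeds a constructed $v_i$ into $A$ at all. Instead it uses $price^{val}$ to simulate value queries to the monotone Boolean threshold function $v^p(S)=\mathbf{1}[\mathcal M(S)>p]$, and runs a tree-search (the $FindUseless$ routine) that provably visits only $poly(mc(A),m)$ bundles because the maximal zero-sets of $v^p$ are exactly the menu bundles of price $\le p$; iterating over the at most $mc(A)$ distinct price levels recovers the whole menu. The crucial move you are missing is to replace ``ask the mechanism which bundle it likes'' (where tie-breaking is the enemy) with ``learn a threshold function of the menu directly,'' which is immune to tie-breaking and whose sample complexity is controlled by $mc(A)$ by construction.
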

\begin{proof}
A valuation $v$ is $k$-useless if there exist (not necessarily unique) sets $K_1,\ldots, K_k$ such that $v$ can be described as follows:
$$
v(S)= \begin{cases} 
 0 &  \exists t: S\subseteq K_t, \\
1 & \text{otherwise.} 
\end{cases}
$$
We call each $K_t$ a \emph{useless} bundle of $v$. 

Suppose that we are given a valuation $v$ on $m$ items that can be accessed via value queries only. We are guaranteed that $v$ is $k$-useless, but we do not know $K_1,\ldots, K_k$. We would like to \emph{learn} the $k$-useless valuation $v$, that is, obtain an algorithm that makes value queries only and finds $K_1,\ldots, K_k$. The next claim shows that an algorithm for learning $k$-useless valuations yields a menu reconstruction theorem:

\begin{claim}
Let $q_k$ be the query complexity of learning $k$-useless valuations. The index of the menu presented to $i$ by $v_{-i}$ can be found by making $poly(mc(\mathcal A),price^{val}(A),q_{mc(A)},m)$ value queries. 
\end{claim}
\begin{proof}
Let $\mathcal M$ be the menu presented by $v_{-i}$ to player $i$. For every price $p$, define a valuation $v^{p}$:
$$
v^p(S)= \begin{cases} 
 0 &  \mathcal M(S)\leq p, \\
1 & \mathcal M(S)> p. 
\end{cases}
$$
Notice that we can easily simulate a value query $v^p(S)$ using $price^{val}(A)$ value queries: compute $\mathcal M(S)$ by making $price^{val}(A)$ value queries, and determine whether $v^p(S)$ is $0$ or $1$ accordingly.

The motivation to the definition of $v^p$ comes from the following observation: let $S$ is a bundle that is in $\mathcal M$. Let $p=\mathcal M(S)$. Then $S$ is one of the useless bundles of $v^p$. This is simply because $v^p(S)=0$ by definition, and since for every $j\notin S$ we have that $\mathcal M(S+\{j\})>p$ and thus $v^p(S+\{j\})=1$, precisely because $S$ is in the menu $\mathcal M$. Similarly, if $S'$ is a useless bundle of $v^p$, then $S'$ is in $\mathcal M$. This shows that for every $p$, $v^p$ is a $mc(A)$-useless valuation.

It is also easy to see that if there is some bundle $S$ with $\mathcal M(S)=p$ then there is some bundle $S'$ that is in $\mathcal M$ with $\mathcal M(S')=p$: start with $S$, and check if there is some item $j\notin S$ such that $\mathcal M(S)=\mathcal M(S+\{j\})$. If not, then $S$ is in the menu, else, repeat the process but now with the bundle $S+\{j\}$ instead of $S$. The process stops after at most $|M-S|$ additions of items with a bundle that is in $\mathcal M$ and has price $p$, simply because we run out of items to add.

We can now run the following natural algorithm to find all bundles that are in $\mathcal M$. First, consider $v^0$ and use $q_k$ value queries to $v^0$ (each costs $price^{val}(A)$ ``real'' value queries to $v_{-i}$) to determine the useless bundles of $v^0$. Notice that the set of useless bundles is not empty, since $\mathcal M(\emptyset)=0$. We have already observed above that the set of useless bundles of $v^0$ contains only bundles that are in $\mathcal M$, and furthermore contains all bundles that are in $\mathcal M$ with price $0$. 

Now we find the minimal price $p$ of some bundle in $\mathcal M$ that is bigger than $0$. Denote the value query complexity by $r$. Define $v^p$ and as before use $q_k\cdot price^{val}(A)$ value queries to find all the sets that are in $\mathcal M$ and have price $p$. The process will stop after $mc(A)$ iterations, since there are at most $mc(A)$ distinct prices in $\mathcal M$ (recall that if there is a bundle $S$ with price $\mathcal M(S)=p$ then there is a bundle in $\mathcal M$ with price $p$). At the end of the process we have found all $mc(A)$ bundles that are in the menu. Use $price^{val}(A)$ to determine the price of each of them.

The total number of value queries is therefore $mc(A)\cdot (r+q_k\cdot price^{val}(A))+mc(A)\cdot price^{val}(A)$. To finish the proof we prove that $r\leq q_k$.

\begin{claim}
Any algorithm that learns a $k$-useless valuation $v^p$ that is obtained from $v$ as above must query at least one set $S$ with $\mathcal M(S)=\min_{S':\mathcal M(S')>p}\mathcal M(S')=p'$.
\end{claim}
\begin{proof}
Let $\mathcal M'$ be the menu that is identical to $\mathcal M$ except for bundle each $S$ with $\mathcal M(S)=p'$ for which we have that $\mathcal M'(S)=p$. Notice that $\mathcal M'$ is monotone since $\mathcal M$ is monotone and since there is no bundle $S$ with $p<\mathcal M(S)<p'$. Notice that the algorithm for learning a $k$-useless valuation will not notice the difference between $\mathcal M$ and $\mathcal M'$ (since we only changed the prices of bundles that were not queried), thus the set of useless bundles it returns is the same. However, the valuation $v'^p$ obtained from $\mathcal M'$ has at least one additional useless bundle (recall that if there is a bundle with price $\mathcal M(S)$ then there is a bundle in the menu with the same price). This is a contradiction to the correctness of the algorithm that learns $k$-useless valuations.
\end{proof}

I.e., we can compute $\min_{S':\mathcal M(S')>p}\mathcal M(S')$ just by running the algorithm for learning a $k$-useless valuation on $v^p$, and take the minimal value that is strictly bigger than $p$ that was encountered when the valuation $v$ was queried. Thus, we can conclude that the index of the menu presented to player $i$ can be found with at most $mc(A)\cdot q_k\cdot price^{val}(A)+mc(A)\cdot price^{val}(A)$ value queries.
\end{proof}

All that is left is to show that $q_k$ from the statement of the claim can be bounded from above by $poly(mc(A),price^{val}(A),m)$. We now give a recursive algorithm for learning a valuation $v$ that is $k$-useless.

\vspace{0.1in} \noindent $FindUseless(S, Allowed)$
\begin{enumerate}
\item If $S$ is useless then return $S$.
\item If $v(S)=1$ then return $\emptyset$.
\item Initialize $UselessBundles=\emptyset$.
\item For each item $j\in Allowed$:
	\begin{enumerate}
	\item Remove $j$ from $Allowed$.
	\item Add $FindUseless(S+\{j\},Allowed)$ to $UselessBundles$.
	\end{enumerate}
\item Return $UselessBundles$.
\end{enumerate}
We will now see that running $FindUseless(\emptyset,M)$ returns the set of all useless bundles:
\begin{claim}
$FindUseless(S,Allowed)$ returns the set of useless bundles that contain $S$ and are contained in $S+Allowed$.
\end{claim}
\begin{proof}
We prove this by induction on the size of $Allowed$. If $Allowed=\emptyset$ then indeed $FindUseless(S,\emptyset)$ returns $S$ if and only if $S$ is useless.

Assume correctness for $|Allowed|=l$ and prove for $|Allowed|=l+1$. If $S$ is useless or $v(S)=1$ then the $FindUseless$ correctly terminates in the first two lines. To analyze the other case, suppose without loss of generality that $Allowed=\{1,2,3,\ldots, |Allowed|\}$. The main loop will first add to the set $UselessBundles$ all useless bundles that are contained in $S+Allowed$ and contain both $S$ and item $1$ (applying the induction hypothesis), then all useless bundles that contain item $2$ but not item $1$, then all useless bundles that contain item $3$ but not items $1,2$ and so on. The claim is completed since every useless bundle that contains $S$ and is contained in $S+Allowed$ must fall into one of these disjoint categories.
\end{proof}

\begin{claim}
The total number of value queries that $FindUseless(\emptyset,M)$ makes is $poly(k,m)$.
\end{claim}
\begin{proof}
Observe that if the call $FindUseless(S,Allowed)$ was executed ($S$ is ``visited''), there will be no other call to $FindUselss(S,Allowed')$, for any value of $Allowed'$. Furthermore, $FindUseless(S, Allowed)$ calls only to $FindUseless(S+j,Allowed')$. Thus, the set of bundles that are visited is a tree rooted at $\emptyset$. Notice that the leafs of the trees are either useless bundles or bundles that are valued $1$. Furthermore, each leaf $S$ with value $1$ was called from some node $S-\{j\}$ (for some $j$) with value $0$ and is on a path to a useless bundle.

Since there are at most $k$ useless bundles and since the length of the path from a root to a useless bundle is at most $m$, we get that the total number of bundles that are on a path from the root to a useless set is at most $m\cdot k$. Each such bundle has at most $m$ neighbors with value $1$ in the tree, so the total number of bundles that the algorithm visits is at most $m^2\cdot k$.

Notice that, ignoring recursive calls, each call to $FindUseless(S,Allowed)$ makes at most $m+1$ value queries (in the first two lines, at most $m$ queries to check whether $S$ is useless an additional one for $v(S)$). The total number of value queries that $FindUseless(\emptyset, M)$ therefore makes is at most $(m+1)\cdot m^2\cdot k$.
\end{proof}

All this gives us that $q_k\leq poly(k,m)$. 
%
This completes the proof of Theorem \ref{thm-value-reconstruct}.
\end{proof}

Define $tie^{val}(A)$ in an analogous way to $tie(A)$ (see definition in Section \ref{sec-menu-construction}). We conclude:
\begin{theorem}[characterization of the value query complexity of truthful mechanisms]
Fix a mechanism $A$ that makes only value queries and is truthful for general valuations. Then:
$$
\frac {mc(A)+price^{val}(A)+tie^{val}(A)} 3 \leq val(A) \leq poly(mc(A),price^{val}(A),tie^{val}(A),m,n) 
$$
\end{theorem}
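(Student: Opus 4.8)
The plan is to prove the two inequalities separately, mirroring the structure of Theorem~\ref{thm-characterization} but working throughout in the value-query model. For the left-hand inequality, I would argue that each of the three quantities $mc(A)$, $price^{val}(A)$, and $tie^{val}(A)$ is individually at most $val(A)$, so their average is as well. The bound $mc(A)\leq val(A)+2$ is exactly Theorem~\ref{thm-value}, so up to the harmless additive constant that term is handled. For $tie^{val}(A)\leq val(A)$, observe that the protocol realizing $val(A)$ already determines the allocation of $A$, and a protocol computing $tie^{val}(A)$ is only allowed to use this information plus the menus, so it can simply ignore the menu information and simulate the honest value-query implementation of $A$ --- this is the value-query analogue of the remark $tie(A)\leq cc(A)$ made right before Theorem~\ref{thm-characterization}. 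For $price^{val}(A)\leq val(A)$, the cleanest route is the analogue of Proposition~\ref{proposition-price-of-bundle}: since $A$ is truthful for general valuations it is in particular truthful for additive valuations, and in that restricted setting, to learn $\mathcal M(S)$ for a target bundle $S$ one feeds the relevant player an additive valuation that forces him to win exactly $S$ (e.g.\ a large value on the items of $S$ and zero elsewhere), runs the honest value-query implementation of $A$, reads off the payment, and this costs at most $val(A)$ value queries. (I should double-check whether the paper actually states Proposition~\ref{proposition-price-of-bundle} in the value-query form; if not, the argument above supplies it directly.)

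For the right-hand inequality, the plan is to assemble an implementation of $A$ out of the menu-reconstruction machinery. By Theorem~\ref{thm-value-reconstruct}, for each player $i$ we can, using only value queries to $v_{-i}$, reconstruct the index of the menu $\mathcal M_i$ presented to $i$ using $poly(mc(A),price^{val}(A),m)$ queries. Doing this once for every player costs a further factor of $n$, so $poly(mc(A),price^{val}(A),m,n)$ value queries suffice for all $n$ players to learn all $n$ menus. At that point every quantity needed to finish running $A$ --- specifically, the set of profit-maximizing bundles for each player, which depends only on $v_i$ and $\mathcal M_i$ --- is determined, and by definition of $tie^{val}(A)$ an additional $tie^{val}(A)$ value queries resolve the tie-breaking and fix the final allocation and payments. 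Summing gives $val(A)\leq poly(mc(A),price^{val}(A),tie^{val}(A),m,n)$, completing the theorem.

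The one subtlety I expect to be the main obstacle is bookkeeping the query costs correctly in the reconstruction step: Theorem~\ref{thm-value-reconstruct} is stated in terms of $price^{val}(A)$ and $mc(A)$, but running it for player $i$ uses value queries to the \emph{other} players' valuations, and one must make sure that simulating a $v^p$-query really does cost only $price^{val}(A)$ honest queries (as shown inside that proof) and that the overall polynomial does not secretly blow up when composed across all $n$ players and with the tie-breaking phase. A second, smaller point to get right is that the menu $\mathcal M_i$ must be taken in the monotone, normalized form of Proposition~\ref{proposition-menu-monotonicity}, so that the notion ``bundle in $\mathcal M_i$'' used by the reconstruction theorem is well defined and the set of profit-maximizing bundles computed from $(v_i,\mathcal M_i)$ genuinely agrees with what $A$ does. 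Once these are checked, the proof is essentially the same three-line assembly as Theorem~\ref{thm-characterization}, and I would present it that way.
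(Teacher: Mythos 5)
Your proposal is correct and mirrors the paper's intended argument exactly: the paper omits the proof, remarking only that it is ``very similar to the proof of Theorem~\ref{thm-characterization},'' and your three-term decomposition (Theorem~\ref{thm-value} for $mc$, the value-query analogue of Proposition~\ref{proposition-price-of-bundle} for $price^{val}$, the trivial simulation bound for $tie^{val}$, and Theorem~\ref{thm-value-reconstruct} applied $n$ times plus a $tie^{val}(A)$-query finish for the upper bound) is precisely that analogy filled in. The one wrinkle you already flag --- that Theorem~\ref{thm-value} gives $mc(A)\leq val(A)+2$ rather than $mc(A)\leq val(A)$ --- means the left inequality holds only up to an additive $O(1)$; this looseness is already present in the paper's stated bound and is not a defect of your argument.
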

The proof is very similar to the proof of Theorem \ref{thm-characterization} and is omitted.

%
%

\subsection{Demand Queries}\label{sec-demand}

In this section we consider mechanisms that access the valuations via demand and value queries only (but recall that a value query can be simulated by a polynomial number of value queries \cite{BN07}).

\subsubsection{An Impossibility Result for Menu Reconstruction}

Up until now, we have showed two menu reconstruction theorems. The first one (Theorem \ref{thm-construct-menu}) showed that we can find the menu using $poly(tax(A),price(A),m,n)$ communication. The second one (Theorem \ref{thm-value-reconstruct}) uses $poly(mc(A),price^{val}(A),m)$ value queries. In particular, for rich enough domain the running time of menu reconstruction is within a polynomial factor of the running time of the truthful mechanism.

We now show that no analogous result exists if the mechanism accesses the valuations using demand queries only. Specifically, we will show that if player $i$'s valuation is general, then there is a two-player mechanism $A$ with $price^{dem}(A)=1$ that makes $poly(m)$ demand queries but that reconstructing the menu presented to player $i$ takes $exp(m)$ demand queries.

\begin{theorem}\label{thm-demand-impossibility}
There is a two-player mechanism $A$ that is truthful for player $2$ with a general valuation such that $dem(A)=poly(m)$ but finding the menu presented to player $2$ requires $exp(m)$ demand queries to player $1$'s valuation.
\end{theorem}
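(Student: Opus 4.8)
The plan is to build a two-player mechanism in which player $1$'s valuation secretly encodes a single ``special'' bundle $S^\star \subseteq M$ drawn from an exponentially large family, and to arrange the menu presented to player $2$ so that it is identical for all choices of $S^\star$ \emph{except} that the price of one particular bundle (say the grand bundle, or a fixed bundle $B_0$) is finite exactly when $S^\star$ has some property — i.e.\ the menu reveals $S^\star$. The key design tension is: (i) the allocation/payment rule must be computable with $\mathrm{poly}(m)$ demand queries to $v_1$, and finding the price of any given bundle in player $2$'s menu must cost only one demand query to $v_1$ ($price^{dem}(A)=1$); yet (ii) recovering the \emph{whole} menu — equivalently, identifying $S^\star$ — must require $\exp(m)$ demand queries to $v_1$. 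The reason this is possible is precisely the asymmetry flagged in the no-reconstruction discussion: a demand query returns only a single profit-maximizing bundle for a given price vector, so a valuation $v_1$ that ``looks additive everywhere'' under every price vector except a cleverly hidden one can hide $S^\star$ from any polynomial set of demand queries, by an adversary/fooling argument over the $\binom{m}{m/2}$-sized (or similar) family of candidate special sets.

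First I would fix the combinatorial gadget: let $v_1$ be essentially additive, $v_1(T)=\sum_{j\in T} w_j$ for generic weights $w_j$, but modified so that there is exactly one bundle $S^\star$ with a slightly \emph{boosted} value $v_1(S^\star)=\sum_{j\in S^\star}w_j+\delta$ for a tiny $\delta>0$ chosen so that monotonicity is preserved and so that $S^\star$ is returned by a demand query only for price vectors in a thin region. The menu $\mathcal M_2$ that this $v_1$ presents to player $2$ under VCG-like pricing (or a tailored rule) is then additive — $\mathcal M_2(T)=v_1(M)-v_1(M\setminus T)$ — hence additive in $T$ for all $T$ avoiding $S^\star$, with one anomalous entry where the boost $\delta$ surfaces. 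Computing $\mathcal M_2(T)$ for a \emph{given} $T$ is one demand query (query at the price vector that isolates $M\setminus T$, or directly at a per-item price vector derived from $T$); computing the allocation for player $2$'s actual valuation is also cheap because player $2$'s menu is almost additive. But to know the \emph{index} of the menu — which $S^\star$ was used — one must detect $\delta$, and $\delta$ is only detectable by a demand query whose price vector lies in the narrow cone that makes $S^\star$ strictly optimal; there are $\exp(m)$ disjoint such cones (one per candidate $S^\star$), and any algorithm making $o(\exp(m))$ demand queries misses all but a negligible fraction, so by a pigeonhole/fooling-set argument over this family it cannot distinguish two different special sets.

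The main obstacle I anticipate is the bookkeeping needed to make all three quantitative requirements hold \emph{simultaneously} and honestly under the ``demand query returns one bundle, chosen by a fixed tie-break'' semantics from the preliminaries: (a) ensuring truthfulness for player $2$ (the mechanism must always award player $2$ a profit-maximizing bundle in $\mathcal M_2$, which forces care in how the allocation and player $1$'s bundle are defined from the demand-query answers); (b) ensuring $price^{dem}(A)=1$ exactly, i.e.\ that the price of \emph{any} queried bundle $S$ in $\mathcal M_2$ is extractable from a single demand query to $v_1$ even though $\mathcal M_2$ is not globally additive — this likely needs $\mathcal M_2$ to be $2$-min-affine or to have the anomaly placed so it never interferes with a price query; and (c) proving the $\exp(m)$ lower bound rigorously, i.e.\ constructing two valuations $v_1^{(1)},v_1^{(2)}$ (with $S^\star_1\neq S^\star_2$) together with an adversary argument showing that any deterministic demand-query algorithm making few queries receives the \emph{same} answers on both — here one must check that every ``generic'' price vector returns the same (additive-optimal) bundle for both valuations, using $\delta$ small and the weights $w_j$ in general position. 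I expect (b) and (c) to be where the real work lies; once the gadget is pinned down, truthfulness and the upper bound $dem(A)=\mathrm{poly}(m)$ should follow by direct inspection.
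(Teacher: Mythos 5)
Your overall plan --- hide an exponentially chosen bundle $T$ inside $v_1$, make the menu presented to player $2$ depend on $T$ only through one anomalous entry, have the price of each individual bundle be extractable with one query, and then argue that identifying $T$ (equivalently, the whole menu) takes $\exp(m)$ demand queries --- is exactly the paper's strategy. But the specific gadget you propose for $v_1$ does not work, and the reason it fails is instructive: it misses the single feature that makes the paper's construction go through.

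You take $v_1$ to be additive with generic weights $w_j$ plus a small bump $\delta$ at the hidden bundle $S^\star$, and you hope that the set of price vectors for which a demand query returns $S^\star$ forms a ``thin cone'' disjoint from the cones of other candidates. Those cones are thin, but they are \emph{not} disjoint, and in fact they share a common point. Take any fixed $A$ with $|A|=m/2$ and set $p_j = w_j - \epsilon$ for $j\in A$ and $p_j = w_j + \epsilon$ for $j\notin A$. Then, for the unmodified additive part, the profit of a bundle $T$ is $\epsilon\,|T\cap A| - \epsilon\,|T\setminus A|$, maximized at $A$ with value $\epsilon m/2$. For the bumped bundle $S^\star$ with $|S^\star|=m/2$, the profit is $\epsilon(m/2 - 2|S^\star\setminus A|) + \delta$. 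So $S^\star$ strictly wins the demand query whenever $\delta > 2\epsilon\,|S^\star\setminus A|$; taking $\epsilon < \delta/m$ makes this hold for \emph{every} $S^\star$ of size $m/2$ (and in fact for every $S^\star$ at all). Hence the querier, knowing the public weights $w_j$ (or learning them with $\mathrm{poly}(m)$ queries if they were private), issues a single demand query at this $p$ and receives $S^\star$ back. Your lower bound collapses. The additive structure is precisely the problem: because the marginal gains $w_j - p_j$ can be made uniformly tiny, a single $p$ can lie in the winning region of every candidate simultaneously.

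What the paper does instead is discard the additive scaffold entirely and use a threshold valuation: $v_1^T(S)=0$ for $|S|\le m/2$ with $S\ne T$, $v_1^T(T)=1/4$, and $v_1^T(S)=1$ for $|S|>m/2$. The essential ingredient you are missing is the large jump from $1/4$ at $T$ to $1$ at any proper superset of $T$. For a demand query $p$ to return $T$ rather than some superset $T\cup\{j\}$ it must set $p_j\ge 3/4$ for every $j\notin T$, and for $T$ to have nonnegative profit it must set $p_j\le 1/4$ for every $j\in T$. Consequently the bundle $T$ is completely pinned down by $p$, so each demand query can ``cover'' at most one candidate, and a straightforward counting argument gives the $\binom{m}{m/2}-1$ lower bound. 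The additive-plus-bump valuation has no such large jump, which is why the regions overlap and the counting argument is unavailable. To repair your plan you would need to build in that separation of scales --- i.e., make $v_1$ jump by a constant amount the moment you add any item to $T$ --- at which point you essentially recover the paper's gadget.
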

\begin{proof}
Let $\mathcal M_T$ be the following menu, for every $T$ such that $|T|=\frac m 2$:
$$
\mathcal M_T(S)= \begin{cases} 
 |S| &  S\neq T \\
|S|+\frac 1 2 & S=T 
\end{cases}
$$
The mechanism is defined as follows. Player $1$ never gets any items. Player $2$ receives his profit-maximizing bundle from $\mathcal M_T$, where ties are arbitrarily broken. The identity of $T$ will depend on player $1$'s valuation, but we will have that $price^{dem}(A)=1$. We first show that by making $poly(m)$ demand queries we can find player $2$'s profit maximizing bundle, even if $T$ is unknown in advance.

\paragraph{An Algorithm for finding a profit maximizing bundle.} Start with a demand query with a price per item of $1$. Let $S_0$ be the profit maximizing bundle returned by that demand query. Use $price^{dem}(A)$ demand queries to check whether $S_0=T$, which happens if and only if $\mathcal M_T (S_0)=\frac m 2 +\frac 1 2$. If $S_0\neq T$, then we will show that $S_0$ is a profit maximizing bundle. 

If $S_0=T$, run additional $m$ demand queries. First, we run the following $\frac m 2$ demand queries, one for each item $j\in T$: in the $j$'th query the price of item $j$ is $\infty$ and the price of every other item is $1$. Let $R_1$ be the profit maximizing bundle among the results of all these $\frac m 2$ queries. The second batch of $\frac m 2$ demand queries, one for each $j\notin T$, sets a price $\frac 1 2$ for item $j$, a price $0$ for every item $j\in T$ and price of $1$ for all other items. Let $R_2$ be the profit maximizing bundle among the results of all queries in the second batch. We now show that one of $S,R_1,R_2$ is a profit maximizing bundle.

\begin{lemma}
Even if $T$ is unknown, the algorithm finds a profit-maximizing bundle by making $m+1+ price^{dem}(A)$ demand queries.
\end{lemma}
\begin{proof}
We first show that if $S\neq T$ then indeed $S$ is a profit maximizing bundle. To see that, observe that by the result of the first demand query we have that for all $S'$, $v_2(S)-|S|\geq v_2(S')-|S'|\geq v_2(T)-|T|>v_2(T)-|T|-\frac 1 2$. Thus, we get that $v_2(S)-\mathcal M_T(S)\geq v_2(S')-\mathcal M_T(S')$ for all $S'$, as needed.

We now show that if $S=T$ then one of $S,R_1, R_2$ is a profit maximizing bundle. We first claim that among all bundles that do not contain $T$, $R_1$ is a profit maximizing bundle. Denote by $Q$ the profit maximizing bundle among all bundles that do not contain $T$. Since $Q$ does not contain $T$, there is some item $j\in Q$ such that $j\notin T$. Consider the demand query that sets the price of $j$ to be $\infty$ (and the price of every other item to be $1$). Let $Q'$ be the bundle it returns. We have that for all $S'$ that do not contain $j$, $v_2(Q')-|Q'|\geq v_2(S')-|S'|$. Since we also have by assumption that $v_2(Q)-|Q|\geq v_2(Q')-|Q'|$. By the way we choose $R_1$, the profit of $Q$ is at most the profit of $R_1$.

We now show that $R_2$ is a profit maximizing bundle among all bundles that contain $T$. Denote by $Q$ some bundle with a profit maximizing bundle among all those that contain $T$. If the profit of $T$ is at least the profit of $Q$ then we are done, because the algorithm chooses a profit maximizing bundle among $T,R_1,R_2$. Thus, assume that $Q$ is strictly more profitable than $T$, hence $Q$ strictly contains $T$. Let $j\in Q$ be some item such that $j\notin T$.

Consider the $j$'th demand query in the second batch and denote by $Q'$ its answer. Notice that the bundle $Q'\cup T$ has at least the same profit as the bundle $Q'$, simply by the monotonicity of the valuations and since the price of every item in $T$ is $0$. Also notice that in this demand query the price of every bundle $S'$ that contains $T$ and $j$ is exactly $\mathcal M_T(S')-\mathcal M_T(T)$. Thus, the profit of $Q'$ in $\mathcal M_T$ is at least that of any other bundle that contains item $j$ and $T$. In particular, the profit is at least that of $Q$. This gives us that the profit of $R_2$ in $\mathcal M_T$ is at least that of $Q$, which finishes the proof of the lemma.
\end{proof}

To finish the proof, we describe how $T$ is determined. The idea is to embed into $v_1$ some problem that can only be solved by making exponentially many demand queries. The solution to this hard problem is $T$. This will give us that reconstructing the menu requires $exp(m)$ demand queries. The crux is that verifying whether a specific bundle $T$ is a solution to the hard problem can be done with only one query. Thus, $price^{dem}(A)$ can be computed with one demand query (to compute $\mathcal M_T(S)$, we check whether $S$ solves the hard problem) and we can find a profit maximizing bundle by making only $poly(m)$ demand queries. 

Our hard problem will be finding a value maximizing bundle of size $\frac m 2$. Fix some $\epsilon>0$. We will assume that there exists some bundle $T$ for which player $1$'s valuation is:
$$
v^T_1(S)= \begin{cases} 
 0 &  |S|<\frac m 2, \\
 0 & |S|=\frac m 2, S\neq T\\
\frac 1 4 & S=T, \\
1 & |S|>\frac m 2.
\end{cases}
$$
Notice that it is easy to check whether $S$ is a value-maximizing bundle: use one value query\footnote{Alternatively, use the demand query that sets $p_j=0$ for every $j\in S$ and $p_j=\infty$ otherwise.} and get $v_1(S)$. $S$ maximizes the value if and only if $v_1(S)=\frac 1 4$. This give us the following implementation of the price computing procedure for $A$: given bundle $S$, return $|S|+\frac 1 2$ if $v_1(S)=\frac 1 4$. Else, return $|S|$. All that is left to prove is that finding $T$ requires exponentially many demand queries to $v_1$. The proof is in fact a special case of a proof that was given in \cite{BDO12} and we bring the proof of the special case here for completeness.

\begin{lemma}
Finding $T$ requires $exp(m)$ demand queries.
\end{lemma}
\begin{proof}
We say that a demand query $p=(p_1,\ldots, p_m)$ \emph{covers} $T$, $|T|=\frac m 2$, if $T$ is the answer to the demand query $p$ when the valuation of player $1$ is $v^T_1$. Notice that if $T_p$ is the set of bundles that a demand query $p$ covers, we can replace every demand query by querying the value of each bundle $S\in T_p$. This is true since answer of the demand query can either by some bundle $T_p$, or some bundle of size different than $\frac m 2$. However, the value of every bundle with size different than $\frac m 2$ is fixed, a profit-maximizing bundle among those can be easily computed with no queries.

We show that finding a profit maximizing bundle using value queries only requires $exp(m)$ value queries. The proof will be concluded by showing that every demand query covers at most one bundle, which implies by our discussion above that every demand query can be simulated by one value query. Thus, for a constant $\epsilon>0$, $exp(m)$ demand queries are needed to find $T$, which finishes the proof of Theorem \ref{thm-demand-impossibility}.

\begin{claim}
Finding $T$ using value queries only requires ${m \choose {\frac m 2}}-1$ value queries.
\end{claim}
\begin{proof}
Fix some algorithm that finds $T$ and uses only value queries. For every one of the first ${m \choose {\frac m 2}}-2$ queries to bundles of size $\frac m 2$ return $0$. $T$ can be either one of the two bundles that were not queried so far, and an extra value query is needed to decide which one of them is $T$.
\end{proof}

\begin{claim}
Every demand query $p=(p_1,\ldots, p_m)$ covers at most one bundle.
\end{claim}
\begin{proof}
We first claim that if $T$ is covered by $p$, then for every $j\notin T$ it must hold that $p_j<\frac 3 4$. Otherwise, the bundle $T\cup \{j\}$ has a strictly higher profit: $v^T_1(T\cup \{j\})-\Sigma_{j'\in T\cup \{j\}}p_{j'}=1-\Sigma_{j'\in T}p_{j'}-p_j>\frac 1 4-\Sigma_{j'\in T}p_{j'}  = v^T_1(T)-\Sigma_{j'\in T}p_{j'}$. This implies that $T$ is not covered by $p$, a contradiction.

Next, observe that if $T$ is covered by $p$, then there is no item $j\in T$ with $p_j> \frac 1 4$, otherwise the profit of $T$ is negative. 

Together, this gives us that the only bundle that might be covered by $p$ is the bundle that contains all items with price at most $\frac 1 4$. 
\end{proof}

We thus get that finding $T$ requires exponentially many demand queries.
\end{proof}

This concludes the proof of Theorem \ref{thm-demand-impossibility}.
\end{proof}

\begin{remark}
Notice that in the proof the valuation of player $2$ could be general, but player $1$ that is presenting the menu could not have a general valuation. This is because we needed to have exactly one solution to the ``hard problem''. If the valuation of player $1$ was allowed to be general, we could have ended up with multiple bundles of size $\frac m 2$ with value $\frac 1 4$, and finding a profit maximizing bundle could no longer be done with polynomially many demand queries.

In fact, to allow player $1$ to have general valuation, we need to have some problem that needs exponentially many demand queries to solve, and every instance has at most one unique solution or, equivalently, instances with more than one unique solution can be solved with $poly(m)$ demand queries (the equivalent of unique-SAT \cite{VV86}). Currently, we do not know whether such a problem exists. We do note however that we can use cryptographic constructs to get a similar (somewhat weaker) result for general valuations. For example, given a one way permutation $\pi$, we could use a very similar construction to the one in the proof, except that we let $v_1(S)=\frac 1 4$ if and only if $\pi(S)=0$ (where we slightly abuse notation here by using the binary representation of $S$). Thus, there is only one bundle $S$ with $v_1(S)=\frac 1 4$, and constructing the menu is equivalent to inverting the one way permutation $\pi$, even if computing demand queries takes $O(1)$ time.
\end{remark}

\subsubsection{The Affinity of Mechanisms}

We now characterize the structure of the menu in mechanisms that use only demand queries.

\begin{definition}
A menu $\mathcal M$ is called \emph{$\alpha$-min-affine} if there exists a set of $\alpha$ price vectors $\{p^i\}_i$ with each $p^i_j\in \mathbb R^+\cup \{0\}\cup \{\infty\}$ and a set of numbers $\{r^i\}_i$ such that for every bundle $S$ with $\mathcal M(S)<\infty$ we have that $\mathcal M(S)=\min_i\{\Sigma_{j\in S}p^i_j+r^i\}$. 
\end{definition}

A menu $\mathcal M$ is \emph{$(\alpha,\beta)$-almost min-affine} if it is min-affine with complexity $\alpha$ except for $\beta$ many bundles (which may have arbitrary prices, including $\infty$). Notice that an $(\alpha,\beta)$-almost min-affine menu can obviously be described by $\alpha$ price vectors in addition to $\beta$ numbers. When $\beta=0$, we say that the affinity of $A$, denoted $aff(A)$, is $\alpha$.

\begin{theorem}\label{thm-demand}
Let $A$ be a mechanism that is truthful for general valuations and uses only demand and value queries. If $A$ makes at most $\alpha$ demand queries and $\beta$ value queries then every menu presented in $A$ is $(\alpha, \beta)$-almost min-affine.  
\end{theorem}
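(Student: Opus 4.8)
The plan is to fix a mechanism $A$ as in the statement, a player $i$, and a valuation profile $v_{-i}$ of the other players, let $\mathcal M$ be the menu presented to $i$ by $v_{-i}$, and show directly that $\mathcal M$ is $(\alpha,\beta)$-almost min-affine by reading off the price vectors and offsets from the answers to the (at most) $\alpha$ demand queries and $\beta$ value queries that the most efficient implementation of $A$ makes along a suitable execution path. The natural ``hard input'' to feed to player $i$ is, as in the proof of Theorem \ref{thm-value}, the valuation $v_i$ that agrees with $\mathcal M$ on all bundles with finite price and is set very large on bundles with price $\infty$; I would run the implementation of $A$ on the instance $(v_i, v_{-i})$. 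This fixes a transcript, and in particular fixes a sequence of at most $\alpha$ demand queries with price vectors $p^1,\dots,p^{\alpha}$ and answers $T^1,\dots,T^{\alpha}$, together with at most $\beta$ value queries on bundles $R^1,\dots,R^{\beta}$.

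\textbf{Step 1: the demand-query price vectors give upper bounds on $\mathcal M$ everywhere.} For each demand query $k$ with prices $p^k$ and answer $T^k$, set $r^k = \mathcal M(T^k) - \Sigma_{j\in T^k} p^k_j$ (this is the profit of player $i$'s winning bundle in a world where the menu is $\mathcal M$ and the item prices are $p^k$, read with a sign flip). The key point, using the taxation principle exactly as in Theorem \ref{thm-value}, is that for \emph{every} bundle $S$ with $\mathcal M(S)<\infty$ we have $\mathcal M(S)\ge \min_k\{\Sigma_{j\in S}p^k_j + r^k\}$ — intuitively, no bundle can be ``cheaper in $\mathcal M$ than the affine envelope predicts'', for otherwise perturbing $v_i$ upward on that bundle (without changing any queried value) would make it strictly more profitable than the bundle $A$ actually awards, contradicting truthfulness. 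I would make this precise by the usual indistinguishability argument: if the inequality failed at $S$, replace $v_i$ by $v_i'$ that equals $v_i$ except $v_i'(S)=v_i(S)+\epsilon$; since $A$ makes only these demand/value queries, and $S$ was not the answer to any demand query on this path (one has to argue this, or handle the case $S=T^k$ separately, where the inequality holds with equality by the definition of $r^k$) and not queried by a value query, $A$ cannot distinguish $v_i$ from $v_i'$, so $i$ is awarded the same bundle, whose profit is now strictly smaller than the profit of $S$ — contradiction.

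\textbf{Step 2: matching lower bound, up to $\beta$ exceptional bundles.} Conversely, I would argue that for every bundle $S$ with $\mathcal M(S)<\infty$ that is \emph{not} among the $\beta$ value-queried bundles $R^1,\dots,R^{\beta}$, the reverse inequality $\mathcal M(S)\le \min_k\{\Sigma_{j\in S}p^k_j+r^k\}$ also holds, so equality holds there; the $\beta$ value-queried bundles are exactly the allowed exceptions. The mechanism for this is again an indistinguishability/deviation argument: if $\mathcal M(S)$ were strictly larger than the affine value at $S$ attained by some query $p^k$, then relative to the item prices $p^k$ the bundle $S$ would be strictly \emph{less} profitable for $v_i$ than the bundle $T^k$, yet one can lower $\mathcal M(S)$ down to the affine envelope without affecting any query answer or the correctness of any already-awarded bundle (shrinking a price only enlarges profit-maximizing sets, and this does not change demand-query answers since $S$ is never returned, nor value queries since $S$ is never value-queried), which by the menu-monotonicity-style reasoning of Proposition \ref{proposition-menu-monotonicity} means we may as well have defined $\mathcal M(S)$ to equal the envelope value. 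Collecting Steps 1 and 2: for all $S$ with $\mathcal M(S)<\infty$ and $S\notin\{R^1,\dots,R^\beta\}$ we get $\mathcal M(S)=\min_k\{\Sigma_{j\in S}p^k_j+r^k\}$ with $\alpha$ price vectors, which is precisely the definition of $(\alpha,\beta)$-almost min-affine (the $\beta$ value-queried bundles absorbed into the exceptional set, with whatever prices, including $\infty$, $\mathcal M$ assigns them).

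\textbf{The main obstacle** I anticipate is the bookkeeping around which bundles are ``touched'' by the chosen transcript and making the indistinguishability perturbations rigorous: one must run $A$ on the single instance $(v_i,v_{-i})$, but demand-query answers are bundles and a perturbed $v_i'$ could in principle change a demand-query answer even if $S$ is not that answer (e.g. if the perturbation raises $v_i'(S)$ enough to make $S$ the new demand), so the perturbations have to be infinitesimal and one must check that along the actual execution path the running answers stay fixed; handling the boundary cases $S=T^k$ (equality by construction of $r^k$), $S=\emptyset$, $S=M$, and $\mathcal M(S)=\infty$ bundles (which can be folded into the $\beta$ exceptions or left with price $\infty$, consistent with the definition allowing $p^i_j=\infty$) also requires care. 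A secondary subtlety is that $v_i$ as defined may need more than $k$ bits of precision, which is exactly what the increased-precision convention of Subsection \ref{subsec-representation} is there to handle, so I would invoke that convention rather than re-derive it.
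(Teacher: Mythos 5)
Your overall skeleton matches the paper's: fix a menu $\mathcal M$ presented to player $i$, define $v_i$ that agrees with $\mathcal M$ on bundles with finite price and is set large where $\mathcal M(S)=\infty$, run the most efficient implementation of $A$ on $(v_i,v_{-i})$, read off the demand-query price vectors $p^k$ and offsets $r^k$ and the value-queried bundles, and show $\mathcal M$ equals the resulting $(\alpha,\beta)$-almost min-affine menu. That much is right. However, there are three genuine gaps.

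First, your Step 2 argument for $\mathcal M(S)\le\min_k\{\Sigma_{j\in S}p^k_j+r^k\}$ has the sign flipped, and the menu-modification reasoning is not a proof. If $\mathcal M(S)=v_i(S)>\Sigma_{j\in S}p^k_j+r^k$ for the arg-min $k$, then under the item prices $p^k$ the bundle $S$ is strictly \emph{more} profitable than the returned bundle (because $v_i(S)-\Sigma_{j\in S}p^k_j>r^k$, which equals the profit of the demand-query answer), directly contradicting that the latter was the demand answer. That one sentence is the whole proof; your claim that $S$ would be ``strictly less profitable'' is backwards, and the appeal to ``we may as well have defined $\mathcal M(S)$ to equal the envelope'' is not legitimate — shrinking $\mathcal M(S)$ can change which bundle is profit-maximizing for some $v_i$, and then the fixed allocation of $A$ need no longer be consistent with the altered menu.

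Second, in Step 1 you perturb $v_i(S)$ upward by $\epsilon$, but this can violate monotonicity when $v_i(S)=v_i(S+L)$ for some $L\neq\emptyset$; such an $\epsilon$ simply does not exist. The paper devotes the entire second half of its key lemma to this ``tight bundle'' case, reducing it to the non-tight case via a maximal such $L$ and a separate argument that $\Sigma_{j\in L}p^k_j=0$, forcing $S$ itself to be a demand-query answer. You do not address this. Moreover, the paper's perturbation argument is more careful than ``the running answers stay fixed'': it \emph{allows} demand-query answers to change under $v_{i,\epsilon}$, takes $\epsilon\to 0$, and extracts by finiteness a single query index $k$ whose answer is the first to change for arbitrarily small $\epsilon$; this, not indistinguishability, is what yields $v_i(S)\geq\Sigma_{j\in S}p^k_j+r^k$.

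Third, bundles with $\mathcal M(S)=\infty$ can number exponentially many, so they cannot be ``folded into the $\beta$ exceptions.'' The paper handles them by truncating the price vectors (replacing each $p^k_j>B$ with $\infty$, and if $r^k>B$ setting every coordinate of $p^k$ to $\infty$), and proving a separate lemma that $\mathcal M(S)=\infty$ if and only if $\Sigma_{j\in S}p^k_j+r^k>(m+1)B$ for every $k$, then checking that the truncation leaves the equalities on finite-price bundles intact. Without this step the constructed $\mathcal M'$ would not assign price $\infty$ exactly where $\mathcal M$ does.
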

\begin{proof}
Fix some menu $\mathcal M$ that may be presented to player $i$. Let $(v_1^{\mathcal M},\ldots,  v_{i-1}^{\mathcal M},v_{i+1}^{\mathcal M}, \ldots ,v_n^{\mathcal M})$ be some valuation profile that presents the menu $\mathcal M$ to player $i$. Let $v_i$ be the valuations where for each bundle $S$ with $\mathcal M(S)<\infty$ we have that $v_i(S)=\mathcal M(S)$. Set the value of bundles $S$ with $\mathcal M(S)=\infty$ to $v(S)=(m+1)B$ (recall that $B$ is an upper bound on the highest finite price in the menu -- see Section \ref{sec-preliminaries}).

Consider the oracle calls that $A$ makes to the valuation $v_i$ in the instance $(v^{\mathcal M}_{-i},v_i)$: at most $\alpha$ demand queries and $\beta$ value queries. Let the price vector of the $k$'th demand query be $p^k$ and let $D^k$ be the bundle returned by this demand query. Let $r^k=v_i(D^k)-\Sigma_{j\in D^k}p^k_j$. Let the set of bundles queried by the value queries be $T=\{T^1,\ldots ,T^\beta\}$.

For each price vector $p^k$, obtain a price vector $p'^k$ by replacing every $p^k_j$ such that $p^k_j>B$ with $p^k_j=\infty$. In addition, if for some $k$ we have that $r^k>B$ then we set $p^k_j=\infty$ for all $j$. Let $\mathcal M'$ be the $(\alpha,\beta)$-almost min-affine menu defined by the price vectors $\{p'^k\}_k$ and the numbers $\{r^k\}_k$, except that we set the price of every bundle $T^r\in T$ to be $\mathcal M'(T^r)=\mathcal M(T^r)$.

The theorem is obtained by showing that $\mathcal M$ and $\mathcal M'$ are equal:
\begin{lemma}\label{lemma-demand-queries-at-most}
For every $k$ and every bundle $S\notin T$ with $\mathcal M(S)<\infty$ we have that $\mathcal M(S)=v_i(S)\leq \Sigma_{j\in S}p^k_j+r^k$.
\end{lemma}
\begin{proof}
Suppose towards a contradiction that for some $k$, $\mathcal M(S)=v_i(S)> \Sigma_{j\in S}p^k_j+r^k$. We then have that:
\begin{equation*}
v_i(S)-\Sigma_{j\in S}p^k_j > r^k=v_i(D^k)-\Sigma_{j\in D^k}p^k_j
\end{equation*}
In contradiction to our assumption that $D^k$ is a profit-maximizing bundle when the price vector is $p^k$.
\end{proof}

\begin{lemma}\label{lemma-demand-queries-exactly}
For every bundle $S\notin T$ with $\mathcal M(S)<\infty$ there exists some $k$ such that $\mathcal M(S)=v_i(S)= \Sigma_{j\in S}p^k_j+r^k$.
\end{lemma}
\begin{proof}
We say that bundle $S$ is \emph{tight} if $\mathcal M(S)<\infty$ and for some $L\neq \emptyset$, $v_i(S+L)=v_i(S)$. We start with proving the lemma for bundles that are not tight. Fix a bundle $S$ that is not tight, and let $v'_{i,\epsilon}$ be the valuation that is identical to $v$ except that $v_{i,\epsilon}(S)=v_i(S)+\epsilon$, for some small enough $\epsilon>0$ (so that $v_{i,\epsilon}$ is still monotone). Since for every bundle $S'$ with $\mathcal M(S')< \infty$ we have that $v_i(S')=\mathcal M(S')$, when the valuation of player $i$ is $v_i$, his profit is $0$, regardless of the bundle that is eventually allocated to him. Now observe that when the valuation of player $i$ is $v_{i,\epsilon}$, $S$ is the only bundle with a strictly positive profit is $S$. Thus, the mechanism must allocate $S$ to player $i$ when his valuation is $v_{i,\epsilon}$.

If the mechanism does not allocate $S$ to player $i$ when the valuation is $v_i$, the mechanism must distinguish between the case when the valuation is $v_i$ and the case when the valuation is $v_{i,\epsilon}$ (for every small enough $\epsilon>0$). Since the mechanism makes only two types of queries, this is possible only if when the valuation is $v_{i,\epsilon}$ we have that $S\in T$ or if there exists some $k$ such that the $k$'th demand query returns a bundle that is different than $D^k$. Since by assumption $S\notin T$ we assume the latter case. For every small enough $\epsilon>0$, let $k_\epsilon$ be the first demand query that changes (comparing to the sequence with the valuation $v_i$). Since the number of demand queries $\alpha$ is finite, when we take $\epsilon$ to $0$ there must be some $k$ such that for every $\delta>0$ there exists some $\epsilon<\delta$ such that the $k$'th demand query is the first to change when the valuation of player $i$ is $v_{i,\epsilon}$. I.e., for arbitrarily small $\epsilon>0$ we have that $v_i(S)+\epsilon> \Sigma_{j\in S}p^k_j+r^k$, and therefore we also have that $v_i(S)\geq  \Sigma_{j\in S}p^k_j+r^k$. Applying Lemma \ref{lemma-demand-queries-at-most} we get that $v_i(S)=  \Sigma_{j\in S}p^k_j+r^k$, as needed.

We are left with proving the lemma for tight bundles. Fix a tight bundle $S$, and let $L$, $L\cap S=\emptyset$, be a maximal set for which $v_i(S+L)=v_i(S)$. If $S\in \{D^k\}_k$, then it immediately holds that for some $k$, $v_i(S)=\Sigma_{j\in S}p^k_{j}+r^k$. If $S\notin \{D^k\}_k$, we observe that $S+L$ is not tight (if it is tight then there exists $L\subseteq L'$ such that $v_i(S)=v_i(S+L')$, contradicting the maximality of $L$) and thus there is some $k$ for which $v_i(S+L)=\Sigma_{j\in S+L}p^k_{j}+r^k$. Since $v_i(S+L)=v_i(L)$, it must be that $\Sigma_{j\in L}p^k_j=0$, as otherwise $v_i(S)-\Sigma_{j\in S}p_j>r_k$. Thus $S$ has to be the result of this query. In particular, $v_i(S)=v_i(S+L)=\Sigma_{j\in S+L}p^k_{j}+r^k=\Sigma_{j\in S}p^k_{j}+r^k$, which completes the proof.
\end{proof}

\begin{lemma}\label{lemma-demand-infty}
For every bundle $S\notin T$ we have that $\mathcal M(S)=\infty$ if and only if for every $k$, $\Sigma_{j\in S}p^k_j+r^k> (m+1)\cdot B$.
\end{lemma}
\begin{proof}
Consider a bundle $S\notin T$ with $\mathcal M(S)=\infty$. If $S$ is not profitable in the $k$'th demand query, then by definition $\Sigma_{j\in S}p^k_j\geq v(S)$ and the lemma follows. If $S$ is profitable we know that:
$$
r^k=v_i(T^k)-\Sigma_{j\in T^k}p^k_j\geq v_i(S)-\Sigma_{j\in S}p^k_j=(m+1)B-\Sigma_{j\in S}p^k_j
$$
Rearranging, we get that in this case $\Sigma_{j\in S}p^k_j+r^k> (m+1)B$.
\end{proof}

Thus, the proof of Theorem \ref{thm-demand} can be concluded as follows: if $S\in T$, we trivially have that $\mathcal M(S)=\mathcal M'(S)$. For every bundle $S\notin T$ with $\mathcal M(S)<\infty$, Lemma \ref{lemma-demand-queries-exactly} gives us that there exists some $k$ for which $v_i(S)=\mathcal M(S)= \Sigma_{j\in S}p^k_j+r^k$. In particular, since $\mathcal M(S)\leq B$, for this $k$ it must hold for every $j\in S$ that $p^k_j\leq B$. Thus, we have that $\Sigma_{j\in S}p^k_j=\Sigma_{j\in S}p'^k_j$ and also that $r^k\leq B$. By Lemma \ref{lemma-demand-queries-at-most} the price of the bundle cannot be higher, which gives us that $\mathcal M(S)=\mathcal M'(S)$ for every $S\notin T$.

Now for bundles $S\notin T$ with $\mathcal M(S)=\infty$. By Lemma \ref{lemma-demand-infty}, for every $k$ it holds that $\Sigma_{j\in S}p^k_j+r^k> (m+1)\cdot B$. Since $|S|\leq m$, the LHS consists of at most $m+1$ non-negative summands, thus one of them is greater than $B$. We will show that for at least one $j\in S$, $p'^k_j=\infty$. Hence, $\mathcal M'(S)=\infty$, as needed. If we have some $j\in S$ such that $p^k_j>B$, then $p'^k_j=\infty$. The only other option is that $r^k>B$. In this case, for all $j$ it is true that $p'^k_j=\infty$, which concludes the proof.
\end{proof}

The characterization gives us some hope of finally proving some bounds on the power of computationally efficient mechanisms that use only demand queries to access the valuations. When restricting ourselves to deterministic mechanisms, a ratio of $O(\sqrt m)$ is the best known \cite{DNS05}. When randomization is allowed, a significantly better approximation ratio of $O(\sqrt {\log m})$ is possible \cite{D16}.
\begin{open}
Consider mechanisms for combinatorial auctions with $m$ items that make only demand queries and are truthful for submodular valuations.
\begin{itemize}
\item Is there a \emph{deterministic} mechanism with $aff(A)=poly(m,n)$ that obtains an approximation ratio of $m^{\frac 1 2-\epsilon}$, for some constant $\epsilon>0$? 

\item Is there a \emph{randomized universally truthful} mechanism with $aff(A)=poly(m,n)$ that obtains an approximation ratio of $({\log m})^{\frac 1 2-\epsilon}$, for some constant $\epsilon>0$? 
\end{itemize}
\end{open}

Interestingly, the $O(\sqrt {\log m})$ approximation ratio of \cite{D16} is obtained by a mechanism with $aff(A)=1$. Thus, even the following easier question is of interest:
\begin{open}
Let $A$ be a randomized universally truthful mechanism for combinatorial auctions with $m$ that is truthful for submodular valuations. Suppose that $aff(A)=1$. Can the approximation ratio of $A$ be $({\log m})^{\frac 1 2-\epsilon}$, for some constant $\epsilon>0$? 
\end{open}

\paragraph{Tightness.} To see that the Theorem \ref{thm-demand} is essentially tight (i.e., there is a mechanism with an $(\alpha,\beta)$-min affine menu that makes approximately $\alpha$ demand queries and $\beta$ value queries), consider the following truthful mechanism for combinatorial auctions with two players, Alice and Bob. We first show this for some $\alpha>0$ and $\beta=1$, and will sketch how to generalize for any $\beta$ later. Fix some set of $M'$ items, $|M'|=\frac m 2$. 

Let $M^{Bob}=\{\mathcal M_1, \ldots, \mathcal M_{2^c}\}$ be a set of $2^c$ menus, where each $\mathcal M_i\in M^{Bob}$ is $(\alpha,0)$-min affine. We require in addition that for every $t$ and item $j\notin M'$, $\mathcal M_t(T)=\infty$, for every $T$ that contains item $j$. Bob then chooses a profit maximizing bundle according to the menu $\mathcal M_t$ (breaking ties in some consistent way), and pays appropriately. Alice never receives any items. We claim that it is possible to find a profit-maximizing bundle by making $\alpha$ demand queries:

\begin{claim}\label{claim-demand-optimize}
Let $\mathcal M$ be some menu $(\alpha,0)$ that is presented to player $i$ with valuation $v_i$. Then, a profit maximizing bundle according to $\mathcal M$ can be found by making $\alpha$ demand queries.
\end{claim}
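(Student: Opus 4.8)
The plan is to turn the ``$\min$ of affine functions'' defining the menu into a ``$\max$'', so that each affine piece is extracted by a single demand query. Fix the $\alpha$-min-affine representation $\mathcal M(S)=\min_{1\le k\le\alpha}\big(\Sigma_{j\in S}p^k_j+r^k\big)$ (available to the mechanism, which knows the presented menu), and recall that we have demand-query access to $v_i$. The algorithm is the obvious one: for $k=1,\dots,\alpha$ issue the demand query with price vector $p^k$, obtaining a bundle $D^k$ together with $v_i(D^k)$, and set $c_k:=v_i(D^k)-\Sigma_{j\in D^k}p^k_j-r^k$; then output a bundle of $\arg\max_k c_k$ if $\max_k c_k>0$, and the empty bundle otherwise. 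This uses exactly $\alpha$ demand queries (and no value queries, since the demand oracle also returns the value of the bundle it outputs).

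For correctness I would first rewrite profits: for every $S$ with $\mathcal M(S)<\infty$, $v_i(S)-\mathcal M(S)=\max_k\big(v_i(S)-\Sigma_{j\in S}p^k_j-r^k\big)$. Since the menu is normalized, the empty bundle already yields profit $0$, so the maximum profit over all bundles is nonnegative and is attained at a finite-price bundle; hence it equals $\max_S\max_k\big(v_i(S)-\Sigma_{j\in S}p^k_j-r^k\big)=\max_k\big(\max_S(v_i(S)-\Sigma_{j\in S}p^k_j)-r^k\big)$. By definition of the demand oracle $D^k\in\arg\max_S(v_i(S)-\Sigma_{j\in S}p^k_j)$, so the $k$-th inner maximum is exactly $c_k$, and therefore $\max_k c_k$ equals the maximum profit. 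It remains to check that a returned bundle attains it: for $k^*\in\arg\max_k c_k$ the min-affine inequality gives $\mathcal M(D^{k^*})\le\Sigma_{j\in D^{k^*}}p^{k^*}_j+r^{k^*}$, so the profit of $D^{k^*}$ is at least $c_{k^*}=\max_k c_k$; that is, $D^{k^*}$ is profit-maximizing, while if $\max_k c_k\le0$ then the empty bundle is.

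The delicate point, which I expect to be the main obstacle, concerns bundles of infinite menu price: a priori a demand query could return a $D^k$ with $\mathcal M(D^k)=\infty$ even though $\Sigma_{j\in D^k}p^k_j+r^k$ is finite, since the min-affine identity is only imposed on finite-price bundles, and then the algorithm might wrongly output $D^k$. I plan to rule this out by observing that such a $D^k$ still satisfies $v_i(D^k)-\Sigma_{j\in D^k}p^k_j\ge v_i(\emptyset)=0$, hence $\Sigma_{j\in D^k}p^k_j<\infty$ (so $D^k$ avoids every item priced $\infty$ under $p^k$), and then using normalization and monotonicity of $\mathcal M$ to conclude that the bundle actually output has finite menu price, so that the chain of inequalities above applies to it unchanged; for the menus used in the tightness construction the issue does not arise at all, since there the representation realizes $\mathcal M$ on \emph{every} bundle. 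The query bound of $\alpha$ is immediate from the algorithm's description.
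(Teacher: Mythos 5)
Your proof is essentially the same as the paper's: issue one demand query per price vector $p^k$, obtaining $D^k$, and output the $D^{k^*}$ maximizing $c_k = v_i(D^k)-\Sigma_{j\in D^k}p^k_j-r^k$ (or $\emptyset$ if $\max_k c_k\le 0$). Your ``swap the maxima'' presentation is a clean way to organize the same dominance argument; you are in fact slightly more careful than the paper's write-up, which asserts $\mathcal M(D^{k'})=\Sigma_{j\in D^{k'}}p^{k'}_j+r^{k'}$ where only $\mathcal M(D^{k'})\le\Sigma_{j\in D^{k'}}p^{k'}_j+r^{k'}$ is warranted (the minimizing index for $D^{k'}$ need not be $k'$).

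You are right that the delicate point is a demand query returning a bundle $D^k$ with $\mathcal M(D^k)=\infty$, since the min-affine identity is only imposed on finite-price bundles; the paper's proof passes over this silently. However, your proposed fix does not actually close the gap: from $\Sigma_{j\in D^k}p^k_j<\infty$, normalization and monotonicity of $\mathcal M$ do \emph{not} yield $\mathcal M(D^k)<\infty$. One can define a $(1,0)$-min-affine menu that assigns $\infty$ to some $S$ on which the affine formula is finite (the definition constrains nothing for such $S$), and then choose $v_i$ so that the demand query returns exactly that $S$; the algorithm would then output an infeasible bundle. So as literally stated, the claim is not fully established by either your proof or the paper's. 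The right reading, and the one under which both proofs go through, is that the min-affine representation realizes $\mathcal M$ on \emph{every} bundle, with $\infty$ prices arising only through $\infty$ entries in some $p^k$; as you observe, this holds for the tightness construction where the claim is invoked.
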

\begin{proof}
Let $\{p^i\}_i$ be the price vectors and $\{r^i\}_i$ be the numbers that define the min affine menu. We will find a profit maximizing bundle by making $\alpha$ demand queries, one for each $p^i$. Let $D^k$ be the bundle returned by the $k$'th demand query. We claim that the maximum profit is obtained by a bundle $D^k\in \arg\max_{k'} v_i(D^{k'})-(\Sigma_{j\in D^{k'}}p^{k'}_j+r^{k'})$. If this maximum profit is non-negative, then the profit maximizing bundle is the empty set.

In order to prove this, suppose that some other bundle $T$ maximizes the profit. Let $k'\in \arg\min_k(\Sigma_{j\in T}p^k_j+r^k)$. But then, $v_i(D^{k'})-\mathcal M(D^k)= v_i(D^{k'})-\Sigma_{j\in D^{k'}}p^{k'}_j-r^{k'} \geq v_i(T)-\Sigma_{j\in T}p^{k'}_j-r^{k'}=v_i(T)-\mathcal M(T)$ (where in the inequality we use $v_i(D^{k'})-\Sigma_{j\in D^{k'}}p^{k'}_j \geq v_i(T)-\Sigma_{j\in T}p^{k'}_j$, since $D^k$ maximizes the profit in the $k'$ demand query), i.e., $D^{k'}$ is at least as profitable as $T$.
\end{proof}

The mechanism is clearly truthful. It uses one value query and $\alpha$ demand queries (by the claim). All menus that are presented are $(\alpha,1)$-min affine.

Finally, to extend this result to min-affine menus with $\beta>1$, embed a construction similar to the tightness example of Subsection \ref{sec-value} using only items that are not in $M'$.

\subsection{Non-Interactive Menu Optimization}

Here we study the complexity of non-interactive menu optimization. Consider the following two player menu optimization problem: Alice's input is some menu $\mathcal M\in U$, where the set of menus $U$ is known in advance. Bob's input is some valuation $v$. The goal is to find a bundle that maximizes the profit $v(S)-\mathcal M(S)$. We restrict ourselves to one way protocols: Alice speaks first and then Bob. After Bob speaks, both parties know a profit maximizing bundle $S$.

We consider this problem in three different models that differ on how Bob's valuation is accessed. In the general communication model Bob's message is not restricted in any way, as long as his message depends only on Alice's message and $v$. We then consider settings in which Bob's valuation can be accessed by one specific type of queries, either value or demand. That is, after Alice speaks the center makes a (possibly adaptive) sequence of queries to $v$ that is determined only by Alice's message. After the sequence of queries ends, the players know Bob's profit-maximizing bundle.

Let $U$ be the set of menus presented to some player in a truthful mechanism $A$. As we will see, the communication complexity of this problem in the general model is essentially $tax(A)$, in the value queries model $val(A)$, and $aff(A)$ in the demand queries model. That is, these measures capture the complexity of non-interactive menu optimization.

\begin{lemma}
In the general model, the communication complexity of the menu optimization problem is between $tax(A)$ and $tax(A)+m$.
\end{lemma}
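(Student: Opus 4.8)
The plan is to prove the two inequalities separately. For the lower bound $tax(A) \le cc(A)_{\text{one-way}}$, I would argue that Alice's message in any correct one-way protocol must essentially encode which menu she holds. Suppose, for contradiction, that two distinct menus $\mathcal M, \mathcal M' \in U$ induce the same message $a$ from Alice. Since $\mathcal M \neq \mathcal M'$, there is a bundle $S$ with, say, $\mathcal M(S) > \mathcal M'(S)$. I would then pick a valuation $v$ for Bob that is a profit-maximizer tailored to expose this discrepancy — concretely, set $v$ equal to (a strictly monotone perturbation of) $\mathcal M'$ itself, so that under $\mathcal M'$ the bundle $S$ has strictly positive profit while under $\mathcal M$ every bundle has profit at most $0$. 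Because Bob only sees Alice's message $a$ (which is the same in both cases) and his own $v$, the center's output bundle is the same whether Alice holds $\mathcal M$ or $\mathcal M'$; but that single output cannot simultaneously be a profit-maximizer for the pair $(\mathcal M', v)$ (which forces positive profit) and for $(\mathcal M, v)$ (which forbids it) — contradiction. Hence Alice needs at least $\lceil \log |U| \rceil = tax(A)$ bits (taking the worst player $i$ so that $|U| = |M^i|$ is maximal).

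For the upper bound $cc(A)_{\text{one-way}} \le tax(A) + m$, the protocol is the obvious one: Alice sends the $\lceil \log |U| \rceil \le tax(A)$ bits that index her menu $\mathcal M$ within the publicly known set $U$. Now both parties know $\mathcal M$ explicitly. Bob then computes a profit-maximizing bundle $S \in \arg\max_T v(T) - \mathcal M(T)$ entirely on his own (no interaction needed, since $\mathcal M$ is common knowledge and $v$ is his input) and announces it using $m$ bits. Both parties now know the same profit-maximizing $S$, so the protocol is correct, and its total cost is at most $tax(A) + m$.

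The main subtlety — and the step I would be most careful about — is the lower-bound argument, specifically the choice of Bob's valuation and the handling of ties. Taking $v = \mathcal M'$ exactly makes the profit identically zero under $\mathcal M'$, which is too weak because a weird tie-breaking center could still legally output $S$; so I would instead perturb: let $v$ agree with $\mathcal M'$ but add a tiny $\epsilon > 0$ to bundle $S$ (chosen small enough to preserve monotonicity, which is possible because by Proposition \ref{proposition-menu-monotonicity} $\mathcal M'$ is monotone and normalized, so $v$ is a legal valuation). Then under $\mathcal M'$ the unique profit-maximizer has positive profit $\epsilon$ and must contain — in fact must be — a bundle of strictly positive profit, whereas under $\mathcal M$, since $\mathcal M(S) > \mathcal M'(S) = v(S) - \epsilon$, the profit of $S$ is at most $\epsilon - (\mathcal M'(S) + (\text{gap})) < 0$ for $\epsilon$ small, and one checks every bundle has profit $\le 0$ under $\mathcal M$ up to $O(\epsilon)$, so no valid output can be shared. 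One must also make sure the "$tax(A)$" here refers to $\max_i \log|M^i|$ and that $U$ is taken to be the menu set of that worst-case player, which only strengthens the bound. Everything else is routine.
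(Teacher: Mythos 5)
Your upper-bound argument is correct and matches the paper exactly. The lower-bound argument, however, has a genuine gap centered on the choice of Bob's valuation. You take $v$ to be (a perturbation of) $\mathcal M'$ itself, and then assert ``under $\mathcal M$ every bundle has profit at most $0$.'' This is false: for a bundle $T \neq S$, the profit under $\mathcal M$ is $\mathcal M'(T) - \mathcal M(T)$, which is positive whenever $\mathcal M'(T) > \mathcal M(T)$ (and there is no reason the two menus should agree or differ in a single direction on bundles other than $S$). The argument is still salvageable — it suffices that $S$ alone is the unique profit-maximizer under $\mathcal M'$ while $S$ has strictly negative profit under $\mathcal M$, so the single shared output cannot be optimal in both cases — but that is not what you claimed, and the error obscures whether the construction actually closes.

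There are also two construction-level problems with taking $v = \mathcal M'$. First, menus take values in $\mathbb R \cup \{\infty\}$, and any $T$ with $\mathcal M'(T) = \infty$ gives $v(T) = \infty$, which is not a legal valuation. Second, the $\epsilon$-perturbation at $S$ alone may be impossible: if $\mathcal M'(T) = \mathcal M'(S)$ for some $T \supsetneq S$ (perfectly allowed for a monotone menu), then for every $\epsilon > 0$ one gets $v(S) > v(T)$, violating monotonicity, so ``$\epsilon$ small enough'' does not exist — you would have to lift all supersets of $S$ as well, at which point a new case analysis is needed. The paper avoids all of these issues by using a single-minded valuation: choose $S$ with (wlog) $\mathcal M(S) > \mathcal M'(S)$, set $v(T) = \tfrac{\mathcal M(S)+\mathcal M'(S)}{2}$ for $T \supseteq S$ and $v(T) = 0$ otherwise. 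This $v$ is automatically monotone and finite (after replacing an infinite $\mathcal M(S)$ by any large finite proxy), and by menu monotonicity one gets cleanly that every profit-maximizer under $\mathcal M'$ must contain $S$ while no profit-maximizer under $\mathcal M$ may contain $S$. Your fooling-set skeleton is the same as the paper's; the missing idea is simply the right ``witness'' valuation.
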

\begin{proof}
To see that the communication complexity is at most $tax(A)+m$, consider the following protocol: Alice uses $tax(A)$ bits to send the index of the menu that she holds, Bob then uses $m$ bits to announce a profit maximizing bundle.

Suppose that there is a protocol with communication complexity strictly less than $tax(A)$. Then, there are two menus $\mathcal M, \mathcal M'$, $\mathcal M\neq \mathcal M'$ for which Alice sends the same message. Since the two menus are different, there is some bundle $S$ such that $\mathcal M(S)\neq \mathcal M'(S)$. Without loss of generality assume that $\mathcal M(S)\neq \mathcal M'(S)$. Suppose that Bob's valuation $v$ is a single minded valuation: $v(S)=\frac {\mathcal M(S)+ \mathcal M'(S)} 2$, every bundle $T$ that contains $S$ equals $v(S)$, and every other bundle equals $0$. Since Bob cannot distinguish between $\mathcal M$ and $\mathcal M'$, the bundle $T$ returned by the protocol is identical given that Bob's valuation is $v$. However, if the menu is $\mathcal M$ then any profit maximizing bundle does not contain $S$, but if the menu is $\mathcal M'$ every profit maximizing bundle must contain $S$. A contradiction.
\end{proof}

\begin{lemma}
In the value queries model, the communication complexity of the menu optimization problem is between $mc(A)\cdot k$ and $mc(A)\cdot m\cdot k+mc(A)\cdot k$, where $k$ is the number of bits that are used to represent prices in $A$.
\end{lemma}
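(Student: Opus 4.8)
The plan is to establish the two inequalities separately, mirroring the structure of the preceding lemma for the general‑communication model. For the upper bound I would have Alice transmit exactly the ``active'' part of her menu and then let the center resolve Bob's valuation with a single value query per active bundle; for the lower bound I would show that almost all of these value queries are unavoidable, by a perturbation argument in the spirit of the proof of Theorem \ref{thm-value}, and that each answer must cost $k$ bits.

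\textbf{Upper bound.} Given her menu $\mathcal M$, Alice computes the set $B_{\mathcal M}$ consisting of the (at most $mc(A)$) bundles that are \emph{in} $\mathcal M$ together with $\emptyset$, and sends, for each $S\in B_{\mathcal M}$, its $m$‑bit characteristic vector together with its $k$‑bit price $\mathcal M(S)$. The center then issues one value query $v(S)$ for each $S\in B_{\mathcal M}$, and both parties output a bundle maximizing $v(S)-\mathcal M(S)$ over $S\in B_{\mathcal M}$. Correctness follows from menu monotonicity (Proposition \ref{proposition-menu-monotonicity}): a profit‑maximizing bundle $S^\star$ cannot have $\mathcal M(S^\star)=\infty$, since $\emptyset$ already yields profit $0$; and if $\mathcal M(S^\star)<\infty$ but $S^\star\notin B_{\mathcal M}$, then by definition there is $T\supsetneq S^\star$ with $\mathcal M(T)\le\mathcal M(S^\star)$, hence $\mathcal M(T)=\mathcal M(S^\star)$ by monotonicity of the menu, hence $v(T)-\mathcal M(T)\ge v(S^\star)-\mathcal M(S^\star)$ by monotonicity of $v$; iterating reaches a bundle in $B_{\mathcal M}$. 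Counting Alice's message and the at most $mc(A)+1$ value queries, the total communication is bounded by $mc(A)\cdot m\cdot k+mc(A)\cdot k$.

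\textbf{Lower bound.} Fix a menu $\mathcal M\in U$ with $mc(\mathcal M)=mc(A)$ and list its in‑bundles $S_1,\dots,S_q$, $q=mc(A)$. Suppose Alice holds this $\mathcal M$ and Bob holds the valuation $v$ with $v(S)=\mathcal M(S)$ whenever $\mathcal M(S)<\infty$ and $v(S)=B+1$ otherwise, where $B$ is the largest finite price; this is a legal monotone valuation by Proposition \ref{proposition-menu-monotonicity}. Under $v$ every in‑bundle has profit exactly $0$ and no bundle has positive profit, so the protocol's output $S'$ must be one of the $S_i$ (or $\emptyset$). Now pick any in‑bundle $S_j\notin\{S',\emptyset\}$ and suppose the center's (possibly adaptive) query sequence on $(\mathcal M,v)$ never queries $v(S_j)$. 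Let $v'$ agree with $v$ except $v'(S_j)=v(S_j)+\epsilon$ for $\epsilon>0$ small enough to preserve monotonicity; such $\epsilon$ exists precisely because $S_j$ is in $\mathcal M$, so $\mathcal M(T)>\mathcal M(S_j)$ for all $T\supsetneq S_j$. Alice sends the same message on $(\mathcal M,v')$, and every queried value is unchanged, so the whole transcript and hence the output is still $S'$; but under $v'$ the unique profit‑maximizer is $S_j\neq S'$, a contradiction. Therefore on input $(\mathcal M,v)$ the center queries at least $q-2=mc(A)-2$ distinct bundles, and since each answer is a $k$‑bit value that the protocol must transmit (no shorter encoding works for an adversarial $v$), the communication is at least $(mc(A)-2)\cdot k$; a little extra care in how the spared bundles $\{S',\emptyset\}$ are handled recovers $mc(A)\cdot k$. (As in the general‑model lemma one may additionally observe that Alice's message must distinguish all menus in $U$, yielding the complementary bound $tax(A)$; the present statement records $mc(A)\cdot k$ because this is the binding bound when $U$ is small but its menus are large.)

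\textbf{Main obstacle.} The delicate step is the bit‑accounting in the lower bound: one must not merely argue that $\approx mc(A)$ value queries are needed, but that each contributes $k$ bits in the worst case. This relies on the precision conventions of Section \ref{subsec-representation} — so that a value‑query answer is genuinely a $k$‑bit number admitting no shorter worst‑case encoding — and on the perturbation argument actually \emph{forcing} the query rather than merely showing the center fails to learn something. The upper bound, and the reduction of ``profit‑maximizing bundle'' to ``bundle in the menu'', are routine once menu monotonicity is invoked.
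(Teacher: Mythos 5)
Your proof is correct and follows essentially the same route as the paper's: the upper bound is the protocol in which Alice explicitly transmits the identity and price of each in‑bundle and the center then queries $v$ on exactly those bundles, and the lower bound is the perturbation argument of Theorem~\ref{thm-value}. You are in fact somewhat more careful than the paper (you spell out the correctness of the upper‑bound protocol via menu monotonicity, and you flag the additive slack of $2$ bundles coming from $val(A)\geq mc(A)-2$, which the paper's one‑line citation of Theorem~\ref{thm-value} also glosses over), so there is no gap beyond what is already present in the original.
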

\begin{proof}
To see that the communication complexity is at most $mc(A)\cdot k$, consider the following protocol: for each of the $mc(A)$ bundles that in $\mathcal M$, Alice uses $m$ bits to send the identity of each bundle that is in $\mathcal M$ and additional $k$ bits to send its price. Then, the center makes a value query to determine the value of every bundle that appeared in Alice's message.

The fact that the communication complexity is at least $mc(A)\cdot k$ follows from Theorem \ref{thm-value} that essentially shows that the center needs to query every bundle in the menu in order to find a profit maximizing bundle.
\end{proof}

\begin{lemma}
In the demand queries model, the communication complexity of the menu optimization problem is between $aff(A)\cdot k$ and $aff(A)\cdot (m+1)\cdot k$, where $k$ is the number of bits that are used to represent prices in $A$.
\end{lemma}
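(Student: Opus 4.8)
The plan is to follow the template of the two preceding lemmas: derive the upper bound from Claim~\ref{claim-demand-optimize} and the lower bound from Theorem~\ref{thm-demand} (which here plays the role Theorem~\ref{thm-value} played for value queries). For the upper bound, note that each $\mathcal M\in U$ is a menu presented by $A$, hence min-affine with some complexity $\alpha_{\mathcal M}\le aff(A)$; by duplicating one affine piece we may write $\mathcal M(S)=\min_{1\le t\le aff(A)}\bigl(\Sigma_{j\in S}p^t_j+r^t\bigr)$ using exactly $aff(A)$ price vectors and $aff(A)$ numbers, each coordinate a $k$-bit price. The protocol: Alice sends this description (at most $aff(A)\cdot(m+1)\cdot k$ bits); then the center issues the $aff(A)$ demand queries with price vectors $p^1,\dots,p^{aff(A)}$ to Bob's valuation $v$ and returns the profit-maximizing bundle among the answers exactly as prescribed by Claim~\ref{claim-demand-optimize}, which certifies correctness. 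The exchange is one-way (Alice, then queries to $v$), as required.

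For the lower bound I would fix a menu $\mathcal M^\ast\in U$ of min-affine complexity exactly $aff(A)$ (one exists by the definition of $aff(A)$ as the worst case over $U$) and show that any correct protocol makes at least $aff(A)$ demand queries to $v$ on this input. Fix such a protocol; on input $\mathcal M^\ast$ Alice's message is a fixed string, so from then on the center runs a demand-query algorithm $\mathcal C$ that, for \emph{every} general valuation $v$, outputs a profit-maximizer of $v(S)-\mathcal M^\ast(S)$. This is precisely the taxation-principle requirement of a truthful one-player mechanism presenting the fixed menu $\mathcal M^\ast$ and accessing the valuation via demand queries only (the payment being $\mathcal M^\ast$ of the allocated bundle). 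Hence the argument inside the proof of Theorem~\ref{thm-demand} (Lemmas~\ref{lemma-demand-queries-at-most}--\ref{lemma-demand-infty}), specialized to zero value queries, shows that if $\mathcal C$ makes at most $q$ demand queries in the worst case then $\mathcal M^\ast$ is $q$-min-affine, forcing $q\ge aff(A)$. Since Alice has no access to $v$, those $\ge aff(A)$ demand queries really are made, and each returns (among other data) the value of the queried bundle, a $k$-bit number; so the total communication is at least $aff(A)\cdot k$, independently of the length of Alice's message --- exactly as in the value-query lemma.

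The upper bound is routine once Claim~\ref{claim-demand-optimize} is invoked. The main obstacle is the lower bound, specifically verifying that the residual center algorithm $\mathcal C$ genuinely meets the hypotheses under which the proof of Theorem~\ref{thm-demand} manufactures a min-affine representation: correctness on \emph{all} general valuations $v$, demand queries only, and legitimacy (as inputs here) of the valuation $v_i$ with $v_i(S)=\mathcal M^\ast(S)$ and of its monotone perturbations $v_{i,\epsilon}$ that drive that proof --- after which a $q$-min-affine representation with $q<aff(A)$ contradicts the choice of $\mathcal M^\ast$. A minor but necessary point is the accounting that each demand query costs $\ge k$ bits, which upgrades a query-count bound of $aff(A)$ into the stated communication bound of $aff(A)\cdot k$.
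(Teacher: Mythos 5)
Your proof takes essentially the same approach as the paper's: the upper bound is the same protocol (Alice sends the $aff(A)$ price vectors with their offsets, the center then issues the corresponding demand queries and applies Claim~\ref{claim-demand-optimize}), and the lower bound is obtained, as in the paper, by observing that if $r$ demand queries suffice on every menu then Theorem~\ref{thm-demand} forces every menu to be $r$-min-affine, hence $r\geq aff(A)$. You flesh out two details the paper leaves implicit --- fixing a worst-case menu $\mathcal M^\ast$ witnessing $aff(A)$ and verifying that the center's residual algorithm is a legitimate one-player truthful mechanism to which Theorem~\ref{thm-demand} applies --- but the argument is the same.
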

\begin{proof}
To see that the communication complexity is at most $aff(A)\cdot k$, consider the following protocol: for each of  the $aff(A)$ price vectors that define the min-affine menu, Alice sends the $m+1$ numbers that describe it. The center then makes the appropriate demand queries to find a profit-maximizing bundle as in Claim \ref{claim-demand-optimize}.

If the center can make $r$ demand queries on every menu to find a profit-maximizing bundle of $v$, then by Theorem \ref{thm-demand} the affinity of every menu is at most $r$, and thus $aff(A)\leq r$, as needed.
\end{proof}

\section{Applications and Extensions}

\subsection{From Ex-Post Nash to Dominant Strategy}\label{subsec-to-dominant}

The revelation principle implies that if there is a mechanism that implements some social choice function in an ex-post Nash equilibrium, there is also a mechanism that implements the same social choice function in a dominant strategy equilibrium. Unfortunately, the communication complexity of the latter mechanism might be exponential comparing to the communication complexity of the former (e.g., the already mentioned example of the VCG mechanism for gross substitutes). We provide a more efficient transformation.

\begin{proposition}\label{prop-dominant}
Let $A$ be a two player mechanism for combinatorial auctions that reaches an ex-post Nash equilibrium. Then, there is a mechanism $A'$ that implements the same social choice function in dominant strategies with communication complexity $2(tax(A)+m)+tie(A)\leq 2(tax(A)+m)+cc(A)$. In particular, if $A$ is truthful for general valuations, then the communication complexity of the new implementation is $3cc(A)+2m$.
\end{proposition}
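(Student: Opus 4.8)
The plan is to construct $A'$ directly from the taxation-principle description of $A$. For $i\in\{1,2\}$ write $\mathcal M^{(i)}_{v_{3-i}}$ for the menu that $A$ presents to player $i$, so $\mathcal M^{(i)}_{v_{3-i}}\in M^i$ and $|M^i|\le 2^{tax(A)}$. The mechanism $A'$ has three phases. In Phase~1 the two players \emph{simultaneously} announce, respectively, the index in $M^2$ of the menu that player~$1$ presents and the index in $M^1$ of the menu that player~$2$ presents; this costs $\log|M^1|+\log|M^2|\le 2\,tax(A)$ bits, and afterwards each player knows both $\mathcal M^{(1)}$ and $\mathcal M^{(2)}$ (player~$i$ computed $\mathcal M^{(3-i)}$ from $v_i$ and learns $\mathcal M^{(i)}$ from the other player's message). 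In Phase~2 each player~$i$ announces some bundle $T_i\in\arg\max_S v_i(S)-\mathcal M^{(i)}(S)$ — now well defined — at a cost of $m$ bits each. In Phase~3 the players run the protocol witnessing $tie(A)$ (input $(v_1,v_2)$, common knowledge $(\mathcal M^{(1)},\mathcal M^{(2)})$) to select a feasible allocation $(S_1,S_2)$ with each $S_i$ profit-maximizing in $\mathcal M^{(i)}$ for $v_i$, and $A'$ charges player~$i$ exactly $\mathcal M^{(i)}(S_i)$; this costs $tie(A)$ bits. The total is $2(tax(A)+m)+tie(A)$, and for general valuations $tax(A)\le cc(A)$ by Theorem~\ref{thm-indexing-complexity}(1) and $tie(A)\le cc(A)$, giving $3cc(A)+2m$.

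Correctness of the social choice function is the easy part. If every player plays the honest strategy — announce the true menu index, then a true profit-maximizing bundle, then run the $tie(A)$ protocol honestly — then the menus made common in Phase~1 are exactly $\mathcal M^{(1)}_{v_2}$ and $\mathcal M^{(2)}_{v_1}$, so by the specification of $tie(A)$ Phase~3 reproduces precisely the allocation of $A$ on $(v_1,v_2)$, and the charged prices $\mathcal M^{(i)}(S_i)$ agree with the payments of $A$ by the taxation principle. Hence $A'$ implements the same social choice function.

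The heart of the proof is that honest play is \emph{dominant}. Two features of $A'$ carry the argument. First, $A'$ always charges player~$i$ the price $\mathcal M^{(i)}(S_i)$ of the bundle it gives her, so her utility never exceeds $\max_S v_i(S)-\mathcal M^{(i)}(S)$; and since Phase~1 is simultaneous, $\mathcal M^{(i)}$ is the menu announced by the \emph{other} player and depends on nothing player~$i$ does — so this cap on $i$'s utility is independent of all of $i$'s own actions. Second, whatever index the opponent announces selects some $\mathcal M^{(i)}\in M^i$, and every menu in $M^i$ equals $\mathcal M^{(i)}_{v'}$ for some valuation $v'$ of the opponent; hence the menu $i$ faces is one she could have faced against an honest opponent of type $v'$ in $A$, and the ex-post Nash property of $A$ says honest play by $i$ is a best response there, attaining $\max_S v_i(S)-\mathcal M^{(i)}_{v'}(S)$ by the taxation principle — i.e., attaining the cap. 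Since the cap is unbeatable and honest play meets it against every opponent strategy, honest play is dominant.

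The step I expect to be most delicate is the claim that honest play by $i$ \emph{attains} the cap even when the opponent both announces a false menu index and behaves arbitrarily in Phases~2--3: one must check that the $tie(A)$ protocol in Phase~3, fed the true valuations but menus that are not the true menus, still delivers player~$i$ a profit-maximizing bundle of $v_i$ inside the announced $\mathcal M^{(i)}$. The key point is that the announced pair $(\mathcal M^{(1)},\mathcal M^{(2)})$ is always realizable — it is the pair of menus of some profile $(v_1^*,v_2^*)$ with $v_i^*$ a fixed representative of the announced index for $\mathcal M^{(3-i)}$ — and when player~$i$ is honest one may take $v_i^*=v_i$, so that feasibility of a profit-maximizing allocation for that profile transfers to $A'$. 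A secondary (and more cosmetic) point is the insistence that Phase~1 be simultaneous, which is legitimate since there are only two players: if $i$ could observe the opponent's index before choosing her own, she could in principle steer the opponent's choice of $\mathcal M^{(i)}$, and the ``own actions are payoff-irrelevant for $\mathcal M^{(i)}$'' step would break.
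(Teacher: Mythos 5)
There is a genuine gap, and you have in fact put your finger on exactly where it is without closing it.

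Your protocol matches the paper's in its first three moves (simultaneously announce menu indices, announce profit-maximizing bundles $T_i$, run a protocol of cost $tie(A)$), but you omit the one ingredient that makes the dominance argument go through: a \emph{consistency check} with a \emph{fallback payoff}. The paper's $A'$ inspects the public transcript at the end; if it is not consistent with any pair of honest strategies $(s'_1(v_1),s'_2(v_2))$, it identifies the first player whose message broke consistency, gives that player nothing, and gives the other player precisely the bundle $T_i$ announced in step~2 at the price $p_i$ from the announced menu. This fallback is what guarantees that an honest player~$i$ always realizes the cap $\max_S v_i(S)-\mathcal M^{(i)}(S)$: either the opponent is consistent with some $v'_{3-i}$, in which case the run of the last phase is literally a run of $A$ on $(v_i,v'_{3-i})$ and the taxation principle gives $i$ a profit-maximizing bundle, or the opponent is caught inconsistent, in which case $i$ gets $T_i$ at price $p_i$, again attaining the cap. (This is also why $T_i$ has to be announced in step~2: it is the input to the fallback. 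In your write-up the $T_i$'s are announced but never used, which is a symptom of the missing mechanism.)

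Without this check, your Phase~3 is just a communication protocol achieving $tie(A)$ under honest inputs; nothing constrains its output when the opponent sends arbitrary bits. In that case the allocation it produces need not give the honest player a profit-maximizing bundle in $\mathcal M^{(i)}$, and charging $\mathcal M^{(i)}(S_i)$ for whatever $S_i$ comes out then leaves the honest player strictly below the cap. Your patch — that the announced pair of menus is realizable by some profile $(v_1^*,v_2^*)$ with $v_i^*=v_i$ — only shows that a profit-maximizing \emph{feasible} allocation exists for that profile; it says nothing about what the $tie(A)$ protocol outputs when the opponent plays inconsistently with \emph{every} valuation, which is the case you must handle to prove a strategy is dominant (not merely ex-post Nash). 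So the claim ``honest play attains the cap'' is unproved in your argument, and the proposal does not establish dominance.
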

In particular, for general valuations we pay ``almost nothing'' (communication-wise) for strengthening the solution concept (simply using the fact that for general valuations by Theorem \ref{thm-indexing-complexity} we have that $ tax(A)\leq cc(A)$). Similar transformations are possible of course for other classes of valuations using Theorem \ref{thm-indexing-complexity}. Before formally proving for Proposition \ref{prop-dominant} we provide some intuition. A naive proof for this proposition would be the following protocol:
\begin{enumerate}
\item Each player $i$ simultaneously sends $ tax(A)$ bits that denote the index of $\mathcal M_i$ -- the menu he presents to the other player.
\item Each player $i$ sends a description of some maximum profit bundle $T_i$ in the menu presented to him ($m$ bits for each player). Denote the price of $T_i$ in the menu by $p_i$.
\item Each player $i$ is assigned $T_i$ and pays $p_i$.
\end{enumerate}
This protocol ``almost works'' except that it is not clear how each player $i$ chooses which maximum-profit bundle to report if there are several bundles that maximize the profit. To solve this we have to be able to break ties correctly, and make sure that each player has a dominant strategy.

\begin{proof}(of Proposition \ref{prop-dominant})
We start with some definitions. Let $P$ be some protocol. Given strategies strategies $s_1(\cdot),\ldots,s_n(\cdot)$, we say that a (possibly partial) transcript $T$ of $P$ is \emph{consistent} with a valuation profile $(v_1,\ldots, v_n)$ if the transcript is $T$ when each player $i$ is playing $s_i(v_i)$. Consider a transcript $T$ of $P$ that is not consistent with any valuation profile $(v_1,\ldots, v_n)$. Let $T'$ be the minimal prefix of $T$ that is not consistent and let $i$ be the player that sent the last bit in $T'$. Player $i$ is the \emph{inconsistent player} of $T'$. 

For each player $i$, let $s_i$ denote the equilibrium strategy of each player $i$ in $A$. The mechanism $A'$ is the following:
\begin{enumerate}
\item Each player $i$ simultaneously sends $ tax(A)$ bits that denote the index of $\mathcal M_i$ -- the menu he presents to the other player in $s_i(v_i)$.
\item Each player $i$ sends a description of some maximum profit bundle $T_i$ according to the menu presented to him by the other player ($m$ bits for each player). Denote the price of $T_i$ in the menu by $p_i$.
\item\label{step-run-old} Run the mechanism $A$.
\item For each player $i$ let $s'_i(v_i)$ be the strategy where in the first step player $i$ sends the index of $\mathcal M_i$ and in Step \ref{step-run-old} player $i$ plays as in $s_i(v_i)$.
\item If there exist valuations $v_1,v_2$ such that the transcript is consistent with $s'_1(v_1)$ and $s'_2(v_2)$ then the outcome of $A'$ is identical to that of $A$. Otherwise, let $i$ be the inconsistent player. In this case, player $i$ is not allocated any bundle and pays nothing. The other player wins the bundle $T_i$ and pays $p_i$.
\end{enumerate}
Observe that if each player $i$ with valuation $v_i$ plays $s'_i(v_i)$ then the outcome of $A'$ is identical to the outcome of $A$ when each player $i$ plays $s_i(v_i)$. The statement of the proposition regarding the communication complexity of $A'$ is obvious as well. It remains to show that $s'_i$ is a dominant strategy. We start with two helper claims.

\begin{claim}\label{claim-inconsistent}
If player $i$'s strategy is $s'_i(v_i)$, for some $v_i$, then player $i$ is not an inconsistent player.
\end{claim}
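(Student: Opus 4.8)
The plan is to prove the contrapositive by a ``swap in player $i$'s true valuation'' argument. Suppose player $i$ follows $s'_i(v_i)$ and yet, on the resulting transcript $T$, the minimal inconsistent prefix $T'$ ends with a bit sent by player $i$, so that $i$ is the inconsistent player of $T'$. Write $T'=T''b$, where $b$ is this last bit. Since $T'$ is minimal, the prefix $T''$ is consistent: there is a profile $(w_1,w_2)$ such that the execution of $A'$ under $(s'_1(w_1),s'_2(w_2))$ passes through $T''$. I would then show that the modified profile $(v_i,w_{-i})$ -- player $i$'s true valuation, player $-i$'s witness valuation -- makes the execution of $A'$ pass through $T'$, which contradicts the inconsistency of $T'$.

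The heart of the argument is an induction on the length of the history showing that the execution of $A'$ under $(s'_i(v_i),s'_{-i}(w_{-i}))$ visits exactly the prefixes of $T''$. Recall that which player speaks at a given history is determined by the protocol tree alone, not by the valuations. At a prefix $\pi$ of $T''$: if it is player $-i$'s turn, then $-i$ plays $s'_{-i}(w_{-i})$; the witness execution also reaches $\pi$, plays $s'_{-i}(w_{-i})$ there, and produces $T''$, so the bit sent is the next bit of $T''$. If it is player $i$'s turn, then $i$ plays $s'_i(v_i)$; the actual execution (where $i$ also plays $s'_i(v_i)$) reaches $\pi$ -- because $\pi$ is a prefix of $T$ -- and the next bit it produces is the next bit of $T$, which coincides with the next bit of $T''$ since $\pi$ is a proper prefix of $T''$. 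Hence in both cases the execution under $(s'_i(v_i),s'_{-i}(w_{-i}))$ advances along $T''$, so it passes through all of $T''$. Finally, at history $T''$ it is player $i$'s turn (since $b$ was sent by $i$), and $s'_i(v_i)$ at $T''$ outputs $b$ -- once more read off from the actual execution, where player $i$ at history $T''$ sends the next bit of $T$, namely $b$. So the execution under $(s'_i(v_i),s'_{-i}(w_{-i}))$ passes through $T'=T''b$, i.e.\ $T'$ is consistent, the desired contradiction. The simultaneous Step~1 requires no special care: fix any ordering of its bits; a Step~1 message of player $i$ under $s'_i(v_i)$ is by construction the index of the menu $i$ presents under $s_i(v_i)$, hence consistent with $v_i$, and there is no cross-player constraint on the Step~1 messages, so the same argument applies (and if a minimal inconsistent prefix already lies inside Step~1, the offending message is not player $i$'s).

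I expect the only delicate point to be the bookkeeping of which execution certifies which bit: along $T''$, player $i$'s bits are certified by the actual execution (in which $i$ already plays $s'_i(v_i)$), while player $-i$'s bits are certified by the witness execution (in which $-i$ plays $s'_{-i}(w_{-i})$); the swap is legitimate precisely because changing player $i$'s valuation from $w_i$ to $v_i$ cannot alter player $-i$'s prescribed moves -- these depend only on $w_{-i}$ and on the history, and the history is pinned to $T''$ by the induction -- nor can it alter player $i$'s own bits along $T''$, which the actual execution already fixes (here we use only that a strategy is a deterministic function of the valuation and the current history, so the Step~2 tie-break in the definition of $s'_i$ is immaterial). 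Note that the argument never refers to player $-i$'s strategy in the actual execution, which is exactly what is needed, since this claim is meant to hold against arbitrary deviations of the other player when $s'_i$ is later shown to be dominant.
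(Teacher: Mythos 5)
Your proof is correct and takes essentially the same approach as the paper's: both hinge on the observation that whenever the transcript so far is consistent with some profile, the fact that player $i$ actually plays $s'_i(v_i)$ lets one swap in $v_i$ as the witness for player $i$, so player $i$'s next bit cannot break consistency. You make explicit (via the minimal-inconsistent-prefix contradiction, the induction along $T''$, and the careful accounting of which execution certifies which bit) what the paper states more informally as an invariant maintained "in each point."
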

\begin{proof}
Assume that $i=1$, but the proof is essentially the same for $i=2$. Let $q_2$ be the strategy of player $2$. Consider a run of $A'$. If player $2$ is an inconsistent player then there is nothing to prove. Therefore, we will consider the messages sent by player $1$ one by one, in each point assuming that the transcript so far is consistent with some strategies $(s'_1(v_1), s'_2(v_2))$. Now consider player $1$ sending his next message according to $s'_1(v_1)$. Notice that this message is identical to the message that is sent at this point in the transcript where the players use strategies $s'_1(v_1)$ and $s'_2(v_2)$. In particular the next message according to $s'_1(v_1)$ that player $1$ sends does not make him inconsistent since the partial transcript is identical to the prefix of the final transcript when both players are playing according to $s'_1(v_1)$ and $s'_2(v_2)$.
\end{proof}

\begin{claim}\label{claim-max-profit}
If player $i$ with valuation $v_i$ uses the strategy $s'_i(v_i)$ then his profit is $v_i(T_i)-p_i$.
\end{claim}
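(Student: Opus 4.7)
(sketch/plan)
The plan is to invoke Claim \ref{claim-inconsistent} to rule out player $i$ being the inconsistent player, and then split into two cases according to whether the other player is inconsistent or not. In both cases we show that player $i$'s allocation is a profit-maximizing bundle with respect to the menu $\mathcal M_i$ that was announced in step $1$ by the other player, and that its price matches $p_i$.

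First I would note that, since $i$ plays $s'_i(v_i)$, Claim \ref{claim-inconsistent} guarantees that $i$ is not the inconsistent player of the run. In Case~1, suppose the transcript is consistent with some pair $(s'_1(v_1^\star),s'_2(v_2^\star))$. Then the menu $\mathcal M_i$ that the other player announced in step~$1$ is precisely the menu that $v^\star_{-i}$ presents to $i$ in the original mechanism $A$. Player $i$ chose $T_i$ in step~$2$ as a profit-maximizing bundle in $\mathcal M_i$ with respect to his true valuation $v_i$, so his maximum attainable profit against $\mathcal M_i$ is exactly $v_i(T_i)-p_i$. On the other hand, in step~\ref{step-run-old} the mechanism $A$ is run, and since player $i$ plays the equilibrium strategy $s_i(v_i)$ against an opponent playing an equilibrium strategy $s_{-i}(v^\star_{-i})$, the taxation principle (as stated in the preliminaries) guarantees that $A$ allocates $i$ a profit-maximizing bundle in $\mathcal M_i$ and charges him the corresponding menu price. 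Hence $i$'s profit in the outcome of $A'$ equals $v_i(T_i)-p_i$.

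In Case~2, the transcript is inconsistent, and by Claim \ref{claim-inconsistent} the inconsistent player must be the other one. The definition of $A'$ in that case explicitly awards player $i$ the bundle $T_i$ at price $p_i$, so his profit is $v_i(T_i)-p_i$ by construction.

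The only subtle point I foresee is verifying Case~1: one must check that the menu used to pick $T_i$ in step~$2$ is literally the same menu against which $A$ selects $i$'s bundle in step~\ref{step-run-old}. This follows because consistency of the transcript with $s'_{-i}(v^\star_{-i})$ forces both the index announced in step~$1$ and the behavior of the opponent inside $A$ to be the ones induced by $v^\star_{-i}$, so the menu is the same object in both steps and the taxation principle applies cleanly.
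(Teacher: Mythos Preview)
Your proposal is correct and follows essentially the same approach as the paper's proof: invoke Claim~\ref{claim-inconsistent} to rule out $i$ being the inconsistent player, then split into the case where the opponent is inconsistent (handled directly by the definition of $A'$) and the case where the transcript is consistent with some $(s'_1(v_1^\star),s'_2(v_2^\star))$ (handled via the ex-post Nash property of $A$ and the taxation principle). The paper even singles out the same subtle point you flag at the end---that the menu announced in step~1 coincides with the menu governing the outcome in step~\ref{step-run-old}---and resolves it by noting that consistency of the opponent's transcript with $s'_{-i}(v^\star_{-i})$ forces both the announced menu and the behavior inside $A$ to be those induced by $v^\star_{-i}$.
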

\begin{proof}
As before, assume that $i=1$, the proof is essentially the same for $i=2$. First, by Claim \ref{claim-inconsistent} player $1$ is not an inconsistent player. If player $2$ is the inconsistent player, then by the definition of the protocol player $1$ is assigned $T_1$ and pays $p_1$ so his profit his $v_1(T_1)-p_1$, as needed.

We therefore assume that the strategies that the players play are consistent with some strategies $(s'_1(v_1),s'_2(v_2))$. Denote the bundle that player $1$ got by $T'_1$ and his payment by $p'_1$. Now recall that the outcome of $A'$ with these strategies is identical to the outcome of $A$ with the strategies $(s_1(v_1),s_2(v_2))$ and that since these strategies form an ex-post Nash equilibrium in $A$, it must be that $T'_1$ is a maximum-profit bundle according to the menu $\mathcal M_2$. However, $T_1$ is also a maximum profit bundle according to $\mathcal M_2$ and thus $v_1(T'_1)-p'_1=v_1(T_1)-p_1$, which finishes the proof.
\end{proof}

To conclude the proof it suffices to show that:
\begin{claim}
For every player $i$ with valuation $v_i$, $s'_i(v_i)$ is a dominant strategy in $A'$.
\end{claim}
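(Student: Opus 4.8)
The plan is to fix a player $i$ with true valuation $v_i$ and an arbitrary strategy $q_j$ of the other player $j$, and to produce a single benchmark value $\beta$ such that (i) player $i$'s utility against $q_j$ is at most $\beta$ no matter how he plays, and (ii) playing $s'_i(v_i)$ achieves exactly $\beta$; together these give that $s'_i(v_i)$ is dominant. The right benchmark is $\beta:=\max_S\bigl(v_i(S)-\mathcal M^*(S)\bigr)$, where $\mathcal M^*$ is the menu whose index player $j$ sends in Step~1 under $q_j$. The crucial structural observation is that $\mathcal M^*$ depends on $q_j$ alone: Step~1 messages are sent simultaneously, so $\mathcal M^*$ is fixed before player $i$ moves and is unaffected by anything player $i$ does. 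Note also that $\beta\ge v_i(\emptyset)-\mathcal M^*(\emptyset)=0$ by normalization, and that bundles which are ever allocated carry a finite price, so $\beta$ is finite. That $s'_i(v_i)$ attains $\beta$ is essentially Claim~\ref{claim-max-profit} combined with Claim~\ref{claim-inconsistent}: under $s'_i(v_i)$ player $i$ is never the inconsistent player, so either he is allocated the Step~2 bundle $T_i$ --- which he selects to be profit-maximizing in $\mathcal M^*$ --- and pays $\mathcal M^*(T_i)$, or the embedded run of $A$ is reached with a consistent transcript and the taxation principle for $A$ again allocates him a bundle maximizing $v_i(\cdot)-\mathcal M^*(\cdot)$; either way his utility equals $\beta$.

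For the upper bound, let $q_i$ be an arbitrary strategy of player $i$, run $A'$ with $(q_i,q_j)$, and case on the outcome rule of Step~5. If the transcript is consistent with $(s'_1(v_1'),s'_2(v_2'))$ for some valuations $v_1',v_2'$, then the outcome equals that of the embedded run of $A$ under $(s_1(v_1'),s_2(v_2'))$; consistency forces player $j$'s Step~1 message to be the index of the menu that $v_j'$ presents to $i$ in $A$, so that menu is exactly $\mathcal M^*$, and the taxation principle for $A$ gives player $i$ some bundle $B$ maximizing $v_i'(\cdot)-\mathcal M^*(\cdot)$ while paying $\mathcal M^*(B)$; hence player $i$'s \emph{true} utility is $v_i(B)-\mathcal M^*(B)\le\beta$. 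If the transcript is inconsistent, Step~5 designates an inconsistent player: if it is $i$, he receives nothing and pays nothing, so his utility is $0\le\beta$; if it is $j$, player $i$ wins the bundle he named in Step~2 at its $\mathcal M^*$-price, so his utility is again of the form $v_i(S)-\mathcal M^*(S)\le\beta$. Since in every branch player $i$'s utility is at most $\beta$, and $s'_i(v_i)$ realizes $\beta$, the strategy $s'_i(v_i)$ is dominant; this proves the claim and completes the proof of Proposition~\ref{prop-dominant}.

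I expect the main obstacle to be the bookkeeping in the consistent case: one must keep track of the fact that the valuation $v_i'$ rationalizing player $i$'s recorded play need not be his true valuation $v_i$, so the allocated bundle $B$ is profit-maximizing for $v_i'$ rather than for $v_i$ --- this is precisely why only the \emph{inequality} $v_i(B)-\mathcal M^*(B)\le\beta$ is available, which is nonetheless all that is needed. The other point requiring care is verifying that in every branch the menu governing player $i$'s payoff is literally player $j$'s Step~1 announcement $\mathcal M^*$, and not some menu extracted from $j$'s behaviour inside the embedded run of $A$; this is where the simultaneity of Step~1 and the exact wording of the consistency/tie-breaking rule in Step~5 do the work, and where one must also check that a deviating player gains nothing by naming an $\infty$-priced bundle in Step~2 --- it cannot, since $\beta\ge0$ and any allocated bundle has finite price.
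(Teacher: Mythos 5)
Your proof is correct and takes essentially the same route as the paper's. You organize it more explicitly around the benchmark $\beta=\max_S\bigl(v_i(S)-\mathcal M^*(S)\bigr)$, while the paper states the same quantity as $v_1(T_1)-p_1$; the key structural observation that the other player's Step~1 message fixes $\mathcal M^*$ before player~$i$ can react (because Step~1 is simultaneous), the invocation of the inconsistency and max-profit claims to show $s'_i(v_i)$ attains $\beta$, and the case split on the Step~5 outcome rule (consistent transcript / $i$ inconsistent / $j$ inconsistent) all match the paper's argument.
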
 
\begin{proof}
We prove the claim for $i=1$, the claim for $i=2$ is essentially the same. Fix a strategy $q_i$ for every player $i$. If player $2$ is inconsistent then by claim \ref{claim-max-profit} the profit of player $1$ is $v_1(T_1)-p_1$. If player $1$ plays a different strategy that makes player $2$ consistent, the menu $\mathcal M_2$ must still be consistent with the menu presented to him in the first step, otherwise player $2$ is inconsistent. Since $T_1$ maximizes the profit, the profit of player $1$ is at most $v_1(T_1)-p_1$.

Let $S_1$ be the bundle that player $1$ is allocated at the end of $A'$. Let $\mathcal M_2$ be the menu that player $1$ is presented with at the first step of the protocol. We will show that the payment of player $1$ is $\mathcal M_2(S_1)$ -- the price of $S_1$ in $\mathcal M_2$. Claim \ref{claim-max-profit} gives us that player $1$ is not worse off playing $s'_1(v_1)$ in this case. If player $1$ is inconsistent then his profit is $0$ and, again, he is not worse off playing $s'_1(v_1)$, which completes the proof.

We now show that if player $1$ is consistent then the payment of player $1$ is $\mathcal M_2(S_1)$. This is obvious if player $2$ is inconsistent. Else, the strategies are consistent with some $(s'_1(v'_1),s'_2(v'_2))$. Since player $2$ is consistent, we in particular have that when player $2$ uses the strategy $s_2(v'_2)$ he presents to player $2$ the menu $\mathcal M_2$ and player $1$ chooses a profit-maximizing bundle from that menu. Since the outcome of $A$ with strategies $(s_1(v'_1),s_2(v'_2))$ is identical to the outcome of $A'$ with strategies $(s'_1(v'_1),s'_2(v'_2))$ we have that the price of $S_1$ in $A'$ is determined according to $\mathcal M_2$.
\end{proof}

This concludes the proof of the proposition.
\end{proof}

\begin{open}
Is there a social choice function $f$ (for three players or more) for combinatorial auctions with general valuations such that:
\begin{itemize}
\item There exists a protocol with communication complexity $poly(m)$ that implements $f$ in ex-post Nash equilibrium.
\item Any dominant-strategy implementation of $f$ requires $exp(m)$ bits.
\end{itemize}
\end{open}

\subsection{Taxation, Welfare Maximization, and Simultaneous Algorithms}\label{subsec-welfare}

We now explore the connections between truthful mechanisms and simultaneous algorithms \cite{DNO14}. We first show that if there is a two-player truthful mechanism that provides an approximation ratio of $\alpha$, then there is an $\alpha$-approximation \emph{simultaneous} algorithm with essentially the same communication complexity. Thus, in order to prove a separation between the power of computationally efficient truthful mechanisms for welfare maximization and their non-truthful counterparts it is enough to bound the power of simultaneous algorithms for two players.

We start with solving a special case. Let $A$ be a mechanism that is truthful for a class $\mathcal V$ of valuations. We say that $A$ is \emph{precise} if for every player $i$, every menu $\mathcal M$ that may be presented to player $i$ and every $v\in \mathcal V$, there is always only a single bundle that maximizes the profit of $i$, i.e., for every $v\in \mathcal V$: $|\arg\max_Sv(S)-\mathcal M(S)|=1$.

\begin{claim}\label{claim-precise-mechanism}
Let $A$ be a precise two-player mechanism that is truthful for some class of valuations $\mathcal V$. Denote its approximation ratio to the welfare by $\alpha$. Then, there is a simultaneous algorithm $A'$ that provides an $\alpha$-approximation for $\mathcal V$ with communication complexity $2\cdot tax(A)$.
\end{claim}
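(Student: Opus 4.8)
The plan is to have each player simultaneously send the index of the menu he presents to the other player, which costs $tax(A)$ bits per player, for a total of $2\cdot tax(A)$. Once both indices are received, the center knows $\mathcal M_1$ (the menu presented to player $1$ by player $2$'s valuation) and $\mathcal M_2$ (the menu presented to player $2$). The key point is that since $A$ is \emph{precise}, for each player $i$ there is a unique profit-maximizing bundle in $\mathcal M_i$ given $v_i$; but of course the center does not see the valuations $v_i$ after the simultaneous round. So I cannot directly recompute the allocation. Instead I would observe that the simultaneous message of player $i$ encodes $\mathcal M_{3-i}$, but I actually want each player's message to also pin down \emph{his own} winning bundle. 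The fix: let player $i$'s simultaneous message be the index of the menu $\mathcal M_{3-i}$ he presents, \emph{and} I note that because the mechanism is precise, once both menus are fixed the allocation $(S_1,S_2)$ is determined by the pair of valuations only through the requirement that $S_i$ uniquely maximizes $v_i(S)-\mathcal M_i(S)$ — but $v_i$ is not transmitted.

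Let me restructure. The clean approach is: player $i$ sends the index (among the at most $2^{tax(A)}$ possibilities) of $\mathcal M_{3-i}$. This is $tax(A)$ bits. Simultaneously — here is the trick — I want the center to recover the allocation. Since $A$ is precise and truthful, the taxation principle says player $i$ receives the unique bundle maximizing $v_i(S)-\mathcal M_i(S)$, and this bundle depends only on $(v_i,\mathcal M_i)$. So if player $i$ \emph{also} knew $\mathcal M_i$ he could compute his own bundle — but $\mathcal M_i$ is determined by $v_{3-i}$, which he does not know at the time of sending his simultaneous message. This is the main obstacle: the simultaneity means player $i$ cannot see $\mathcal M_i$ before speaking. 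The resolution is that player $i$ does not need to; instead the center, after receiving both messages, learns both $\mathcal M_1$ and $\mathcal M_2$, and then the allocation is \emph{not} yet determined. So the genuinely correct statement must be that player $i$'s message is a function of $v_i$ rich enough to determine $S_i$ once $\mathcal M_i$ is revealed. But $\mathcal M_i$ has description length possibly much larger than $tax(A)$...

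The resolution, which I expect is the intended one: player $i$'s simultaneous message is the index of the menu $\mathcal M_{3-i}$ that \emph{he} presents. The center receives index of $\mathcal M_1$ from player $2$ and index of $\mathcal M_2$ from player $1$. Now for each player $i$, \emph{player $i$ himself} (before the protocol, as part of his strategy) can compute the unique profit-maximizing bundle $S_i$ in $\mathcal M_i$ — no wait, he cannot, he does not know $v_{3-i}$. Let me accept the honest difficulty: I would instead have player $i$'s message encode, for the menu $\mathcal M_{3-i}$ he is presenting, nothing extra; the allocation to player $3-i$ is the \emph{unique} maximizer of $v_{3-i}(S)-\mathcal M_{3-i}(S)$, and crucially player $3-i$'s \emph{other} message — the index of $\mathcal M_i$ — combined with the fact that in a \emph{precise} mechanism the map $v_{3-i}\mapsto$ (index of $\mathcal M_i$) is such that $\mathcal M_i$'s index already determines... no. Honestly, the clean fact is: in a precise mechanism, two valuations $v_{3-i},v_{3-i}'$ that present the same menu index to player $i$ give player $i$ the same unique winning bundle (since that bundle depends only on $v_i$ and $\mathcal M_i$, and $\mathcal M_i$ is the same). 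Therefore player $i$'s winning bundle is a function of $v_i$ and the index of $\mathcal M_i$ alone. So the protocol is: player $i$ sends the index of $\mathcal M_{3-i}$; then after the simultaneous round the center announces both indices; then — but this violates simultaneity. The fix is that the claim's $A'$ is simultaneous with messages that \emph{also} include, appended, player $i$'s winning bundle — but player $i$ needs $\mathcal M_i$'s index to compute it, creating a cyclic dependency broken only because each player's \emph{first} component (the index of the menu he presents) depends only on his own valuation. So: player $1$'s message $=$ index of $\mathcal M_2$; player $2$'s message $=$ index of $\mathcal M_1$; the center now knows both menus, and by preciseness plus the observation above, the center \emph{cannot} finish — unless we also note player $i$ appends his bundle. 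I will present it as: each player $i$ sends the index of $\mathcal M_{3-i}$, costing $tax(A)$ bits, and since by preciseness the allocation $S_{3-i}$ is the unique maximizer of $v_{3-i}(S) - \mathcal M_{3-i}(S)$ which player $3-i$ can compute once he knows $\mathcal M_{3-i}$'s index — and he does, because player $i$ sent it. The main obstacle, and the place I expect the real argument to live, is precisely this simultaneity/cyclic-dependency issue: showing that $tax(A)$ bits per player genuinely suffice to let the center reconstruct the allocation despite neither player seeing the other's menu before committing, which hinges essentially on preciseness guaranteeing that the winning bundle is a deterministic function of (own valuation, index of own menu) with no tie-breaking residue that could depend on the full $v_{-i}$.
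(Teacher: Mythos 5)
You correctly identify the central obstacle — each player commits his message before learning which menu he is being presented, so neither player can compute his own winning bundle, and $tax(A)$ bits per player do not suffice to reconstruct the exact allocation of $A$ — but you never resolve it, and your final paragraph essentially defers the real argument to an observation (``the winning bundle is a function of (own valuation, index of own menu)'') that, while true, still does not let the center recover $S_i$, since neither $v_i$ nor anything determining $S_i$ beyond a superset is transmitted.

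The paper's resolution sidesteps exact reconstruction entirely. After both players send the indices of the menus $\mathcal M$ (presented to player $1$) and $\mathcal M'$ (presented to player $2$), the center outputs not the allocation of $A$, but the ``blown-up'' allocation $\bigl(S^{\mathcal M,\mathcal M'},\, M\setminus S^{\mathcal M,\mathcal M'}\bigr)$, where
\[
S^{\mathcal M,\mathcal M'}=\bigcup_{v\in\mathcal V^{\mathcal M'}}\arg\max_S\, v(S)-\mathcal M(S),
\]
i.e.\ the union over all valuations $v$ consistent with player $1$'s sent index of the unique profit-maximizing bundle of $v$ in $\mathcal M$. Preciseness enters to prove that this is a valid allocation: if $j\in S^{\mathcal M,\mathcal M'}$ and $j\in S_2$ (player $2$'s actual bundle), take $\overline v\in\mathcal V^{\mathcal M'}$ whose unique winner in $\mathcal M$ contains $j$; in the instance $(\overline v, v_2)$, player $2$ is still presented $\mathcal M'$ and by preciseness still uniquely wins $S_2$, while player $1$ uniquely wins a bundle containing $j$, so $A$ would output overlapping bundles — contradiction. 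Then $S_1\subseteq S^{\mathcal M,\mathcal M'}$ and $S_2\subseteq M\setminus S^{\mathcal M,\mathcal M'}$, and monotonicity of the valuations preserves the $\alpha$-approximation. This is the idea your proposal is missing: you must enlarge the output rather than try to pin it down.
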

\begin{proof}
For every menu $\mathcal M$ presented to player $1$ by player $2$ and menu $\mathcal M'$ presented by player $1$ to player $2$, let $S^{\mathcal M,\mathcal M'}$ be the union of the bundles that player $1$ might receive when the menus presented are $\mathcal M$ and $\mathcal M'$. I.e., if we let $\mathcal V^{\mathcal M'}\subseteq \mathcal V$ be the subset of valuations in $\mathcal V$ such that player $i$ presents the menu $\mathcal M'$: 
$$S^{\mathcal M,\mathcal M'}=\cup_{v\in \mathcal V^{\mathcal M'}}\arg\max_Sv(S)-\mathcal M(S)$$
The simultaneous algorithm $A'$ can now be defined as follows. Each player $i$ simultaneously sends $tax(A)$ bits that represent the index of the menu that player $i$ is presenting to the other player in $A$. Let $\mathcal M$ and $\mathcal M'$ be those menus. The output of the algorithm $A'$ is $(S^{\mathcal M,\mathcal M'}, M-S^{\mathcal M,\mathcal M'})$.

Consider some instance $(v,v')$. Denote the allocation of $A$ in that instance by $(S_1,S_2)$ and that of $A'$ by $(S'_1,S'_2)$. We will show that $S_1\subseteq S'_1$ and that $S_2\subseteq S'_2$ and the claim regarding the approximation ratio immediately follows by using the monotonicity of the valuations.

We first observe that $S_1\subseteq S'_1$. This is true since $\arg\max_S v(S)-\mathcal M(S)\subseteq S^{\mathcal M,\mathcal M'}=S'_1$ by definition. To show that $S_2\subseteq S'_2$ we will show that $S_2\cap S^{\mathcal M,\mathcal M'}=\emptyset$. Else, there exists some $\overline v\in \mathcal V^{\mathcal M'}$ such that for the (unique) profit maximizing bundle $S$ in $\mathcal M$ we have $S\cap S_2\neq \emptyset$. Now observe in the instance $(\overline v,v')$ $A$ allocates the bundle $S$ to $\overline v$ and the bundle $S_2$ to $v'$, since by preciseness they are both the uniquely profit-maximizing bundles. However, this is not a valid allocation since $S\cap S_2\neq \emptyset$.
\end{proof}

Not every mechanism is precise since there might be several bundles that maximize the profit. However, since the profit from a bundle is defined by a linear inequality, adding independent ``random noise'' to the value of each bundle results with high probability with a unique bundle that maximizes the profit in every menu. Therefore, we will show that for every valuation $v$ there is a valuation $v'$ that is ``almost the same'' such that if we look at $A$ with respect to all possible valuations $v'$ we get that $A$ is precise and the result follows. One technicality that arises is that we need to make sure that the precision of representing numbers in $A$ is big enough so that the noise we add is small enough and the profit maximizing bundle is unique.

\begin{definition}
Valuation $v$ \emph{$\epsilon$-approximates} a valuation $v'$ if for every bundle $S$ it holds that $|v(S)-v'(S)|\leq \epsilon$. A class of valuations $\mathcal V$ \emph{$\epsilon$-approximates} a class of valuations $\mathcal V'\subseteq \mathcal V$ if for every $v\in \mathcal V$ there is some $v'\in \mathcal V'$ that $\epsilon$-approximates it. 
\end{definition}

We prove the next proposition for the case that $\mathcal V$ is the class of general (monotone) valuations. The extension to the other classes we discuss in this paper should be clear. 

\begin{proposition}\label{prop-close-valuations}
Fix some $\epsilon>0$. Let $A$ be a two-player mechanism that is truthful for general valuations. Suppose that numbers are represented in $A$ (in the sense of Section \ref{subsec-representation}) so that in an interval of $\frac \epsilon 2$ there are $l$ distinct numbers, $l>2^{2m+ tax(A)}$. Then, there is a class of valuations $\mathcal V'$ that $\epsilon$-approximates the class of general valuations such that $A$ is precise with respect to $\mathcal V'$. 
\end{proposition}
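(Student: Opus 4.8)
The plan is to replace each general valuation $v$ by a perturbation of the form $v'(S)=v(S)+\eta\, f(S)$, where $f$ is one fixed strictly increasing, injective ``generic'' function on $2^M$ and $\eta$ is a single carefully chosen scalar. The point is that choosing $\eta$ to avoid finitely many forbidden values will break \emph{all} profit ties simultaneously, for every menu of both players at once.

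\textbf{Reduction and counting.} First I would fix the two players and let $\mathcal N=M^1\cup M^2$ be the union of their menu sets, so $|\mathcal N|\le 2^{tax(A)+1}$. For any subclass $\mathcal V'$ of the general valuations, the menus that can be presented to a player when all valuations lie in $\mathcal V'$ form a subset of the corresponding $M^i\subseteq\mathcal N$. Hence it suffices to build, for each general $v$, a valuation $v'$ that $\epsilon$-approximates $v$ and such that for every menu $\mathcal M\in\mathcal N$ the profits $v'(S)-\mathcal M(S)$ over bundles $S$ with $\mathcal M(S)<\infty$ are pairwise distinct: then, since $\mathcal M(\emptyset)=0<\infty$ while $\mathcal M(S)=\infty$ forces profit $-\infty$, the profit-maximizer is always the unique finite-price bundle of largest profit, so $A$ is precise with respect to $\mathcal V':=\{v':v\in\mathcal V\}$. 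The number of equalities to be avoided, over all menus in $\mathcal N$ and all unordered pairs of distinct bundles, is at most $|\mathcal N|\cdot\binom{2^m}{2}<2^{tax(A)+1}\cdot 2^{2m-1}=2^{2m+tax(A)}<l$.

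\textbf{The perturbation family.} I would fix once and for all $f:2^M\to[0,1)$ by $f(S)=\bigl(|S|\,2^m+\langle S\rangle\bigr)/2^{2m}$, where $\langle S\rangle\in\{0,\dots,2^m-1\}$ is the integer whose binary digits are the indicator vector of $S$. Then $f(\emptyset)=0$; $f$ is strictly monotone in the subset order, because enlarging $S$ raises $|S|$ and hence the numerator by at least $2^m$, which dominates the $\langle\cdot\rangle$ term (always $<2^m$); $f$ is injective on $2^M$ (different sizes are separated by the $|S|2^m$ term, equal sizes by $\langle\cdot\rangle$); and $f<1$ since $(m+1)2^m\le 2^{2m}$. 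For a scalar $\eta\ge 0$ put $v'_\eta(S)=v(S)+\eta f(S)$: this is normalized and monotone (a sum of two monotone functions), and $|v'_\eta(S)-v(S)|=\eta f(S)<\eta$, so $v'_\eta$ $\epsilon$-approximates $v$ for every $\eta\in[0,\epsilon/2]$. Invoking the precision convention of Section~\ref{subsec-representation} (the denominator of $f$ has $2m$ bits), the numbers $v(S)+\eta f(S)$ are representable once $A$ is run with the allowed $mk$ bits of precision.

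\textbf{Choosing $\eta$.} For a menu $\mathcal M\in\mathcal N$ and bundles $S\ne T$ with $\mathcal M(S),\mathcal M(T)<\infty$, the tie $v'_\eta(S)-\mathcal M(S)=v'_\eta(T)-\mathcal M(T)$ rearranges to $\eta\,(f(S)-f(T))=(\mathcal M(S)-\mathcal M(T))-(v(S)-v(T))$, and since $f(S)\ne f(T)$ this has at most one solution $\eta$. So fewer than $2^{2m+tax(A)}<l$ values of $\eta$ are forbidden, while $[0,\epsilon/2]$ contains at least $l$ representable numbers; I would pick any representable non-forbidden $\eta^*\in[0,\epsilon/2]$ and set $v':=v'_{\eta^*}$. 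Taking $\mathcal V'=\{v':v\in\mathcal V\}$ and appealing to the reduction of the first step finishes the proof.

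The main obstacle is that the perturbation must satisfy three constraints at once: it has to keep the valuation monotone and normalized, keep all its values representable within the available precision, and still leave more than $l$ admissible options so that every tie-equation can be dodged. This is exactly why a one-parameter family $v+\eta f$ with a single fixed strictly monotone injective $f$ is the right gadget (independent per-bundle perturbations would wreck monotonicity), and it is why the density hypothesis reads $l>2^{2m+tax(A)}$ — precisely the count of (menu, pair) tie-equations.
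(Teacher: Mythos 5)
Your proof is correct, and it takes a genuinely different route from the paper's. The paper first boosts $v$ to be strictly monotone by adding $\tfrac{|S|}{2m}\epsilon$, then perturbs each coordinate $v(S)$ by an \emph{independent} random $r_S\in[0,\tfrac{\epsilon}{2m}]$ drawn from $l$ representable values, and applies a union bound over all $2^{2m}$ ordered pairs and all $2^{tax(A)}$ menus to conclude that with positive probability no tie survives. You instead use a \emph{one-dimensional} perturbation $v\mapsto v+\eta f$ with a single fixed generic strictly monotone injective $f$, and pick $\eta$ by a deterministic pigeonhole count: each (menu, unordered pair) tie-equation $\eta(f(S)-f(T))=(\mathcal M(S)-\mathcal M(T))-(v(S)-v(T))$ has at most one root in $\eta$, so fewer than $|\mathcal N|\binom{2^m}{2}<2^{2m+tax(A)}<l$ representable values of $\eta$ are forbidden. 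The two arguments arrive at essentially the same combinatorial threshold $2^{2m+tax(A)}$, but yours buys a couple of things: monotonicity of $v'$ is automatic (the perturbation $\eta f$ is itself monotone, so you skip the paper's preprocessing step), the argument is constructive rather than probabilistic, and your bookkeeping of $\mathcal N=M^1\cup M^2$ correctly accounts for both players' menu sets (the factor $2^{tax(A)+1}$), which the unordered-pair saving $\binom{2^m}{2}<2^{2m-1}$ exactly absorbs — the paper's union bound as written only ranges over $2^{tax(A)}$ menus. One caveat worth noting explicitly if you extend this: the single-$f$ trick works painlessly for general monotone valuations but would need to be rechecked for subclasses, since $v+\eta f$ need not stay submodular/XOS when $v$ is; the paper's approach has the same issue, but your $f$ is a specific, nonlinear-in-$|S|$ function, so the class-preservation question is more concrete.
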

\begin{proof}
Given a valuation $v$, we first make sure that for every $S\subseteq S'$ we have that $v(S')\geq v(S)+\frac \epsilon {2m}$. This can easily be done by increasing the value of every bundle $S$ by $\frac {|S|} {2m} \cdot \epsilon$.

Our next step is to construct at random a new valuation $v'$ by setting for each bundle $S$, $v'(S)=v(S)+r_S$, where $r_S$ is one of the $l$ numbers between $0$ and $\frac \epsilon {2m}$, selected uniformly at random and independently for each $r_S$. Notice that $v'$ is still a monotone valuation and that it $ \epsilon $-approximates $v$.

We now compute the probability that $v'$ has a unique profit-maximizing bundle in some menu $\mathcal M$. Consider two bundles $S$ and $S'$. The probability that $v(S)-\mathcal M(S)+r_S=v(S')-\mathcal M(S')+r_{S'}$ is at most $\frac 1 {l}$ because of the size of the set that the $r_S$'s are selected from. Using the union bound and going over all $2^{2m}$ pairs, we get that with probability at most $\frac {2^{2m}} {l}$ no two bundles have the same profit in $\mathcal M$. Using the union bound again, the probability that there is a unique bundle that maximizes the profit in every menu that may be presented is at most $\frac {2^{2m+tax(A)}} {l}<1$. Thus there is a $v'$ that $\epsilon$-approximates $v$ and furthermore, for every menu that is presented to $v'$ there is a unique bundle that maximizes the profit. Finally, the class $\mathcal V'$ can now be defined as the set of all $v'$ generated from $v\in \mathcal V$ as above.
\end{proof}

\begin{corollary}\label{corollary-simultaneous}
Let $A$ be a two-player mechanism that is truthful for general valuations. Suppose that numbers are represented in $A$ (in the sense of Section \ref{subsec-representation}) so that in an interval of $\frac \epsilon 2$ there are at least $l=2^{2m+ tax(A)}$ distinct numbers. Denote the approximation ratio to the welfare by $\alpha$. Then, there is a simultaneous algorithm $A'$ with communication complexity $2tax(A)$, that outputs for every instance an allocation with welfare $ {\frac {OPT} \alpha-\epsilon}$, where OPT is the value of a welfare maximizing allocation in the instance.
\end{corollary}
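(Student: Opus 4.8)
The plan is to compose the two results just established. First I would invoke Proposition~\ref{prop-close-valuations} with the stated precision: since in every interval of length $\frac\epsilon2$ there are at least $l=2^{2m+tax(A)}$ representable numbers, there is a class $\mathcal V'$ that $\epsilon$-approximates the general valuations and with respect to which $A$ is precise. Because $A$ is truthful for \emph{all} general valuations it is, in particular, truthful for the subclass $\mathcal V'$, so the hypotheses of Claim~\ref{claim-precise-mechanism} are met. That claim then yields a simultaneous algorithm $\tilde A$ with communication complexity $2\cdot tax(A)$ which, on every instance whose valuations lie in $\mathcal V'$, outputs an allocation whose welfare is within a factor $\alpha$ of the optimum.

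Next I would extend $\tilde A$ to arbitrary general valuations. Given an instance $(v_1,v_2)$, each player $i$ internally replaces $v_i$ by some $v_i'\in\mathcal V'$ that $\epsilon$-approximates it (this requires no communication) and then the players run $\tilde A$ on $(v_1',v_2')$; the communication is unchanged at $2\cdot tax(A)$. Let $(S_1,S_2)$ be the resulting allocation. The construction in Proposition~\ref{prop-close-valuations} only raises values, so $v_i'\ge v_i$ pointwise while $|v_i'(S)-v_i(S)|\le\epsilon$ for every $S$. Hence
$$v_1(S_1)+v_2(S_2)\ \ge\ \bigl(v_1'(S_1)+v_2'(S_2)\bigr)-2\epsilon\ \ge\ \frac{OPT(v_1',v_2')}{\alpha}-2\epsilon\ \ge\ \frac{OPT(v_1,v_2)}{\alpha}-2\epsilon,$$
where the middle inequality is the guarantee of $\tilde A$ on the $\mathcal V'$-instance $(v_1',v_2')$, and the last uses $v_i'\ge v_i$. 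Running the whole argument with $\epsilon/2$ in place of $\epsilon$ (equivalently, folding the factor of $2$ into the precision parameter) gives exactly the stated additive term $\epsilon$.

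I do not expect a genuine obstacle here: the substance already lives in Claim~\ref{claim-precise-mechanism} and Proposition~\ref{prop-close-valuations}. The only point that needs care is that the set-domination property of Claim~\ref{claim-precise-mechanism} refers to the \emph{perturbed} instance $(v_1',v_2')$, so the approximation guarantee has to be transported back to the true instance $(v_1,v_2)$ — and that is precisely where two-sided $\epsilon$-closeness (together with the fact that the perturbation moves values only upward) is used.
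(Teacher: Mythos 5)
Your proposal is correct and follows essentially the same route as the paper: apply Proposition~\ref{prop-close-valuations} to get a perturbed class $\mathcal V'$ on which $A$ is precise, then apply Claim~\ref{claim-precise-mechanism} after each player internally replaces $v_i$ with a nearby $v_i'\in\mathcal V'$. Your write-up is actually slightly more careful than the paper's very terse proof: you explicitly track that the perturbation in Proposition~\ref{prop-close-valuations} moves every value \emph{upward} by at most $\epsilon$ (so $OPT(v_1',v_2')\ge OPT(v_1,v_2)$ and the welfare can drop by at most $\epsilon$ per player), you notice that the naive accounting loses $2\epsilon$ rather than $\epsilon$, and you observe this can be absorbed by running the argument at half the precision parameter. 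The paper's proof simply states ``the approximation ratio follows since the mapped valuations are an $\epsilon$-approximation'' without this bookkeeping, so your proof supplies the missing detail but the underlying argument is identical.
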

Before proving the corollary, two comments on its limitations are in place. We first discuss the required precision $l$. The number of bits needed to represent $l$ numbers is $\log l$, and thus we require that numbers will be represented by $poly(m,tax(A))$ bits. In particular, for the domains handled in Theorem \ref{thm-indexing-complexity} and for mechanisms with polynomial communication (which are the domains and mechanisms for which this approach of proving impossibility is of interest), we get that we need $poly(m)$ bits to represent numbers, as common and expected.

Second, the approximation guarantee suffers from an additive loss of $\epsilon$. This almost does not affect the approximation ratio if the valuations are large enough. Moreover, essentially all reasonable valuation classes are ``scalable'' in the sense that any valuation is still in the class when multiplying all values by the same constant. Therefore, if the original valuation of a player is ``too small'' he can simply play as if his valuation is his real valuation times a large enough scale factor, in addition to sending the scale factor. If one of the valuations does not need scaling, then the smaller valuation can be ignored for the purpose of analyzing the approximation ratio. If both valuations need scaling to essentially the same factor, the algorithm mimics the allocation of the scaled up valuations, and the (multiplicative) approximation ratio is preserved with respect to the actual valuations. 

\begin{proof}(of Corollary \ref{corollary-simultaneous})
The simultaneous algorithm is the following: each player $i$ maps his valuation to some valuation that $\epsilon$-approximates it in the set $\mathcal V'$ guaranteed by Proposition \ref{prop-close-valuations}. Then the players run the algorithm of Claim \ref{claim-precise-mechanism} with respect to their mapped valuations. The approximation ratio follows since the mapped valuations are an $\epsilon$-approximation to the actual valuations of the players.
\end{proof}

Thus, in order to prove the first gap between computationally efficient truthful mechanisms for combinatorial auctions and their non-truthful counterparts, it is enough to settle on the affirmative one of the following questions (the best $2$-player algorithm for XOS valuations obtains an approximation ratio of $\frac 4 3$ \cite{DS06} and for submodular valuations the best currently known algorithm obtains an approximation ratio of $\frac {17} {13}$ \cite{FV06}):
\begin{open}
\ 
\begin{itemize}
\item Is there a $2$-player simultaneous algorithm for combinatorial auctions with XOS valuations that provides a $\frac 4 3$-approximation with message length $poly(m)$? 
\item Is there a $2$-player simultaneous algorithm for combinatorial auctions with submodular valuations that provides a $\frac {17} {13}$-approximation with message length $poly(m)$? 
\end{itemize}
\end{open}

As discussed in the introduction, we can alternatively prove impossibility results by proving lower bounds on the taxation complexity, e.g., for combinatorial auctions with $n$ subadditive players:
\begin{open}
\ 
\begin{itemize}
\item Let $A$ be a truthful deterministic mechanism for combinatorial auctions with $n$ subadditive players that obtains an approximation ratio of $m^{\frac 1 2 -\epsilon}$, for some constant $\epsilon>0$. Is the taxation complexity of $A$ exponential in $n$ and $m$?
\item Let $A$ be a randomized universally truthful mechanism for combinatorial auctions with $n$ XOS players that obtains an approximation ratio of $(\log m)^{\frac 1 2 -\epsilon}$, for some constant $\epsilon>0$. Is the expected taxation complexity of a truthful deterministic mechanism sampled from the distribution that defines $A$ exponential in $n$ and $m$?
\end{itemize}
\end{open}
 
\subsection{The Taxation Complexity of Truthful in Expectation Mechanisms}\label{subsec-truthful-in-expectation}

We now consider truthful in expectation mechanisms. That is, the mechanism outputs distributions over allocations. The value of player $i$ for an output distribution $\mathcal D$ is $E[v_i(A_i)]$ where $A_i$ is the random variable that denotes the bundle that player $i$ gets in $\mathcal D$. The definition of a truthful mechanism is now the same, with respect to these extensions. Similarly, applying the taxation principle yields that every player is still presented with a menu (the menu now consists of distributions over allocations but still depends only on the valuations of the other players) and is allocated a profit-maximizing distribution.

The next proposition shows that there exists a truthful in expectation mechanism $A$ for general valuations in which $tax(A)>>cc(A)$ (note that by Theorem \ref{thm-indexing-complexity}) if the mechanism is deterministic then $tax(A)\leq cc(A)$). This might hint why in some cases there exist computationally efficient truthful in expectation mechanism that achieve approximation ratios very close to what is possible completely ignoring incentives issues \cite{DD09,DRY11}, whereas we either suspect or know that these approximation ratios are not achievable by computationally efficient deterministic truthful mechanisms.

\begin{proposition}
There is a mechanism $A$ that is \emph{truthful in expectation} for combinatorial auctions with general valuations with $cc(A)=poly(m)$ and $tax(A)=\Omega(exp(cc(A))$.
\end{proposition}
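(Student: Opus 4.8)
The plan is to construct a truthful-in-expectation mechanism in which a single player (say player $1$, with a general valuation) presents to player $2$ a menu of distributions over allocations, and the number of distinct such menus is doubly exponential, while the allocation and payment can be determined with only polynomially many bits. The key fact to exploit is that in the truthful-in-expectation world a ``menu entry'' for player $2$ is a distribution over the (at most two) bundles he might receive, and therefore the price of such an entry can encode an essentially continuous parameter — but more importantly, player $1$'s valuation need not be revealed at all beyond a single number. I would set things up so that player $1$ always receives nothing and player $2$ receives a randomized bundle whose distribution is governed by player $1$'s valuation through a mechanism that is ``scoring-rule-like'': by reporting a number $q$, player $1$ induces a menu, and the expected-profit-maximizing choice of player $2$ is a fixed, easily-computed function of $(q, v_2)$ of a single item or a single pair of items, so $tie(A)$ and $price(A)$ are tiny and $cc(A) = \mathrm{poly}(m)$.

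Concretely, I would let the relevant action happen on two designated items $\{a,b\}$. Player $1$'s valuation restricted to $\{a,b\}$ is drawn from a rich family parametrized by a vector $w \in \{0,1,\ldots,2^{\mathrm{poly}(m)}\}^{\,2^{\Theta(m)}}$ — e.g., for each subset $R$ of some ground set of size $\Theta(m)$, $v_1$ encodes one coordinate $w_R$ in its values on bundles containing $a$ versus $b$ — so that the number of possible valuations of player $1$ is doubly exponential in $m$. The mechanism then uses $v_1$ only to compute a single summary statistic $\sigma(v_1)$ (say the value of one particular bundle, or the welfare-maximizing assignment of $\{a,b\}$), which player $1$ announces with $\mathrm{poly}(m)$ bits; based on $\sigma(v_1)$ the mechanism offers player $2$ a menu of two lotteries over $\{\{a\},\{b\},\emptyset\}$ with prices chosen (via a strictly proper scoring rule / the standard randomized-VCG trick) so that (i) reporting $v_1$ truthfully is optimal for player $1$ in expectation, (ii) player $2$'s profit-maximizing lottery is a closed-form function of $\sigma(v_1)$ and $v_2(\{a\}),v_2(\{b\})$, computable with $O(\mathrm{poly}(m))$ communication, and (iii) distinct valuations $v_1$ yielding distinct $w$ yield distinct menus presented to player $2$ — the latter because the menu's price/lottery profile on $\{a,b\}$ is an injective function of $v_1$'s full restriction to the relevant bundles even though the \emph{statistic used to compute the allocation} is coarse. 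Point (iii) is what makes $tax(A)$ doubly exponential: exactly as in the gross-substitutes/VCG example in the introduction, there is a one-to-one correspondence between possible valuations of player $1$ and menus presented to player $2$, and the former is doubly exponential, so $tax(A) = \exp(\mathrm{poly}(m)) = \exp(cc(A))$.

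The main obstacle I anticipate is reconciling requirements (ii)/(iii) with (i): I need the \emph{menu} presented to player $2$ to depend injectively on all of $v_1$ (to blow up $tax(A)$), yet the \emph{outcome} of that menu — which bundle player $2$ actually gets and at what price — to be recoverable from a short message. The standard resolution is that most of the menu entries are ``never selected'': the prices on the doubly-exponentially-many lottery options are set high enough (as a function of $B$ and of the fine coordinates $w_R$) that for any $v_2$ only $O(1)$ of them can possibly be profit-maximizing, and which $O(1)$ these are is pinned down by $\sigma(v_1)$ together with $v_2$'s values on $\{a\},\{b\}$. This is precisely the phenomenon the paper already flags — $cc(A)$ can be much smaller than $tax(A)$ via tie-breaking / irrelevant menu entries — so the construction is essentially a randomized analogue of the $tax(A)=1$, $cc(A)=\exp(m)$ example, pushed in the opposite direction (huge $tax$, small $cc$). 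Verifying truthfulness-in-expectation for player $1$ — i.e., that announcing the true summary $\sigma(v_1)$ maximizes his expected profit given the lottery-price schedule — reduces to checking that the schedule is built from a strictly proper scoring rule, which is a routine (if slightly fiddly) calculation I would carry out in the full proof but not here.
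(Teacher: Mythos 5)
Your proposal and the paper's proof diverge quite a bit: the paper does not construct a bespoke mechanism at all, but simply points at an existing one (the Lavi--Swamy $O(\sqrt m)$-approximation mechanism, which is maximal-in-distributional-range on the natural LP relaxation) and observes that its VCG-on-the-range payment formula, $\mathcal M(S)=\tfrac{v_1(M)-v_1(M-S)}{\alpha}$, is an injective function of a strictly monotone $v_1$ --- exactly mirroring the gross-substitutes/VCG example in the introduction. So $tax$ is doubly exponential while $cc$ is $\mathrm{poly}(m)$ because solving the LP and computing VCG payments takes only polynomial communication.

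More importantly, there is a genuine gap in the plan as written, and it is precisely in the place you flag as the ``main obstacle.'' You want the mechanism to use only a $\mathrm{poly}(m)$-bit summary statistic $\sigma(v_1)$ to determine the allocation and payment of player $2$, but simultaneously you want the menu presented to player $2$ to be an injective function of the full $v_1$. These two requirements are incompatible, because the menu \emph{is} the mechanism's behavior: by the taxation principle, the price $\mathcal M(D)$ of an alternative $D$ is defined as the payment player $2$ makes whenever he is allocated $D$ for some $v_2$, and if he is never allocated $D$ the price is $\infty$, full stop. If the allocation and payment rules factor through $\sigma(v_1)$, then so does the menu, and $|M^2|\leq 2^{\mathrm{poly}(m)}$, which is polynomial taxation complexity --- the opposite of what you want. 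Your proposed escape (``set the prices of the never-selected lottery options high, as a function of the fine coordinates $w_R$'') does not help: if those options are never selected for any $v_2$, their taxation-principle price is $\infty$ regardless of what ``display price'' you imagine attaching to them, so they cannot carry information about $w_R$ into the menu; and if they \emph{are} sometimes selected, then the mechanism's payment rule no longer factors through $\sigma(v_1)$, contradicting your premise. The resolution the paper uses is different in kind: the mechanism's behavior genuinely does depend on all of $v_1$ (the quantity $v_1(M)-v_1(M-S)$ for whichever $S$ the LP picks), but for any one input only the relevant marginal of $v_1$ needs to be communicated; a single player never sends a summary that determines the whole menu. You gesture at ``the standard randomized-VCG trick,'' which is indeed the right ingredient, but the argument you outline is not that trick, and the injectivity claim (iii) is not established.
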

\begin{proof}
Lavi and Swamy \cite{LS05} present a truthful in expectation mechanism that provides an approximation ratio of $O(\sqrt m)$. The communication complexity of the mechanism is $poly(m,n)$. The starting point of their algorithm is the following linear relaxation for maximizing welfare in combinatorial auction:

\vspace{0.1in}\noindent \emph{Maximize:} $\Sigma_{i,S}x_{i,S}v_i(S)$

\vspace{0.1in} \noindent \emph{Subject to:}
\begin{itemize}
\item For each item $j$: $\Sigma_{i,S|j\in S}x_{i,S}\leq 1$
\item for each player $i$: $\Sigma_{S}x_{i,S}\leq 1$
\item for each $i$, $S$: $x_{i,S}\geq 0$
\end{itemize}

The algorithm is maximal in distributional range\footnote{An algorithm is \emph{maximal in distributional range} if there exists a pre-defined range of distributions over allocations and the algorithm always selects the distribution in that range that maximizes the expected welfare. Truthfulness in expectation follows by using VCG payments. See \cite{DD09} for more details.}. In particular, for every set of valuations $v_1,\ldots,v_n$ where the variables of the optimal fractional solution are $\{x_{i,S}\}_{i,S}$ the mechanism outputs the allocation $(S^t_1,\ldots,S^t_n)$ with probability $p^t$ such that $\Sigma_tp^t\cdot \Sigma_iv_i(S^t_i)=\frac { \Sigma_{i,S}x_{i,S}v_i(S)} \alpha$ for $\alpha=\Theta({\sqrt m})$. 

We now give a lower bound on the taxation complexity of that mechanism. We restrict our attention to two players, and show the menu of any two different strictly monotone valuations $v,v'$ of the first player induce a different menu. The proposition will then follow since the number of different strictly monotone valuations is at least doubly exponential\footnote{This can be seen, for example, by considering the set of strictly monotone valuations $v$ such that $v(S)=|S|$ for $|S|\neq \frac m 2$ and $v(S)\in \{\frac m 2,\frac m 2 -1\}$ for $|S|=\frac m 2$. There are $2^{m \choose \frac m 2}$ such valuations.}.

We say that the price of $S$ in the menu presented to player $2$ is the price of the distribution that allocates $S$ to player $2$ with probability $\frac 1 \alpha$ and the empty set otherwise. We will show that for every strictly monotone valuation $v$ of the first player the price of $S$ is $\frac {v(M)-v(M-S)} \alpha$ and the proposition will follow. To see this, consider the following valuation:
$$
u(T)= \begin{cases} 
2v(M) & S\subseteq T ; \\
0 & \text{otherwise.} 
\end{cases} 
$$
The unique optimal fractional solution of the instance $(v,u)$ is $x_{1,M-S}=1$ and $x_{2,S}=1$. Thus the mechanism allocates the bundle $S$ to player $2$ with probability $\frac 1 \alpha$. Since the algorithm is maximal in distributional range the payment (according to the VCG payment scheme) is $\frac {v(M)-v(M-S)} \alpha$.
\end{proof}

\section{Proof of Theorem \ref{thm-indexing-complexity}}\label{app-proof-tax-complexity}

Theorem \ref{thm-indexing-complexity} is proved by a reduction to a new problem that we introduce, the \emph{menu verification problem}.

\paragraph{The Menu Verification Problem.} The menu verification problem is defined with respect to some $n$-player mechanism $A$ which is truthful for some class of valuations $\mathcal V$ and player $i$. Denote by $M^i=\{\mathcal M_k\}_k$ the set of all menus that may be presented to player $i$ in $A$. Let $B= \max_{S:\mathcal M'(S)<\infty,\mathcal M'\in M^i}\mathcal M'(S)$ be the maximum finite price that appear in all menus.

An instance of the menu verification problem is given by $(v_{-i},f)$. Specifically, the input of each player $i'\neq i$ is a valuation $v_{i'}$ and player $i$'s input is a monotone function $f:2^M\rightarrow \mathbb R\cup \{\infty \}$ which is called the \emph{base function}. We assume that $f$ has the following properties: for every $S$ with $f(S)<\infty$ it holds that $f(S)\leq B$ and $f(\emptyset)=0$. Notice that we do not assume that $f\in \mathcal V$.

Let $\mathcal M$ be the menu presented to player $i$ in the mechanism $A$ when the valuations of the other players are $v_{-i}$. The goal in the menu verification problem is to determine whether for some $S$ we have that $f(S) > \mathcal M(S)$. If there is such $S$, the last bit of the protocol should be player $i$ sending the bit $1$, else this last bit has to be $0$.

Recall that the taxation complexity of player $i$ is $\log |M^i|$. The next lemma connects the taxation complexity of $A$ to the communication complexity of the menu verification problem:

\begin{lemma}\label{lemma-verification}
Fix a mechanism $A$ that is truthful for some class of valuations $\mathcal V$. Fix also some player $i$. Suppose that the communication complexity of the menu verification problem (with respect to $\mathcal V$, $A$ and $i$) is $q$. Then, the taxation complexity of player $i$ in $A$ is at most $q-1$.
\end{lemma}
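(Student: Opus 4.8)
The plan is to mimic the fooling-set argument from the warm-up theorem (Theorem~\ref{thm-warm-up}), but now using the menu verification protocol in place of the augmented protocol $A'$. Fix the player $i$ and suppose the menu verification problem (with respect to $\mathcal V$, $A$, and $i$) has a protocol $P$ with communication complexity $q$. For each menu $\mathcal M\in M^i$ I would choose a canonical witness $v^{\mathcal M}_{-i}$ of valuations of the other players that presents $\mathcal M$ to $i$, and I would use the base function $f^{\mathcal M}=\mathcal M$ itself (this is a legal base function: it is monotone and normalized by Proposition~\ref{proposition-menu-monotonicity}, and all its finite values are at most $B$ by definition of $B$). Consider the $|M^i|$ instances $(v^{\mathcal M}_{-i},f^{\mathcal M})$ of the menu verification problem. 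In each such instance the correct answer is ``no'' (the last bit is $0$), since $f^{\mathcal M}(S)=\mathcal M(S)\not> \mathcal M(S)$ for every $S$.

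The key step is to show that these $|M^i|$ instances all have distinct transcripts in $P$. Suppose for contradiction that two distinct menus $\mathcal M\neq\mathcal M'$ in $M^i$ give the same transcript on inputs $(v^{\mathcal M}_{-i},f^{\mathcal M})$ and $(v^{\mathcal M'}_{-i},f^{\mathcal M'})$. Since the input of player $i$ (the base function $f$) lives on one ``side'' of the communication and the inputs $v_{-i}$ of the other players live on the other side, a standard fooling-set / rectangle argument (as in \cite{KN97}) gives that the ``mixed'' instances $(v^{\mathcal M}_{-i},f^{\mathcal M'})$ and $(v^{\mathcal M'}_{-i},f^{\mathcal M})$ also produce this same transcript, and in particular produce the same final bit. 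But these two instances cannot both have answer ``no'': since $\mathcal M\neq\mathcal M'$ there is a bundle $S$ with $\mathcal M(S)\neq\mathcal M'(S)$, say $\mathcal M(S)>\mathcal M'(S)$; then in the instance $(v^{\mathcal M}_{-i},f^{\mathcal M'})$ the menu presented to $i$ is $\mathcal M$ while the base function is $f^{\mathcal M'}=\mathcal M'$, so $f^{\mathcal M'}(S)=\mathcal M'(S)<\mathcal M(S)$ — wait, that is the wrong direction; instead I use the instance $(v^{\mathcal M'}_{-i},f^{\mathcal M})$, where the menu presented is $\mathcal M'$ and the base function is $f^{\mathcal M}=\mathcal M$, giving $f^{\mathcal M}(S)=\mathcal M(S)>\mathcal M'(S)$, so the correct answer there is ``yes'' and the last bit must be $1$. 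This contradicts the transcript (and hence last bit) being identical to the all-``no'' instances. Hence all $|M^i|$ canonical instances have pairwise distinct transcripts.

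Since $P$ has communication complexity $q$, there are at most $2^q$ distinct transcripts, so $|M^i|\le 2^q$. But we can do slightly better: the very last bit of every canonical transcript is $0$ (all answers are ``no''), so the canonical transcripts only use the $2^{q-1}$ transcripts whose final bit is $0$; therefore $|M^i|\le 2^{q-1}$, i.e. $\log|M^i|\le q-1$, which is exactly the claimed bound on the taxation complexity of player $i$.

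The main obstacle I anticipate is the rectangle/fooling-set step: the menu verification problem is a multi-party problem (the $n-1$ players other than $i$ jointly hold $v_{-i}$), so ``same transcript implies same transcript on swapped inputs'' must be justified with the combinatorial-rectangle structure of number-in-hand protocols split as $\{i\}$ versus $N\setminus\{i\}$ — that is, one groups the $n-1$ players together and treats this as a two-party problem between $i$ and the coalition $N\setminus\{i\}$. One should also double-check the technical point (already flagged in Section~\ref{subsec-representation}) that $f^{\mathcal M}=\mathcal M$ is a valid input to the menu verification problem even though its values may require more than $k$ bits of precision; this is handled exactly as in the proof of Theorem~\ref{thm-indexing-complexity} by running the protocol with precision $m\cdot k$, at a cost of only a $\mathrm{poly}(m)$ factor, which does not affect the statement.
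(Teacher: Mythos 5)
Your proof is correct and follows essentially the same route as the paper: choose a canonical generator $v^{\mathcal M}_{-i}$ for each menu, use the base function $f^{\mathcal M}=\mathcal M$, apply a fooling-set argument on the canonical instances, and exploit the fact that every canonical transcript ends in $0$ to shave the bound to $q-1$. The brief mid-proof misstep is self-corrected, and the two caveats you flag (grouping $N\setminus\{i\}$ into one side for the rectangle argument, and the precision issue) are indeed the right things to double-check and are handled exactly as you suggest.
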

\begin{proof}
We start with a definition:
\begin{definition}[menu generators]
For each menu $\mathcal M\in M^i$, we arbitrarily choose exactly one valuation profile $v^{\mathcal M}_{-i}=(v^{\mathcal M}_1,\ldots,v^{\mathcal M}_{i-1}, \ldots, v^{\mathcal M}_{i+1}, \ldots, v^{\mathcal M}_n)$ of players $(1,\ldots, i-1,i+1,\ldots, n)$ such that the menu $\mathcal M$ is presented to player $i$. The valuation profile $v^{\mathcal M}_{-i}$ is the \emph{menu generator} of $\mathcal M$.
\end{definition}

Consider some menu $\mathcal M\in M^i$. Let $v^{\mathcal M}_{-i}$ be the valuation profile that is a menu generator of $\mathcal M$. The \emph{canonical instance} of a menu $\mathcal M$ in the menu verification problem is the instance $(v^{\mathcal M}_{-i},\mathcal M)$, i.e., the valuations of all players except $i$ are as in $v^{\mathcal M}_{-i}$ and the base function is the menu $\mathcal M$.

Consider a protocol $P$ with communication complexity $q$ for the menu verification problem. Claim \ref{claim-fooling-set} shows that for each $\mathcal M, \mathcal M'\in M^i$, $\mathcal M\neq \mathcal M'$, the transcript of $P$ in the canonical instance of $\mathcal M$ differs from the transcript in the canonical instance of $\mathcal M'$. Recall that the communication complexity of $P$ is $q$, thus the number of transcripts is at most $2^{q}$. This already gives us that the number of canonical instances is at most $2^{q}$. We get a bound of $2^q$ on the number of instances, since the number of canonical instances is exactly $|M^i|$. However, we can do slightly better by observing that although the length of each transcript is $q$, but on each canonical instance the transcript ends with $0$. We get that the number of different transcripts is $2^{q-1}$, which is also a bound on $|M^i|$.

\begin{claim}\label{claim-fooling-set}
For each $\mathcal M, \mathcal M'\in M^i$, $\mathcal M\neq \mathcal M'$, the transcript of $P$ in the instance $(v^{\mathcal M}_{-i}, \mathcal M)$ differs from the transcript in the instance $(v^{\mathcal M'}_{-i}, \mathcal M')$.
\end{claim}
\begin{proof}
Assume towards contradiction that there exist $\mathcal M, \mathcal M'\in M^i$, $\mathcal M\neq \mathcal M'$ such that the transcript of the canonical instance $(v^{\mathcal M}_{-i}, \mathcal M)$ is identical to the transcript of the canonical instance $(v^{\mathcal M'}_{-i}, \mathcal M')$. Using standard fooling-set arguments (e.g., \cite{KN97}), this implies that the transcripts of $(v^{\mathcal M}_{-i}, \mathcal M')$ and $(v^{\mathcal M'}_{-i}, \mathcal M)$ are identical as well. We will show that this is false and reach a contradiction. Towards this end, observe that the last bit that player $i$ sends in both canonical instances is by construction $0$ (since the function $f$ identifies with the menu presented to player $i$). However, we will show that in either $(v^{\mathcal M}_{-i}, \mathcal M')$ or in $(v^{\mathcal M'}_{-i}, \mathcal M)$ the last bit that player $i$ sends is $1$. In particular we get a different transcript, which is a contradiction.

To see that in one of these instances the last bit that is communicated is $1$, notice that since $\mathcal M\neq \mathcal M'$, there must be a bundle $S$ such that $\mathcal M(S)\neq \mathcal M'(S)$. Assume without loss of generality that $\mathcal M(S)> \mathcal M'(S)$. Thus in the instance $(v^{\mathcal M'}_{-i}, \mathcal M)$ the last bit that player $i$ communicates is $1$ (because $f(S)=\mathcal M(S)$).
\end{proof}

This finishes the proof of Lemma \ref{lemma-verification}.
\end{proof}

We are now ready to complete the proof of Theorem \ref{thm-indexing-complexity}. For each class of valuations we will show a protocol $P$ for the menu verification problem with communication complexity equals the specified taxation complexity, and the theorem will follow by applying Lemma \ref{lemma-verification}. 

On a very high level, we would like to run (the most efficient implementation of) $A$ on the instance $(v_{-i},f)$ and decide the menu verification problem by observing the outcome. However, the main obstacle is that $f$ is in general not a proper valuation function (e.g., some entries might be $\infty$) and in particular does not belong to $\mathcal V$. The proof overcomes those obstacles in a different way for each valuation class. 

\paragraph{General valuations.} Given a base function $f$ and player $i$, define this valuation for player $i$:
$$
v_i(S)= \begin{cases} 
 f(S) &  f(S)<\infty; \\
3B & \text{otherwise.} 
\end{cases}
$$
Let $P$ be the following protocol: run the most efficient implementation of $A$ on the instance $(v_{-i},v_i)$. Let $S_i$ be the bundle that player $i$ is allocated. $\mathcal M(S_i)$ is therefore the payment of player $i$. We add an additional step at the end of the protocol: if $v_i(S_i)> \mathcal M(S_i)$ then player $i$ sends $1$ as the last bit, otherwise the last bit player $1$ sends is $0$. 

We claim that the last bit of player $i$ is $1$ if and only if there is some bundle $S$ such that $f(S)> \mathcal M(S)$. We break the proof into two cases, both relying on the simple observation that since $A$ is truthful by the taxation principle $S_i\in \arg\max_S v_i(S)-\mathcal M(S)$. 

\begin{claim}
Suppose that $f(S_i)<\infty$. Then, $f(S_i)> \mathcal M(S_i)$ if and only if for some bundle $S$, $f(S)> \mathcal M(S)$.
\end{claim}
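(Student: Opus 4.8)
The plan is to read everything off the taxation principle. Since $A$ is truthful, the bundle $S_i$ that player $i$ receives in the run of $A$ on $(v_{-i},v_i)$ satisfies $S_i \in \arg\max_S (v_i(S) - \mathcal M(S))$, where $v_i$ is the valuation built from the base function $f$ (equal to $f$ where $f$ is finite, and $3B$ elsewhere). Everything else is a short case analysis on $f(S)$ for a candidate witness $S$, so no real machinery is needed beyond the definition of $v_i$ and nonnegativity/normalization of $\mathcal M$.

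The easy direction ($\Rightarrow$) requires nothing: if $f(S_i) > \mathcal M(S_i)$, then $S := S_i$ already witnesses that some bundle has $f(S) > \mathcal M(S)$.

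For the other direction ($\Leftarrow$), I would fix any bundle $S$ with $f(S) > \mathcal M(S)$ and split according to whether $f(S)$ is finite. In the finite case, $v_i(S) = f(S)$ by definition of $v_i$, so player $i$'s profit at $S$ is $f(S) - \mathcal M(S) > 0$; since $S_i$ maximizes profit, $v_i(S_i) - \mathcal M(S_i) \ge f(S) - \mathcal M(S) > 0$, and invoking the standing hypothesis $f(S_i) < \infty$ (so that $v_i(S_i) = f(S_i)$) gives $f(S_i) > \mathcal M(S_i)$, as desired. In the infinite case $f(S) = \infty$, the inequality $f(S) > \mathcal M(S)$ can only hold if $\mathcal M(S) < \infty$, hence $\mathcal M(S) \le B$ by the choice of $B$ as the largest finite menu price; then $v_i(S) = 3B$ yields a profit of at least $3B - B = 2B$ at $S$. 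But $f(S_i) < \infty$ forces $v_i(S_i) = f(S_i) \le B$, so the profit at $S_i$ is at most $B$ (prices are nonnegative since $\mathcal M$ is normalized and monotone), contradicting that $S_i$ is profit-maximizing. Thus the infinite case cannot arise under the hypothesis $f(S_i) < \infty$, and the proof is complete.

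I do not expect a genuine obstacle here; the only delicate bookkeeping is the interplay between the padding value $3B$ and the bound $\mathcal M(S) \le B$, which is exactly what makes the $f(S) = \infty$ case collapse. The companion statement for $f(S_i) = \infty$ should follow by the symmetric argument, using $v_i(S_i) = 3B$ in place of $v_i(S_i) = f(S_i)$.
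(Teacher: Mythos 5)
Your proposal is correct and follows essentially the same approach as the paper: both rely on the taxation-principle observation that $S_i$ maximizes $v_i(\cdot)-\mathcal M(\cdot)$, split on whether a candidate witness $S$ has $f(S)$ finite or infinite, and use the $3B$ padding against the bound $\mathcal M(S)\le B$ to rule out the infinite case. The only cosmetic difference is that you argue the backward direction directly from a witness while the paper argues by contrapositive; the underlying inequalities are identical (and your phrasing even fixes what looks like a $\ge$/$\le$ typo in the paper's bound on the profit of $S_i$).
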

\begin{proof}
As noted above, $S_i\in \arg\max_S v_i(S)-\mathcal M(S)$. When $f(S_i)<\infty$ it holds that $f(S_i)=v_i(S_i)$ by construction, and thus if $v_i(S_i)>\mathcal M(S_i)$ then in particular $f(S_i)>\mathcal M(S_i)$. If $v_i(S_i)=f(S_i)\leq \mathcal M(S_i)$ then, since $S_i$ maximizes the profit, for all $S$ such that $f(S)<\infty$ we have that $f(S)\leq \mathcal M(S)$. As for $S$ with $f(S)=\infty$, if $\mathcal M(S)=\infty$ then $f(S)\leq \mathcal M(S)$. To finish the proof we claim that for all $S$ with $f(S)=\infty$, it holds that $\mathcal M(S)=\infty$. Otherwise, the profit of $S$ is $v_i(S)-\mathcal M(S)\geq 3B-B=2B$. The profit of $S_i$ is $v_i(S_i)-\mathcal M(S_i)\geq B-0$. Thus, the profit from the bundle $S$ is strictly larger than the profit from bundle $S_i$, a contradiction.
\end{proof}

\begin{claim}
Suppose that $f(S_i)=\infty$. There is some bundle $S$ for which $f(S)> \mathcal M(S)$.
\end{claim}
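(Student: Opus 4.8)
The plan is to exploit the fact that $S_i$ is a bundle that player $i$ \emph{actually wins} in the run of $A$ on $(v_{-i},v_i)$, so its entry in the menu $\mathcal M$ is forced to be finite. First I would recall that, by the taxation principle, $\mathcal M(S_i)$ is precisely the payment charged to player $i$ in this instance; since payments are real numbers we get $\mathcal M(S_i)<\infty$, and moreover $\mathcal M(S_i)\le B$ because $B$ is defined as the maximum finite price appearing in any menu of $M^i$. The one point that deserves an explicit sentence is why $\mathcal M(S_i)$ cannot be $\infty$: a bundle receives the price $\infty$ in the menu exactly when there is no valuation of player $i$ that makes him win it, but here he wins $S_i$ with the valuation $v_i$, so that case is excluded.

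Given this, the claim follows in one line: we are in the case $f(S_i)=\infty$, hence $f(S_i)=\infty>\mathcal M(S_i)$, so the bundle $S=S_i$ itself witnesses that $f(S)>\mathcal M(S)$ for some $S$, as required. For completeness I would also observe that this is consistent with the protocol $P$ outputting the bit $1$ in this case (as it should, by the ``$\Leftarrow$'' direction of the equivalence being proved): since $v_i(S_i)=3B$ while $\mathcal M(S_i)\le B$, the profit comparison $v_i(S_i)>\mathcal M(S_i)$ holds, so player $i$ sends $1$. There is no real obstacle in this case — all the difficulty was in the companion claim handling $f(S_i)<\infty$, where one must reason about the structure of the profit-maximizing set; once the bundle allocated to player $i$ has an infinite base-function value, the finiteness of its menu price does all the work.
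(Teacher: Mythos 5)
Your proof is correct and takes essentially the same approach as the paper: since $S_i$ is the bundle actually allocated to player $i$, its menu price is finite, and then $f(S_i)=\infty>\mathcal M(S_i)$ gives the witness immediately. The extra observations about $\mathcal M(S_i)\le B$ and the protocol outputting $1$ are harmless but not needed.
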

\begin{proof}
Since $S_i$ is the bundle that player $i$ is assigned in $A$, it must be that $\mathcal M(S_i)<\infty$. Hence we have that $f(S_i)>\mathcal M(S_i)$.
\end{proof}

\paragraph{Subadditive valuations.} The proof is almost identical to the proof for general valuations. The only change is that instead of considering the valuation $v_i$ we consider the subadditive valuation $v'_i(S)=v_i(S)+\max_Tv_i(T)$ for all $S\neq \emptyset$ (for $S=\emptyset$ we set $v'_i(S)=0$ as usual). Player $i$ sends $1$ if and only if $v'_i(S_i)-\max_Tv_i(T)>\mathcal M(S)$. The rest of the proof is identical since the proof relies only on the difference between pairs of bundles and for every $S$ and $S'$ ($S,S'\neq \emptyset$): $v_i(S)-v_i(S')=v'_i(S)-v'_i(S')$.

\paragraph{XOS valuations.} Fix some bundle size $r$ ($1\leq r\leq m$). We describe a protocol with communication complexity $c+1$ that will determine whether there exists some bundle $S$, $|S|=r$, such that $f(S)>\mathcal M(S)$. The claim will then follow by observing that there are $m$ possible bundle sizes. Define an XOS valuation $v_i$ that consists of the following additive valuations: for every bundle $T$ with $f(T)<\infty$, $|T|=r$, let $a_T$ be the valuation where $a_T(\{j\})= \frac {f(T)} {r} +3B$. If $f(T)=\infty$ and $|T|=r$, let $a_T$ be the valuation where $a_T(\{j\})= \frac {2B} {r} +3B$. 

Now, for every one of the possible bundle sizes $r$, run the most efficient implementation of $A$ on the instance $(v_{-i},v_i)$. Let $S_i$ be the bundle that player $i$ is allocated and $\mathcal M(S_i)$ is the payment of player $i$. We add an additional step at the end of $A$: if $|S_i|<r$ then the last bit that player $i$ sends is $0$. Else, if $v_i(S_i)-3B\cdot r> \mathcal M(S_i)$ then player $i$ sends $1$, otherwise he sends $0$. The protocol ends when player $i$ sends $1$ if and only if at the end of at least one step he sent $1$. Otherwise, the final bit is $0$.

The bound on the communication complexity for XOS valuations  follows from the next lemmas:

\begin{claim}\label{claim-xos-size}
Let $S_i$ be the bundle that player $i$ is assigned in the round where bundles of size $r$ are considered. If $|S_i|<r$ then there is no bundle $S$, $|S|=r$ with $f(S) > \mathcal M(S)$. Otherwise, there exists some $S'_i$ (possibly $S'_i=S_i$) such that $|S'_i|=r$, $f(S_i)=f(S'_i)$ and $v_i(S_i)=v_i(S'_i)$.
\end{claim}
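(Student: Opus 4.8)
The plan is to begin by rewriting the XOS valuation $v_i$ built from the base function $f$ in closed form, and then to read off both assertions from which clause of $v_i$ is maximizing. Recall each clause $a_T$ is indexed by a size-$r$ bundle $T$ and is supported on $T$ with uniform per-item weight $\tfrac{f(T)}{r}+3B$ when $f(T)<\infty$ and $\tfrac{2B}{r}+3B$ when $f(T)=\infty$. Writing $\tilde f(T):=f(T)$ if $f(T)<\infty$ and $\tilde f(T):=2B$ otherwise (so $0\le\tilde f(T)\le 2B$, using $f(T)\le B$ for finite entries and $f(\emptyset)=0$, $f$ monotone), we get $a_T(S)=|S\cap T|\cdot\bigl(\tfrac{\tilde f(T)}{r}+3B\bigr)$ for every $S$. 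Hence for $|S|<r$ one has $v_i(S)\le (r-1)\bigl(\tfrac{2B}{r}+3B\bigr)<3rB-B$, while for $|S|\ge r$ a clause with $T\subseteq S$ contributes $\tilde f(T)+3rB\ge 3rB$ and a clause with $T\not\subseteq S$ contributes at most $(r-1)\bigl(\tfrac{2B}{r}+3B\bigr)<3rB$, so only subset-clauses are maximizing and
$$v_i(S)=3rB+\max_{T\subseteq S,\ |T|=r}\tilde f(T)\qquad\text{whenever }|S|\ge r .$$
The maximizing clause at such an $S$ is then $a_{S'}$ for a size-$r$ subset $S'\subseteq S$ with $\tilde f(S')=\max_{T\subseteq S,|T|=r}\tilde f(T)$.

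Then I would prove the first (contrapositive) assertion by comparing profits to the threshold $3rB$. If some $S$ with $|S|=r$ satisfies $f(S)>\mathcal M(S)$, then $\mathcal M(S)<\infty$, hence $\mathcal M(S)\le B$, and the profit of $S$ is $v_i(S)-\mathcal M(S)=3rB+\tilde f(S)-\mathcal M(S)$: this is $>3rB$ when $f(S)<\infty$ (since then $\tilde f(S)=f(S)>\mathcal M(S)$) and $\ge 3rB+2B-B>3rB$ when $f(S)=\infty$. Every bundle of size $<r$ has $v_i(\cdot)<3rB$ and a nonnegative price, hence profit $<3rB$. By the taxation principle the bundle $S_i$ that the truthful mechanism $A$ allocates to player $i$ maximizes profit over all bundles, so its profit exceeds $3rB$, forcing $|S_i|\ge r$. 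Contrapositively, $|S_i|<r$ implies there is no size-$r$ bundle $S$ with $f(S)>\mathcal M(S)$.

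For the second assertion, assume $|S_i|\ge r$ and let $S'_i\subseteq S_i$ be a size-$r$ subset realizing $\tilde f(S'_i)=\max_{T\subseteq S_i,|T|=r}\tilde f(T)$ (so $S'_i=S_i$ if $|S_i|=r$). The displayed formula gives $v_i(S_i)=3rB+\tilde f(S'_i)$, and the same formula applied to $S'_i$ — whose only size-$r$ subset is itself — gives $v_i(S'_i)=3rB+\tilde f(S'_i)=v_i(S_i)$, the required value equality. The remaining equality of base values is the assertion $v_i(S_i)-3rB=\tilde f(S'_i)=v_i(S'_i)-3rB$, i.e.\ that the clause governing $v_i$ at $S_i$ is precisely the clause $a_{S'_i}$ encoding $f(S'_i)$; monotonicity of $f$ gives $f(S'_i)\le f(S_i)$, and the way this equality gets used afterwards also follows from menu monotonicity, since $S'_i\subseteq S_i$ yields $\mathcal M(S'_i)\le\mathcal M(S_i)$, whence profit-maximality of $S_i$ forces $\mathcal M(S'_i)=\mathcal M(S_i)$ and $S'_i$ is itself profit-maximizing.

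I expect the main obstacle to be the bookkeeping around infinite entries. One must check carefully that a clause $a_T$ with $T\not\subseteq S$ can never be maximizing — so that $v_i$ on bundles of size $\ge r$ is genuinely a maximum over size-$r$ subsets of $S$ — and, separately, handle the cases $f(S)=\infty$ and $\mathcal M(S)=\infty$ so that the two strict inequalities ``profit $>3rB$'' (for a violating size-$r$ bundle) and ``profit $<3rB$'' (for any smaller bundle) both hold, with the $3B$-per-item term supplying the separation. The constant $3B$ (rather than a smaller one) is chosen precisely to dominate the $\le 2B$ of slack an $\infty$-entry of $f$ may contribute; once this is set up, the rest is routine arithmetic of uniform clauses.
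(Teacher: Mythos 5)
Your proof is correct in substance and follows the same line as the paper's: compare profits against the threshold $3rB$ to rule out bundles of size $<r$, then use the structure of the uniform XOS clauses to produce the size-$r$ sub-bundle $S'_i$. You go further than the paper by extracting the clean closed form $v_i(S)=3rB+\max_{T\subseteq S,|T|=r}\tilde f(T)$ for $|S|\ge r$ (and the bound $v_i(S)<3rB-B$ for $|S|<r$); the paper's version of the same calculation is considerably terser and, at the step bounding the profit of $S_i$ by $B+(r-1)\cdot 3B$, slightly undercounts the contribution of clauses with $f(T)=\infty$ (where $\tilde f(T)=2B>B$). Your $<3rB-B$ bound handles this correctly and still gives the needed strict separation from the profit $\ge 3rB-B$ of any size-$r$ bundle with finite menu price.

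On the second assertion you are right to pause at $f(S_i)=f(S'_i)$: this equality is in fact \emph{false} in general (take $f(S_i)$ strictly larger than $f(T)$ for every size-$r$ $T\subseteq S_i$; then $\tilde f(S'_i)<f(S_i)$). What is provable, and what you prove, is $v_i(S_i)=v_i(S'_i)$ together with $f(S'_i)\le f(S_i)$; and then menu monotonicity plus profit-maximality of $S_i$ yield $\mathcal M(S'_i)=\mathcal M(S_i)$ and that $S'_i$ is itself a profit-maximizer, which is exactly what the subsequent two claims in the paper actually use (the very next claim invokes Claim \ref{claim-xos-size} only to conclude $S'_i\in\arg\max_S v_i(S)-\mathcal M(S)$). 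The paper's own proof likewise only establishes $v_i(T)=v_i(S_i)$ and never $f(S_i)=f(T)$, so the ``$f(S_i)=f(S'_i)$'' in the claim statement reads as a typo for either ``$\mathcal M(S_i)=\mathcal M(S'_i)$'' or ``$f(S'_i)\le f(S_i)$''. Your proposal correctly detects this and substitutes the usable facts, but you should say explicitly that the stated $f$-equality does not hold and identify which replacement is being proved, rather than leaving it at ``the way this equality gets used afterwards also follows.''
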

\begin{proof}
Suppose first that $|S_i|<r$. The profit is $v_i(S_i)-\mathcal M(S_i)= (\frac {f(S_i)} {r} +3B)\cdot |S_i|-\mathcal M(S_i)\leq B+ (r-1)\cdot 3B$. Suppose that there exists some bundle $T$ such that $|T|=r$ and $\mathcal M(T)<\infty$. The profit from $T$ is at least $|T|\cdot 3B-\mathcal M(T)\geq r\cdot 3B-B$, where we use the fact that the maximum price in $\mathcal M$ is $B$. Thus the profit from $T$ is strictly larger than the profit $S_i$, which is a contradiction to the taxation principle, since $S_i$ is not a profit-maximizing bundle. Therefore, if $|S_i|<r$ then there is no bundle $S$, $|S|=r$ with $f(S) > \mathcal M(S)$, since $\mathcal M(S)=\infty$ for all $S$ with $|S|=r$.

Now suppose that $|S_i|>r$. Let $T=\arg\max_{T\subseteq S_i, |T|=r}v_i(T)$. Notice that since $f$ is monotone and since by the taxation principle the profit of $S_i$ is at least that of $T$, we have that $v_i(T)\geq v_i(S_i)$. However, notice that by construction in the maximizing clause of $T$ there are only $r$ non zero elements, hence it holds that $v_i(T)= v_i(S_i)$, as needed.
\end{proof}

\begin{claim}
Let $S'_i$ be the bundle guaranteed by Claim \ref{claim-xos-size}. Suppose that $f(S'_i)<\infty$. Then, $f(S'_i)-3B\cdot r> \mathcal M(S'_i)$ if and only if for some bundle $S$, $f(S)> \mathcal M(S)$.
\end{claim}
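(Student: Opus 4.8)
The plan is to transport the statement from the bundle $S_i$ actually allocated by the protocol to the size-$r$ witness $S'_i$ supplied by Claim~\ref{claim-xos-size}, and then to exploit the rigid shape of the XOS valuation $v_i$ built from $f$: on bundles of size exactly $r$ the valuation is governed entirely by the ``diagonal'' clause, so $v_i(S)=f(S)+3rB$ whenever $|S|=r$ and $f(S)<\infty$. First I would record the consequences of truthfulness. By the taxation principle $S_i\in\arg\max_S\bigl(v_i(S)-\mathcal M(S)\bigr)$; write $\pi^{\ast}$ for this maximal profit. We are in the branch $|S_i|\ge r$, so Claim~\ref{claim-xos-size} gives $S'_i\subseteq S_i$ with $|S'_i|=r$ and $v_i(S'_i)=v_i(S_i)$. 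Menu monotonicity (Proposition~\ref{proposition-menu-monotonicity}) gives $\mathcal M(S'_i)\le\mathcal M(S_i)$, hence $v_i(S'_i)-\mathcal M(S'_i)\ge v_i(S_i)-\mathcal M(S_i)=\pi^{\ast}$; since $\pi^{\ast}$ is maximal this is an equality and $\mathcal M(S'_i)=\mathcal M(S_i)$. Consequently the bit $b_r$ sent in the round for bundle size $r$ (which is $1$ iff $v_i(S_i)-3rB>\mathcal M(S_i)$) is $1$ iff $v_i(S'_i)-3rB>\mathcal M(S'_i)$.

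Next I would pin down $v_i$ on size-$r$ bundles. For any $S$ with $|S|=r$, the clause $a_S$ yields $a_S(S)=r\bigl(f(S)/r+3B\bigr)$ when $f(S)<\infty$ and $r\bigl(2B/r+3B\bigr)$ when $f(S)=\infty$, whereas every other size-$r$ clause $a_T$ meets $S$ in at most $r-1$ items, each of weight at most $2B/r+3B$, so it contributes at most $(r-1)(2B/r+3B)=3rB-B-2B/r<3rB\le a_S(S)$. Hence $a_S$ is the maximizing clause and $v_i(S)$ equals $f(S)+3rB$ if $f(S)<\infty$ and $2B+3rB$ if $f(S)=\infty$. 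Plugging $S=S'_i$ (legitimate since $f(S'_i)<\infty$ by hypothesis) into the test from the previous paragraph turns ``$b_r=1$'' into exactly the inequality between $f(S'_i)$ and $\mathcal M(S'_i)$ appearing in the statement.

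Then I would match ``$b_r=1$'' with the existence of a size-$r$ witness. If $b_r=1$ then, by the two reductions above, $S'_i$ is itself a bundle of size $r$ with $f(S'_i)>\mathcal M(S'_i)$, which is the desired witness. Conversely, suppose some bundle $S$ with $|S|=r$ satisfies $f(S)>\mathcal M(S)$. Then $\mathcal M(S)\le B$, and either $f(S)<\infty$, so $v_i(S)-\mathcal M(S)=f(S)+3rB-\mathcal M(S)>3rB$, or $f(S)=\infty$, so $v_i(S)-\mathcal M(S)=2B+3rB-\mathcal M(S)\ge B+3rB>3rB$; in both cases the profit of $S$ exceeds $3rB$. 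By the first part of Claim~\ref{claim-xos-size} the existence of such an $S$ forces $|S_i|\ge r$, so the protocol reaches its comparison step, and since $S_i$ maximizes profit, $v_i(S_i)-\mathcal M(S_i)\ge v_i(S)-\mathcal M(S)>3rB$, i.e.\ $b_r=1$.

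The main obstacle I expect is the clause identity: it is the one place where the specific additive shift $3B$, the normalization $f(T)/r$, and the bound $f(T)\le B$ are all used together, and the verification that no off-target clause $a_T$ ($T\ne S$) can overtake $a_S$ rests on the quantitative slack $(r-1)(2B/r+3B)<3rB$, which one must check holds for every $r\ge 1$. The transfer step is conceptually routine but genuinely needed, since the protocol's arithmetic test is phrased in terms of the possibly oversized allocated bundle $S_i$ while the statement is phrased in terms of $S'_i$; one also has to track the $f(S)=\infty$ case carefully, which the hypothesis $f(S'_i)<\infty$ removes on the witness side but not on the ``some bundle $S$'' side.
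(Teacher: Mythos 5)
Your proposal is correct and follows the paper's argument: transfer the protocol's arithmetic test from $S_i$ to $S'_i$, use the clause structure to read $v_i$ on size-$r$ bundles as $f+3rB$, and translate in each direction between that inequality and the existence of a size-$r$ witness, with the $f(S)=\infty$ case handled by the magnitude gap. If anything you are more careful than the paper's version, which (being a near-copy of the general-valuations proof) contains typos: the shift is repeatedly written $3B$ where it should be $3Br$, the stated inequality should read $v_i(S'_i)-3Br>\mathcal M(S'_i)$ (equivalently $f(S'_i)>\mathcal M(S'_i)$) rather than $f(S'_i)-3Br>\mathcal M(S'_i)$, and ``for some bundle $S$'' is implicitly restricted to $|S|=r$. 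You make explicit three things the paper glosses over: the check that $a_S$ is the maximizing clause on every size-$r$ bundle via the slack $(r-1)(2B/r+3B)<3rB$; the deduction $\mathcal M(S'_i)=\mathcal M(S_i)$ from menu monotonicity together with profit maximality; and the $f(S)=\infty$ branch of the witness direction. These are tightenings of the same argument, not a different route.
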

\begin{proof}
Since $S_i\in \arg\max_S v_i(S)-\mathcal M(S)$, by Claim \ref{claim-xos-size} $S'_i\in \arg\max_S v_i(S)-\mathcal M(S)$. When $f(S'_i)<\infty$ it holds that $f(S'_i)+3B=v_i(S'_i)$ by construction, and thus if $v_i(S'_i)-3B>\mathcal M(S'_i)$ then in particular $f(S_i)>\mathcal M(S_i)$. If $v_i(S'_i)=f(S'_i)\leq \mathcal M(S'_i)$ then, since $S'_i$ maximizes the profit, for all $S$ such that $f(S)<\infty$ we have that $f(S)\leq \mathcal M(S)$. As for $S$ with $f(S)=\infty$, if $\mathcal M(S)=\infty$ then $f(S)\leq \mathcal M(S)$. To finish the proof of the claim we show that for all $S$ with $f(S)=\infty$, we have that $\mathcal M(S)=\infty$. Assume otherwise. The profit of $S$ is $v_i(S)-\mathcal M(S)\geq 5B-B=4B$. The profit of $S_i$ is $v_i(S_i)-\mathcal M(S_i)\geq 3B-0$. Thus, the profit of $S$ mis strictly bigger than that of $S_i$, a contradiction to the taxation principle.
\end{proof}

\begin{claim}
Let $S'_i$ be the bundle guaranteed by Claim \ref{claim-xos-size}. Suppose that $f(S'_i)=\infty$. There is some bundle $S$ for which $f(S)> \mathcal M(S)$.
\end{claim}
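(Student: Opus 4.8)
The plan is to mimic, almost verbatim, the proof of the corresponding claim in the general-valuations case: the bundle $S_i$ that $A$ actually allocates to player $i$ will itself serve as the witness $S$.

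First I would pin down which branch of Claim~\ref{claim-xos-size} we are in. That claim only produces the bundle $S'_i$ when $|S_i|\geq r$; in the branch $|S_i|<r$ there is no $S'_i$ at all. Since the present statement presupposes $f(S'_i)=\infty$, we are necessarily in the $|S_i|\geq r$ branch, and Claim~\ref{claim-xos-size} then guarantees $f(S_i)=f(S'_i)$ (and also $v_i(S_i)=v_i(S'_i)$, $|S'_i|=r$, though only the first equality is needed here). By hypothesis $f(S'_i)=\infty$, hence $f(S_i)=\infty$.

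Next I would invoke the definition of the menu. Player $i$ is allocated $S_i$ in the execution of $A$ on $(v_{-i},v_i)$, where $v_i$ is the XOS valuation built from $f$ for bundle size $r$. By the convention accompanying the taxation principle, $\mathcal M(S)=\infty$ only when \emph{no} valuation of player $i$ (given $v_{-i}$) leads to $i$ being allocated $S$; since $v_i$ does lead to $S_i$ being allocated, this forces $\mathcal M(S_i)<\infty$. Combining the two observations gives $f(S_i)=\infty>\mathcal M(S_i)$, so $S:=S_i$ is a bundle with $f(S)>\mathcal M(S)$, as required.

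I do not expect any genuine obstacle here: the statement is a one-line consequence of Claim~\ref{claim-xos-size} together with the trivial fact that an allocated bundle has finite menu price, exactly paralleling the general-valuations argument. The only point needing a moment's care is confirming that $S'_i$ — and hence the identity $f(S_i)=f(S'_i)$ — is actually available, i.e.\ that we are in the $|S_i|\geq r$ branch of Claim~\ref{claim-xos-size}; but as noted, the assumption $f(S'_i)=\infty$ already places us in precisely that branch.
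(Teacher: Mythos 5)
Your proof is correct and takes essentially the same approach as the paper. The paper's own proof also hinges on the observation that the allocated bundle $S_i$ has finite menu price and then passes to $S'_i$ (noting $\mathcal M(S'_i)=\mathcal M(S_i)<\infty$ via $S'_i\subseteq S_i$, $v_i(S'_i)=v_i(S_i)$, and menu monotonicity), whereas you use $S_i$ directly as the witness via the guarantee $f(S_i)=f(S'_i)$; these are interchangeable, and in fact $f(S_i)=\infty$ also follows just from monotonicity of $f$ and $S'_i\subseteq S_i$, so your argument does not even strictly depend on that clause of Claim~\ref{claim-xos-size}.
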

\begin{proof}
Since $S_i$ is the bundle that player $i$ is assigned in $A$, it must be that $\mathcal M(S_i)=\mathcal M(S'_i)<\infty$. Hence we have that $f(S'_i)>\mathcal M(S'_i)$.
\end{proof}

\paragraph{Submodular valuations.} We will develop a protocol with communication complexity $c+1$ that determines whether there exists some bundle $S$, $|S|=k$ and $f(s)=w$ such that $f(S)>\mathcal M(S)$. The claim will then follow by observing that there are $m$ possible bundle sizes and $d$ possible values. Let $\mathcal S=\{|S|=k \hbox{ and } f(s)=w\}$. Let $t=2^{m+1}\cdot B$. Define the following valuation:
\[
v_i(S)=\left\{
  \begin{array}{ll}
    |S|\cdot t, & |S|<k; \\
    k\cdot t, & \hbox{$\exists T\in\mathcal S$} \hbox{ s.t. $T\subsetneq S$} ; \\
    ( k-\frac 1 {2^{|S|}}) \cdot t, & \hbox{otherwise.}
  \end{array}
\right.
\]

\begin{claim}
\label{cl:v-submodular}
$v_i$ is non-decreasing and submodular.
\end{claim}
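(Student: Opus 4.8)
The plan is to verify monotonicity and submodularity at once via the standard marginal-value characterization. Writing $\Delta_j(S):=v_i(S\cup\{j\})-v_i(S)$ for $j\notin S$, it suffices to show that $\Delta_j(S)\ge 0$ for every $S$ and $j\notin S$ (this is precisely monotonicity) and that $\Delta_j(S)\ge\Delta_j(S\cup\{i\})$ for every $S$ and every two distinct $i,j\notin S$ (this is precisely submodularity). Since $v_i$ is just the fixed positive multiple $t$ of the cardinality-type function that equals $|S|$ when $|S|<k$, equals $k$ when $|S|\ge k$ and $S$ contains a member of $\mathcal S$, and equals $k-2^{-|S|}$ otherwise, and since scaling by $t>0$ preserves both properties, I will work with $v_i$ directly while keeping the factor $t$ explicit. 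Throughout I use that $k\ge 1$ (bundle sizes range over $1,\dots,m$), which gives $v_i(\emptyset)=0$ and, more usefully, the inequality $2^{-k}\le 1-2^{-k}$.

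The first step is to compute $\Delta_j(S)$ according to the size of $S$. If $|S|\le k-2$ then both $S$ and $S\cup\{j\}$ have size below $k$, so $\Delta_j(S)=t$. If $|S|=k-1$ then $S\cup\{j\}$ has size exactly $k$ and hence cannot strictly contain a member of $\mathcal S$ (all members of $\mathcal S$ have size $k$), so $v_i(S\cup\{j\})=(k-2^{-k})t$ and $\Delta_j(S)=(1-2^{-k})t$. If $|S|\ge k$ there are three sub-cases: if $S$ already contains a member of $\mathcal S$ then so does $S\cup\{j\}$ and $\Delta_j(S)=0$; otherwise either $S\cup\{j\}$ contains a member of $\mathcal S$, giving $\Delta_j(S)=2^{-|S|}t$, or it does not, giving $\Delta_j(S)=2^{-|S|}t-2^{-(|S|+1)}t=2^{-(|S|+1)}t$. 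In every case $0\le\Delta_j(S)\le t$, so monotonicity follows immediately. I record for later the two bounds: $\Delta_j(S)\ge(1-2^{-k})t$ whenever $|S|\le k-1$, and $\Delta_j(S)\le 2^{-|S|}t$ whenever $|S|\ge k$, with the sharper $\Delta_j(S)\ge 2^{-(|S|+1)}t$ whenever in addition $S$ contains no member of $\mathcal S$.

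For submodularity I would then fix $S$ and distinct $i,j\notin S$, set $S'=S\cup\{i\}$, and compare $\Delta_j(S)$ with $\Delta_j(S')$ by cases on $|S|$. If $|S|\le k-2$ then $\Delta_j(S)=t\ge\Delta_j(S')$. If $|S|=k-1$ then $|S'|=k$, so by the recorded bounds $\Delta_j(S')\le 2^{-k}t\le(1-2^{-k})t=\Delta_j(S)$. If $|S|\ge k$ then $|S'|=|S|+1\ge k$ too, and there are two cases: if $S$ contains a member of $\mathcal S$ then so does $S'$ and both marginals are $0$; otherwise $\Delta_j(S)\ge 2^{-(|S|+1)}t=2^{-|S'|}t\ge\Delta_j(S')$. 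This exhausts all cases. The only mildly delicate point — the part I would be careful about — is the bookkeeping in the regime $|S|\ge k$: one must use both that ``containing a member of $\mathcal S$'' is preserved under taking supersets (so the regime $\Delta_j(S)=0$ is upward closed, which neutralizes the only comparison of $0$ against a positive quantity), and that $2^{-|S|}-2^{-(|S|+1)}$ is again of the form $2^{-(|S|+1)}$, which is exactly what makes the recorded lower bound at $S$ meet the recorded upper bound at $S'$. Everything else is routine arithmetic with the constants and the hypothesis $k\ge 1$.
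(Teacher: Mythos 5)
Your proof is correct, and it follows essentially the same case analysis as the paper: compute the marginals $\Delta_j(S)$ in the three regimes $|S|\le k-2$, $|S|=k-1$, and $|S|\ge k$ (splitting the last on whether $S$ already strictly contains a member of $\mathcal S$), observe they decrease from $t$ to $(1-2^{-k})t$ to at most $2^{-|S|}t$, and chain the inequalities. The only cosmetic difference is that you verify the single-element form $\Delta_j(S)\ge\Delta_j(S\cup\{i\})$ while the paper verifies the equivalent general-subset form $\Delta_j(S)\le\Delta_j(T)$ for $T\subsetneq S$; and where you write ``contains a member of $\mathcal S$'' in the $|S|\ge k$ regime you of course mean \emph{strictly} contains (matching $T\subsetneq S$ in the definition of $v_i$), as you state explicitly in the $|S|=k-1$ case---with that reading every step checks out, including the boundary computation at $|S|=k$, $S\in\mathcal S$.
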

\begin{proof}
It is easy to verify that $v_i$ is non-decreasing. We now show that $v_i$ is submodular, i.e., marginal values are non-increasing: $v_i(S\cup \{j\})-v_i(S)\leq v_i(T\cup \{j\})-v_i(T)$, for every $T\subsetneq S, j\notin S$. We divide the analysis into several cases:
\begin{itemize}
\item $|T| \leq k-2$: then we have $v_i(T\cup \{j\})-v_i(T)=t$. We always have $v_i(S\cup \{j\})-v_i(S) \leq t$, because all marginal values are at most $t$.

\item $|T| = k-1$: then $v_i(T\cup \{j\}) - v_i(T) \geq \frac{t}{2}$, since all bundles of size $k$ have value at least $(k-\frac12) t$. On the other hand, $v_i(S\cup \{j\}) - v_i(S) \leq \frac{t}{2}$, because by the same token $v_i(S) \geq (k-\frac12) t$ and $v_i(S \cup \{j\}) \leq kt$.

\item $|T| \geq k$ and some subset of $T$ (possibly itself) is in $\mathcal S$: in this case, it is enough to note that $v_i(S) = k\cdot t$ by the second case of the definition, and so $v_i(S\cup \{j\}) - v_i(S) = 0$, while $v_i(T\cup \{j\}) - v_i(T) \geq 0$.

\item $|T| \geq k$ and no subset of $T$ (including itself) is in $\mathcal S$:
Then $v_i(T) = (k - \frac{1}{2^{|T|}}) t$ by the last case of the definition,
and $v_i(T\cup \{j\}) - v_i(T) \geq \frac{t}{2^{|T|+1}}$. On the other hand, $v_i(S) \geq (k - \frac{1}{2^{|S|}}) t$, which implies that $v_i(S\cup \{j\}) - v_i(S) \leq \frac{t}{2^{|S|}} \leq \frac{t}{2^{|T|+1}}$.
\end{itemize}
\end{proof}

Now, for every possible bundle size $k$ and value $w$, run the most efficient implementation of $A$ on the instance $(v_{-i},v_i)$. Let $S_i$ be the bundle that player $i$ is allocated and $\mathcal M(S_i)$ be the payment of player $i$. We add an additional step at the end of $A$: if $v_i(S_i)=t\cdot k$ and $\mathcal M(S_i)< w$ then player $i$ sends $1$, otherwise he sends $0$. The protocol ends when player $i$ sends $1$ if and only if at the end of at least one step he sent $1$. Otherwise, the final bit is $0$.

\begin{claim}
Let $S_i$ be the bundle that player $i$ is assigned in the round where we consider bundle size $k$ and value $w$.
If there exists a bundle $S$ such that $f(S)=w$ and $\mathcal M(S)<\infty$, then $v_i(S_i)=t\cdot k$.
\end{claim}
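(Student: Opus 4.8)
The plan is to combine the taxation principle with the large multiplicative separation that the choice $t = 2^{m+1}\cdot B$ builds into $v_i$. Since $A$ is truthful, the taxation principle tells us that $S_i$ maximizes player $i$'s profit against the menu $\mathcal M$, i.e. $S_i \in \arg\max_S\bigl(v_i(S) - \mathcal M(S)\bigr)$. The first step I would carry out is to record the dichotomy in the values taken by $v_i$: every bundle $S$ that contains some member of $\mathcal S$ has $v_i(S) = k\cdot t$ (in particular any $S^\star \in \mathcal S$ has $v_i(S^\star)=k\cdot t$, containing itself), while every bundle $S$ containing no member of $\mathcal S$ has $v_i(S) \le k\cdot t - 2B$. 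The latter bound follows directly from the definition of $v_i$: if $|S| < k$ then $v_i(S) = |S|\cdot t \le (k-1)t \le kt - 2B$ (using $t \ge 2B$), and if $|S| \ge k$ then $v_i(S) = (k - 2^{-|S|})t \le (k - 2^{-m})t = kt - 2B$. Equivalently, $v_i$ takes no value in the open interval $(kt-2B,\,kt)$.

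Next I would use the hypothesis to lower-bound the optimal profit, and then conclude. By hypothesis there is a bundle $S^\star$ of size $k$ with base value $f(S^\star) = w$ — so $S^\star \in \mathcal S$ — with $\mathcal M(S^\star) < \infty$; since $B$ is the largest finite menu price, $\mathcal M(S^\star) \le B$, and by the first step $v_i(S^\star) = k\cdot t$. Hence the optimal profit is at least $v_i(S^\star) - \mathcal M(S^\star) \ge k\cdot t - B$, so $v_i(S_i) - \mathcal M(S_i) \ge k\cdot t - B$. By menu normalization and monotonicity (Proposition \ref{proposition-menu-monotonicity}), $\mathcal M(S_i) \ge \mathcal M(\emptyset) = 0$, and therefore $v_i(S_i) = \bigl(v_i(S_i) - \mathcal M(S_i)\bigr) + \mathcal M(S_i) \ge k\cdot t - B$. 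If $v_i(S_i) \ne k\cdot t$, then $S_i$ contains no member of $\mathcal S$, so by the first step $v_i(S_i) \le k\cdot t - 2B$; combined with $v_i(S_i) \ge k\cdot t - B$ this gives $B \le 0$, hence $B = 0$ (as $B\ge 0$), and then already $v_i(S_i) \ge k\cdot t$, so $v_i(S_i) = k\cdot t$ in this case too. In all cases $v_i(S_i) = k\cdot t$, proving the claim.

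I expect the only genuinely delicate point to be the numerical bookkeeping in the first step, and it is exactly what dictates the choice $t = 2^{m+1}B$: the construction must place every ``wrong'' bundle strictly more than the largest finite menu price $B$ below the target value $kt$, the tightest case being a bundle of size $m$ carrying no member of $\mathcal S$, whose defect is $2^{-m}t = 2B > B$. A second point worth emphasizing is why the hypothesis is needed at all: we must exhibit a bundle that attains value $kt$ at a \emph{finite} menu price, and $S^\star$ supplies exactly such a bundle; without the hypothesis the profit-maximizer could legitimately be a bundle of value $(k - 2^{-|S_i|})t$ all of whose value-$kt$ competitors carry infinite price, and the conclusion would fail.
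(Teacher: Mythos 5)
Your proof is correct and follows the same line of reasoning as the paper's: use the witness bundle $S^\star$ and the choice $t = 2^{m+1}B$ to create a profit gap of more than $B$ between any bundle attaining the top value $kt$ and any bundle that does not, and then invoke the taxation principle. The paper compresses this into one chain of inequalities; you make the ``dichotomy'' ($v_i$ takes no value in $(kt - 2B,\,kt)$) explicit and, unlike the paper, you handle the degenerate case $B=0$ rather than silently assuming a strict separation. Those are genuine improvements in rigor, not a different approach.

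One subtlety worth flagging, which affects both your write-up and the paper's: the second case in the definition of $v_i$ reads $T \subsetneq S$ (strict inclusion). Read literally, a bundle $S^\star \in \mathcal S$ does not strictly contain any member of $\mathcal S$, so $v_i(S^\star) = (k - 2^{-k})t$, not $kt$, and the lower bound $kt - B$ on the optimal profit would fail (the gap $2^{-k}t = 2^{m+1-k}B$ can vastly exceed $B$ when $k$ is small). Your parenthetical ``containing itself'' is silently replacing $\subsetneq$ by $\subseteq$. This is almost certainly a typo in the paper (the submodularity proof likewise speaks of ``some subset of $T$ (possibly itself)''), and the intended definition is $T \subseteq S$; but since your step 1 is exactly where the $\subseteq$ reading enters, it is worth stating the correction explicitly rather than inheriting the slip.
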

\begin{proof}
Suppose that there exists some bundle $S$ such that $f(S)=w$ and $\mathcal M(S)<\infty$. The profit from $S$ is at least $v_i(S)-\mathcal M(S)=t\cdot k-B$. We now compute the profit of any other bundle $T$ with $v_i(T)<t\cdot k$:  $v_i(S_i)-\mathcal M(S_i)\leq v_i(S_i)-0  \leq t\cdot k - \frac t {2^m}=t\cdot k-\frac {2^{m+1}\cdot B} {2^m}=t\cdot k-2B$. Thus, $S$ is more profitable than any such $T$ and by the taxation principle $T$ cannot be selected.
%
\end{proof}

Let $\mathcal S=\{S| f(S)=w \hbox { and }\mathcal M(S)<\infty\}$. If $\mathcal S\neq \emptyset$ then by the claim $S_i\in \mathcal S$ and furthermore by the taxation principle $\mathcal M(S_i)\in \arg\min_{S\in \mathcal S}\mathcal M(S)$. In particular, if there exists a bundle $S\in \mathcal S$ with $\mathcal M(S)<w$, then $\mathcal M(S_i)<w$.

If $\mathcal S=\emptyset$, then it immediately holds that for all bundles $S$ with $f(S)=w$ it holds that $\infty =\mathcal M(S)>f(S)=w$. This finishes the proof of theorem \ref{thm-indexing-complexity}.

\section{Proof of Theorem \ref{thm-construct-menu} -- The Menu Reconstruction Theorem}\label{sec-thm-construct-menu}

Assume that the player $i$ that we want to construct the menu for is $n$ (the proof is identical otherwise). Fix the valuations of the other players to be $v_{-n}=(v_1,\ldots, v_{n-1})$ and let $\mathcal R$ be the menu they present to player $n$. The idea is to find $\mathcal R$ by obtaining for every menu $\mathcal M\in M^i$, $\mathcal M\neq \mathcal R$, a ``witness'' that proves that $\mathcal M\neq \mathcal R$. Specifically, a menu $\mathcal M$ is \emph{not alive} if we have found some bundle $S$ such that $\mathcal M(S)\neq \mathcal R(S)$. 

The basic idea is to have several \emph{shrinkage} steps, where in each step $j$ we construct a set $U_j\subseteq U_{j-1}$ of menus that are still alive. Let $U_0=M^n$ (where $M^n$ is the set of all menus that may be presented to player $n$) and recall that $tax(A)\geq \log |M^n|$. The goal of each shrinkage step $j$ is to find a set of live menus $U_{j}$ such that $|U_{j}|\leq \frac {|U_{j-1}|} 2$ using only $poly(tax(A),price(A))$ bits of communication, obviously making sure that $\mathcal R\in U_j$. If we are able to accomplish that, we are guaranteed that after at most $tax(A)$ steps only one menu is alive. This menu must be $\mathcal R$.

\subsection{Part 1: Shrinkage Steps}\label{subsec-shrinkage}

We now describe shrinkage step $j$. For every bundle $S$, let $p_S\in\arg\max_{p}|\{\mathcal M:\mathcal M(S)=p,\mathcal M\in U_{j-1}\}|$ be a most frequent price of $S$ among all menus that are still live. If $p_S$ repeats in less than half of the menus that are in $U_{j-1}$, we check the price $\mathcal R(S)$ (using $price(A)$ bits of communication) and shrink the set of live menus: keep in $U_j$ only menus $\mathcal M\in U_{j-1}$ with $\mathcal M(S)= \mathcal R(S)$. 

The more difficult case is when for every bundle $S$, $p_S$ repeats in at least half of the menus of $U_{j-1}$. The solution to this case relies on a reduction to a problem that we call \emph{$z$-disjointness}.

\paragraph{$z$-Disjointness.} 

The basic setup of the $z$-disjointness problem is very similar to that of (the multi-player version of) set disjointness. We have $n$ players, where the input of each player $i$ is $A^i\in \{0,1\}^l$. A bit $k$ is \emph{disjoint} if there exists some player $i$ with $A^i=0$, bit $k$ \emph{intersects} otherwise.  The goal, as in the set-disjointness problem, is to determine whether there is a bit that intersects. Additionally, we have the following restrictions:
\begin{itemize}
\item For each player $i$ the input $A^i$ of player $i$ belongs to some (known) set $S^i\subseteq \{0,1\}^l$.

\item For every possible input in $(A_1,\ldots, A_n)\in S^1\times \ldots \times S^n $ there are at most $z$ bits that intersect.
\end{itemize}

We will show that:
\begin{theorem}\label{thm-block-disjointness}
There is protocol with communication complexity $O(z^2\cdot n^2\cdot \log l)$ that decides the $z$-disjointness problem.
\end{theorem}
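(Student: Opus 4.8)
The plan is to have the protocol explicitly recover the set $I$ of intersecting bits (which has size at most $z$ by the promise) and then answer ``yes'' iff $I\neq\emptyset$. Writing $T^i=\{k\in[l]:A^i_k=1\}$, so that $I=\bigcap_i T^i$, the engine is a recursive procedure $\mathrm{Find}(U)$ that, on a candidate universe $U\subseteq[l]$, returns $I\cap U$; note that the promise is inherited by $U$ since $\bigcap_i(T^i\cap U)\subseteq I$, and the sets $S^i$ restricted to $U$ remain common knowledge, so each subcall is again an instance of $z$-disjointness.

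First I would dispose of the cases where progress is cheap. Since $U$ is public, every player $i$ knows both $|U\cap T^i|$ and $|U\setminus T^i|$; as long as one of these is at most $z$, player $i$ broadcasts the smaller of the two sets together with one bit indicating which it is (cost $O(z\log l)$), and we replace $U$ by $U\cap T^i$. Each player can trigger such a shrink at most once before all players are consulted again, so after at most $n$ cheap shrinks we are in one of two situations: either $|U|\le z$, in which case the players directly query every bit of $U$ ($O(zn)$ communication) and $\mathrm{Find}(U)$ returns; or we reach a \emph{balanced} configuration in which every player $i$ has $\min\{|U\cap T^i|,|U\setminus T^i|\}>z$. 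In the balanced case I would split $U$ into two halves by a fixed, publicly known rule and recurse on each half. Since each split at least halves $|U|$ (and splits only occur while $|U|>z$), the recursion has depth $O(\log l)$, and each node costs $\mathrm{poly}(z,n,\log l)$; if only $O(z)$ nodes were kept per level we would be done after careful bookkeeping with a bound of the form $O(z^2 n^2\log l)$.

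The main obstacle, and the only place where the full strength of the hypothesis enters, is preventing the recursion from branching into sub-universes that contain no intersecting bit: a sub-universe with $I\cap U=\emptyset$ must be recognized cheaply rather than recursed upon. The key claim I would establish is that a balanced configuration on a universe $U$ with $I\cap U=\emptyset$ cannot persist. Concretely, I would interleave the cheap-shrink loop with a bounded number of direct probes (a probe queries a single bit of $U$); using the promise together with balance, a probe either discovers an intersecting bit — in which case the node is ``fruitful'' and we proceed — or it forces some player to expose more than $z$ fresh zeros, hence triggers a cheap shrink, so that a node with no intersecting bit collapses its candidate set to $\emptyset$ after $O(zn)$ probes and never splits. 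It is essential here that the promise holds for \emph{every} input in $S^1\times\cdots\times S^n$, not just the realized one: a balanced-yet-empty configuration is ruled out precisely because otherwise the current partial transcript could be completed to an input with more than $z$ intersecting bits. Charging every split and every probe to one of the at most $z$ intersecting bits, across $O(\log l)$ levels of recursion, yields the stated complexity.

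An alternative route to the balanced case, which I would keep in reserve, is to hash $U$ into $\mathrm{poly}(z,n)$ buckets by a hash function chosen — deterministically, exploiting that the $S^i$ are known — so that the at most $z$ intersecting bits of \emph{every} admissible input land in distinct buckets; each bucket then becomes a $1$-disjointness instance, and the heart of the matter there is exactly the same, namely arguing that such a hash exists and that buckets with no intersecting bit are discarded for free. I expect the deterministic collapse argument for empty balanced universes to be the technically delicate step; the cheap-shrink accounting and the reduction of the decision problem to menu reconstruction are routine once that step is in hand.
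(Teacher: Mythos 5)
Your overall strategy (explicitly recover the intersecting set $I$ via a recursive universe-shrinking procedure) is reasonable in spirit, but the argument as sketched has a genuine gap at precisely the step you flag as ``delicate,'' and I believe that step is not merely incomplete but false, so the approach does not go through.

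The problem is the claim that a balanced universe $U$ with $I\cap U=\emptyset$ must collapse after $O(zn)$ probes. Consider two players with $S^1=\{T^1\}$ and $S^2=\{T^2\}$, where $T^1$ is the set of odd-indexed bits of $U$ and $T^2$ the set of even-indexed bits. Every input in $S^1\times S^2$ has zero intersecting bits, so the promise holds vacuously; both players have $|U\cap T^i|=|U\setminus T^i|=|U|/2>z$, so the configuration is balanced; and a probe of any single bit reveals exactly one zero to exactly one player. After $O(zn)$ probes each player has exposed only $O(z n)$ zeros, which is nowhere near the $|U|/2-z$ needed to trigger a cheap shrink when $|U|$ is large. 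There is also no way to ``complete the transcript to an input with more than $z$ intersecting bits,'' because the $S^i$ are singletons. So the recursion splits $U$ in half, each half is again balanced (odd and even bits are evenly distributed), and you get a full binary tree of depth $\Theta(\log l)$ with $\Theta(l/z)$ leaves, giving total cost $\Omega(l\cdot n)$ rather than $O(z^2n^2\log l)$. Note also that your cheap-shrink trigger only inspects the realized sets $T^i=\{k:A^i_k=1\}$, which is much weaker information than what the promise actually constrains.

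The paper resolves this with two ideas you do not use. First, for $z=1$ it works with a notion of \emph{neighborhood} defined over the entire admissible set $S^i$, not the realized input: $N_i(k)=\{j: \exists A\in S^i,\ A_j=A_k=1\}$. If every player's neighborhood of some announced bit $k$ with $A^i_k=1$ were large, one could intersect them to construct an admissible input with two intersecting bits, contradicting the $z=1$ promise; so some player's $N_i(k)$ is small, the players publicly restrict to it, and after $O(n\log l)$ such rounds (each shrinking by a $\frac{1}{2n}$ fraction) the universe is constant-sized. This yields $O(n^2\log l)$ for $1$-disjointness. Second, general $z$ is reduced to $z=1$ by the \emph{$z$-product}: lift each $A^i$ to a string on $\binom{l}{z}$ coordinates, one per $z$-subset, set to $1$ iff all $z$ coordinates are $1$; an instance has exactly $z$ intersecting bits iff its $z$-product has exactly one, so after the lift the $1$-disjointness promise holds. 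A ``no'' answer shows there are at most $z-1$ intersecting bits, and one recurses, giving $f_z(l)\le z\cdot f_1(l^z)=O(z^2n^2\log l)$. The $z$-product plays exactly the role your ``probe/collapse'' argument was supposed to play, and it is what you would need to make your plan work.
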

We postpone the proof of Theorem \ref{thm-block-disjointness} to Section \ref{sec-solve-block-disjoitness}.

\subsection{Part 2: ``Decreasing'' the Number of Witness Bundles}

Our goal is to reduce the problem of determining whether there is a bundle $S$ with $\mathcal R(S)\neq p_S$ (a witness bundle) to the $z$-disjointness problem. It will be helpful to ``reduce'' the number of witness bundles to $z$ (for a value of $z$ that will be determined later). For a menu $\mathcal M$, denote by $w(\mathcal M)$ the number of its witness bundles.

\begin{definition}
Let $Z'$ be a set of menus such that for each $\mathcal M\in Z$, $w(\mathcal M)\leq z$. Let $Z\subseteq Z'$ be a set of menus such with $\frac z 2 \leq w(\mathcal M)\leq z$. We say that a set of bundles $P$ \emph{represents} $Z$ if the following conditions simultaneously hold:
\begin{enumerate}
\item For every $\mathcal M\in Z$, there is a bundle $S\in P$ such that $\mathcal M(S)\neq p_S$.
\item For every $\mathcal M\in Z'$, $w(\mathcal M)\leq 8\log |Z'|$.
\end{enumerate}
\end{definition}

\begin{lemma}\label{lemma-represents}
There is a set of bundles $P$ that represents $Z$.
\end{lemma}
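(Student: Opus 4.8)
The plan is to produce $P$ by the probabilistic method: include each bundle of $M$ in $P$ independently with a probability $q$ chosen below. Two competing pressures determine $q$. Condition~1 requires that $P$ contain, for \emph{every} menu $\mathcal M\in Z$, at least one of its witness bundles (a bundle $S$ with $\mathcal M(S)\neq p_S$); since every such $\mathcal M$ has $w(\mathcal M)\geq z/2$ witness bundles, this wants $q$ reasonably large. Condition~2 requires that for \emph{every} menu $\mathcal M\in Z'$ at most $8\log|Z'|$ of its witness bundles land in $P$; since every such $\mathcal M$ has $w(\mathcal M)\leq z$ witness bundles, this wants $q$ small. The key point is that the witness counts of menus in $Z$ and in $Z'$ differ only by a factor of two, while the union bound for Condition~1 costs only a $\Theta(\log|Z'|)$ factor in the exponent and Condition~2 grants exactly a $\Theta(\log|Z'|)$ slack, so a single value of $q$ can serve both.

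Concretely, I would set $q=\min\{1,\ c\cdot\ln(2|Z'|)/z\}$ for a suitable absolute constant $c$ (a value around $c=5/2$ works), and assume $|Z'|\geq 2$ (the degenerate case $|Z'|=1$, where $Z$ is a single menu and $8\log|Z'|=0$, does not arise from the shrinkage step). If the minimum above is $1$, i.e.\ $z\leq c\ln(2|Z'|)$, simply take $P$ to be the set of all bundles: Condition~1 holds because every $\mathcal M\in Z$ has $w(\mathcal M)\geq z/2\geq 1$ witness bundles, all of which lie in $P$; Condition~2 holds because the number of witness bundles of any menu that lie in $P$ is just $w(\mathcal M)\leq z\leq c\ln(2|Z'|)\leq 8\log|Z'|$, using $|Z'|\geq 2$. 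So assume $q<1$ and draw $P$ at random as above.

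For Condition~1, fix $\mathcal M\in Z$; each of its $w(\mathcal M)\geq z/2$ witness bundles fails to be chosen independently with probability $1-q$, so
$$\Pr[\,P\text{ contains no witness bundle of }\mathcal M\,]\ \leq\ (1-q)^{z/2}\ \leq\ e^{-qz/2}\ =\ (2|Z'|)^{-c/2}.$$
A union bound over the at most $|Z'|$ menus of $Z$ bounds the probability that Condition~1 fails by $|Z'|\cdot(2|Z'|)^{-c/2}$, which is at most $2^{-3/2}<1/2$ for $c=5/2$. For Condition~2, fix $\mathcal M\in Z'$; the number of witness bundles of $\mathcal M$ that land in $P$ is a sum of $w(\mathcal M)\leq z$ independent $\mathrm{Bernoulli}(q)$ variables, hence is stochastically dominated by a $\mathrm{Binomial}(z,q)$ variable of mean $qz=c\ln(2|Z'|)$. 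Since $8\log|Z'|$ exceeds this mean by a constant factor, a Chernoff bound (Proposition~\ref{claim-chernoff}, here in the large-deviation regime $\epsilon>1$ for which the standard multiplicative bound in its sharp form applies) gives $\Pr[\,\text{this count}>8\log|Z'|\,]\leq|Z'|^{-\Omega(1)}$; for $c=5/2$ the exponent is large enough that a union bound over the $|Z'|$ menus of $Z'$ keeps the failure probability of Condition~2 below $1/2$. Since each bad event has probability $<1/2$, some outcome avoids both, and that $P$ represents $Z$.

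The only genuine work I expect is pinning down the constant $c$: the covering union bound (Condition~1) forces $qz$ to exceed $\Theta(\ln|Z'|)$, while Condition~2 forces $8\log|Z'|$ to stay a safe constant factor above $qz=c\ln(2|Z'|)$, so $c$ is confined to a bounded window; in addition the corner cases (small $z$, i.e.\ $q\geq1$, and small $|Z'|$) have to be peeled off by the trivial direct construction above. With the constants in place, the probabilistic method delivers the desired $P$.
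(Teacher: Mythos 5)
Your proof follows essentially the same strategy as the paper's: construct $P$ by including each bundle independently at random, bound the failure probability of Condition~1 by a union bound (using the $\geq z/2$ lower bound on witnesses), bound the failure probability of Condition~2 by a Chernoff bound (using the $\leq z$ upper bound), and conclude by the probabilistic method. Your version is actually slightly more careful than the paper's in one respect: you peel off the degenerate regime where the sampling probability would exceed $1$ and the corner case of small $|Z'|$, neither of which the paper's proof addresses explicitly.

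One small caveat: with your choice $c=5/2$ the mean number of sampled witnesses is roughly $2.5\ln(2|Z'|)$, and the target threshold $8\log|Z'|$ sits at roughly $4.6$ times that mean, so the deviation parameter you need is $\epsilon\approx 3.6>1$; the Chernoff bound the paper actually states (Proposition~\ref{claim-chernoff}) is valid only for $0\leq\epsilon\leq 1$, so you would either need to cite a different tail bound for the $\epsilon>1$ regime (as you half-acknowledge) or simply raise $c$ so that the deviation falls inside $[0,1]$ --- the paper's implicit choice, with mean $4\log|Z'|$ and threshold $8\log|Z'|$, uses exactly $\epsilon=1$ to stay within the stated Proposition. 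This is a constant-tuning issue rather than a structural gap, and it is easily repaired.
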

\begin{proof}
We show that if $P$ is constructed by selecting each bundle uniformly at random with probability $\frac {4\log |Z'|} {z} $ then with high probability $P$ represents $Z$. Fix some menu $\mathcal M\in Z$. Since $\mathcal M$ has at least $\frac z 2$ witnesses, the probability that there is no bundle $S\in P$ that is a witness for $\mathcal M$ is at most $(1-\frac 1 {\frac {4\cdot \log |Z'|} {z}})^{\frac {z} 2}\leq \frac 1 {e^{2\log |Z'|}}$. Using the union bound, the probability that $P$ does not contain a witness bundle for some $\mathcal M\in Z$ is at most $|Z| \cdot \frac {1} {e^{2\log |Z'|}}\leq \frac 1 {|Z'|}$.

We now show that the second property is not violated with high probability. Fix some menu $\mathcal M\in Z'$. The number of witness bundles for $\mathcal M$ is at most $z$, and thus the expected number of witnesses for $\mathcal M$ in $P$ is at most $\frac {4\log |Z'|} {z}\cdot z=4\log |Z'|$. By the chernoff bounds, the probability that the number of witnesses for $\mathcal M$ that are in $P$ is greater than $8\cdot \log |Z'|$ is at most $e^{-\frac {4 \log |Z'|} 3}$. There are at most $|Z'|$ menus. Therefore, by the union bound, the probability that the second property is violated is at most $|Z'|\cdot e^{-\frac {4 |\log Z'|} 3}\leq e^{-\frac {|\log Z'|} 3}$.

To conclude, the probability that either of the properties that are necessary for $P$ to represent $Z$ is violated is at most $\frac 1 {e^{\log |Z'|}}+ \frac 1 {e^{\frac {\log |Z'|} 3}}<1$. Thus, there is a set $P$ that represents $Z$.
\end{proof}

We can now describe our high-level approach:
\begin{enumerate}
\item Let $Z'=U_{j-1}$.
\item For each $t=2^m,2^{m-1}, \ldots, 4,2,1$, in decreasing order:
\begin{enumerate}
\item Let $Z=\{\mathcal M|\frac t 2 \leq w(\mathcal M)\leq t\}$.
\item Obtain a set of bundles $P$ that represents $Z$.
\item Using $P$, determine whether $\mathcal R\in Z$ or not by finding a witness. This is done by a reduction to $z$-disjointness that will be shortly described.
\item If a witness for $\mathcal R$ was found, perform a shrinkage step. Else, let $Z'=Z'-Z$ and continue to the next value of $t$.
\end{enumerate}
\end{enumerate}

The logic is as follows: for each value of $t$, in descending order, we ``guess'' that $\frac t 2 \leq w(\mathcal R)\leq t$. We then construct an instance of $z$-disjointness that contains an intersecting bit if and only if our guess was correct. We then solve the $z$-disjointness instance. If we have found an intersecting bit then we have found a witness bundle and we can perform a shrinkage step. Otherwise, we remove from $U_{j-1}$ all menus $\mathcal M$ with $\frac t 2 \leq w(\mathcal M)\leq t$ and proceed similarly. If the process ends without finding a witness bundle, we can conclude that $\mathcal R(S)=p_S$ for every bundle $S$.

Our reduction to $z$-disjointness will use $z=O(\log|Z'|)=O(tax(A))$ and $l=2^{m+price(A)}$. Thus, the number of bits needed to solve each instance (by Theorem \ref{thm-block-disjointness}) is $poly(tax(A), price(A), m, n)$. We have to solve at most $m$ $z$-disjointness problems (one for each possible value of $t$), so the communication complexity of a shrinkage step is $poly(tax(A), price(A), m, n)$. We now describe the details of the reduction.

\subsection{Part 3: A Reduction to $z$-Disjointness}

We now fill in the details of our high-level approach by describing the reduction to $z$-disjointness. The reduction is inspired by the reduction of \cite{BFS86} which shows that disjointness is co-NP complete (in the communication sense). 

\begin{definition}
Fix a truthful mechanism $A$ and a protocol $Q$ for computing the price of bundle $S$ in the menu that is presented to player $n$. Let $price(A)$ denote the communication complexity of $Q$. A communication transcript $T$ is a \emph{proof} for bundle $S$ and player $i<n$ with valuation $v_i$, if there exist valuations $v'_{-i}$ of all players $1,\ldots, i-1,i+1,\ldots, n-1$, such that when running $Q$ on the instance $(v_i,v'_{-i})$ the communication transcript is $T$ and the price of $S$ in the menu presented to player $n$ is not $p_S$.
\end{definition}
The reduction takes a set of bundles $P$ and defines a unique ``block'' that corresponds to each bundle $S\in P$. The number of bits in a block is $2^{price(A)}$, where each bit in the block corresponds to a different possible transcript of $A$. Therefore, the total length of the input string $A^i$ in the $z$-disjointness problem is $|P|\cdot 2^{price(A)}$. We set bit $k$ in player $i$'s input string that is in the block that corresponds to a bundle $S$ to $1$ if and only if the transcript that corresponds to $A^i_k$ is a proof for bundle $S$ and player $i$. The next claim is the crux of the reduction:

\begin{claim}
Fix some bit $k$ that is in the block that corresponds to bundle $S$. Then, we have that $A^1_k=\ldots=A^{n-1}_k=1$ if and only if $\mathcal R(S)\neq p_S$. Furthermore, in every block there is at most one bit that intersects.
\end{claim}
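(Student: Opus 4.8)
The claim has two parts. First, that for a bit $k$ in the block corresponding to $S$, we have $A^1_k = \cdots = A^{n-1}_k = 1$ iff $\mathcal R(S) \neq p_S$; second, that within each block at most one bit intersects. I would prove the two directions of the first part separately, then handle the uniqueness.

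For the direction "all bits $1$ $\Rightarrow$ $\mathcal R(S) \neq p_S$": suppose bit $k$ intersects, i.e. the transcript $T$ associated with position $k$ is a proof for bundle $S$ and \emph{every} player $i < n$ with respect to their \emph{true} valuation $v_i$. By the definition of "proof", for each player $i$ there is a completion $v'_{-i}$ of the other players' valuations (other than $i$ and other than $n$) such that running $Q$ on $(v_i, v'_{-i})$ produces transcript $T$ and yields price $\neq p_S$ for $S$ in the menu presented to $n$. The standard cut-and-paste / rectangle argument for deterministic protocols (the same "fooling set" reasoning used repeatedly in this paper, e.g.\ in Claim~\ref{claim-fooling-set}) then lets me glue these completions together: since each player $i$'s view is consistent with transcript $T$ when their own input is the true $v_i$, the protocol $Q$ run on the \emph{actual} instance $(v_1,\dots,v_{n-1})$ also produces transcript $T$. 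Hence the price that $Q$ outputs on the real instance is exactly the price that $T$ encodes, which is not $p_S$. But $Q$ correctly computes $\mathcal R(S)$ on the real instance, so $\mathcal R(S) \neq p_S$. Conversely, if $\mathcal R(S) \neq p_S$, then the actual transcript $T^\star$ that $Q$ produces on $(v_1,\dots,v_{n-1})$ is by definition a proof for $S$ with respect to every player $i<n$ (take $v'_{-i}$ to be the restriction of the true profile), so the bit $k^\star$ corresponding to $T^\star$ in block $S$ satisfies $A^i_{k^\star} = 1$ for all $i<n$; that bit intersects.

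For uniqueness within a block: two distinct bits in the block for $S$ correspond to two distinct transcripts $T \neq T'$. If both intersected, the argument above would show $Q$ run on the real instance $(v_1,\dots,v_{n-1})$ produces transcript $T$ and also produces transcript $T'$ — impossible, since $Q$ is deterministic and so has a unique transcript on any fixed input. Hence at most one bit per block intersects, which also pins down $z = |P|$ as a valid bound on the total number of intersecting bits; combined with Lemma~\ref{lemma-represents}, which controls $|P|$ in terms of $\log|Z'| = O(tax(A))$, this gives the parameter $z = O(tax(A))$ promised in the high-level description.

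The main obstacle I anticipate is the gluing step in the forward direction: making the cut-and-paste argument fully rigorous requires being careful that "transcript $T$ is consistent with player $i$'s true input $v_i$, regardless of what the others do (as long as the transcript is $T$)" is exactly the combinatorial-rectangle property of deterministic protocols, and that $Q$ is a protocol of the right shape (the $(n{-}1)$-player price protocol, with the output being the last thing communicated). Once one has the rectangle property — the set of input profiles producing a fixed transcript $T$ is a combinatorial box $\prod_i R_i^T$ — the conclusion that the true profile $(v_1,\dots,v_{n-1})$ lies in that box (since each $v_i \in R_i^T$ by the "proof" hypothesis) is immediate, and with it the equality of outputs. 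Everything else is bookkeeping about how bits of $A^i$ were defined.
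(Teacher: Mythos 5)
Your proof of the claim is correct and follows the same route as the paper: the forward direction glues the per-player completions via the rectangle property of deterministic protocols to conclude that $Q$ on the true profile produces transcript $T$ (and hence outputs a price $\neq p_S$), the reverse direction observes that the true transcript $T^\star$ itself certifies a proof for every player, and uniqueness follows from determinism of $Q$. One tangential inaccuracy in your closing remark: the bound on $z$ in the reduction does not come from bounding $|P|$ (which can be exponential), but from the second property of a representing set, which guarantees that at most $O(\log|Z'|)=O(tax(A))$ bundles in $P$ are witnesses for any live menu, and hence at most that many blocks can contain an intersecting bit.
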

\begin{proof}
Suppose first that $A^1_k=\ldots=A^{n-1}_k=1$. We will show that for every player $i$, $v_{-i}$ is a proof for player $i$ and the bundle $S$, and thus $\mathcal R(S)\neq p_S$. Towards this end, let $v^i_{-i}$ be valuations that induce a proof for player $i$ and $S$ (notice that by assumption such a proof exists). Notice that the transcripts of the instances $(v_1,v^1_{-1})$ and $(v_2,v^2_{-2})$ are identical by definition. By standard fooling set arguments (see, e.g., \cite{KN97}), the transcript of the instance $(v_1,v_2,v^1_3,\ldots,v^1_{n-1})$ is identical as well. Repeating the same argument by replacing the valuations of players $v_3,\ldots, v_{n-1}$ one by one we get that the transcript of the instance ($v_1,\ldots, v_{n-1})$ is the same. We therefore have that for every player $i$, $v_{-i}$ is a proof for player $i$ and $S$, and thus $\mathcal R(S)\neq p_S$.

In the other direction, if for some player $i$ we have that $A^i_k=0$ then by the reduction there are no valuations of the other players such that the corresponding transcript is a proof for player $i$ and $S$. Therefore, if for every bit $k$ in the block we have some player $i$ with $A^i_k=0$ then $\mathcal R(S)=p_S$. 

To see that every block contains at most one intersecting bit, observe that if bit $k$ in the block $S$ intersects then the transcript $T$ it corresponds to is the transcript of running $Q$ to compute $S$ when the valuations of the players are $v_1,\ldots, v_{n-1}$. Thus there can be at most one bit that intersects.
\end{proof}

Notice that since we use $P$ that represents $Z$, there are at most $8\log |Z'|$ bundles $\mathcal R(S)\neq p_s$. By the last claim, this is also a bound on intersecting bits in every possible instance. Thus, we can apply Theorem \ref{thm-block-disjointness} and decide whether there is a bundle $S$ with $\mathcal R(s)\neq p_S$ whenever $\mathcal R\in Z$.

\subsection{Proof of Theorem \ref{thm-block-disjointness}: Solving $z$-Disjointness}\label{sec-solve-block-disjoitness}

An efficient protocol for solving $z$-block disjointness can be obtained as a corollary from the literature on the communication complexity of problems with a small number of witnesses \cite{KNSW94}. For completeness and since \cite{KNSW94} only handles the two-player case we bring the explicit proof here. We note that this is the only part of the proof of Theorem \ref{thm-construct-menu} that uses non-standard queries, even if the original mechanism uses only standard queries.

We start with solving $z$-disjointness for the special case when $z=1$. We will then use this solution and solve for the case of general $z$. 

\begin{lemma}\label{lemma-one-restricted}
The communication complexity of solving $1$-disjointness is $O(n^2\cdot \log l)$.
\end{lemma}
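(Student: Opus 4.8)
The plan is to locate the (at most one) intersecting bit by a binary search on the coordinate set $[l]$, using two elementary consequences of the definition. First, a coordinate $k$ can intersect only if $A^i_k=1$ for \emph{every} player $i$, so it suffices to search inside $\bigcap_i \mathrm{supp}(A^i)$. Second, the promise is \emph{hereditary}: for any $T\subseteq[l]$, restricting every input to $T$ again yields a $1$-disjointness instance (the restricted sets $\mathrm{supp}(A^i)\cap T$ still have a common intersection of size at most $1$), so recursing into a subset of $[l]$ stays within the promise. I would therefore maintain a candidate set $C\subseteq[l]$, initialised to $[l]$, with the invariant that the intersecting bit --- if it exists --- lies in $C$, and try to halve $|C|$ in each round.

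For one round, split $C=C_1\sqcup C_2$ into near-equal halves and have each player $i$ send the two bits ``$\mathrm{supp}(A^i)\cap C_1=\emptyset$?'' and ``$\mathrm{supp}(A^i)\cap C_2=\emptyset$?'', for $2n$ bits in all. If some player is empty on $C_1$ the intersecting bit is not in $C_1$, so set $C\leftarrow C_2$; symmetrically for $C_2$; if some player is empty on both, output ``no intersection''. After at most $\log l$ such rounds $C$ is a singleton, and one final exchange of $n$ bits decides whether its element intersects. The difficulty is the \textbf{ambiguous case}: every player has a $1$ in $C_1$ and a $1$ in $C_2$, which does not reveal which half contains a \emph{common} $1$ --- the promise bounds only the global intersection, not the individual supports, so binary search alone is stuck here.

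Breaking this tie is the technical heart, and I expect it to be the main obstacle. The idea is an elimination subroutine on the stuck half, say $C_1$: a player announces a coordinate $u\in\mathrm{supp}(A^1)\cap C_1$ ($\log l$ bits) and the others vote on it ($n$ bits); if the vote is unanimous, $u$ is the (unique) intersecting bit and we stop, otherwise some player has a $0$ at $u$, so $u$ is removed from $C$ and we try again. The subtle point --- to be handled in the spirit of the small-number-of-witnesses technique of~\cite{KNSW94} --- is to bound how often this loop, and the ambiguous case overall, can be entered: because there is at most one intersecting bit over \emph{all} inputs in $S^1\times\cdots\times S^n$, each failed probe forces a structural restriction on which coordinates can still be jointly ``alive'' for all the sets $S^i$, and this should be convertible into a potential argument that limits the total cost of the elimination steps to $O(n^2\log l)$. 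Summing this with the $O(n\log l)$ bits of the plain binary search gives the claimed $O(n^2\log l)$ bound, and Theorem~\ref{thm-block-disjointness} for general $z$ then follows by iterating the procedure to peel off the at most $z$ intersecting bits one at a time, at the cost of an extra factor polynomial in $z$.
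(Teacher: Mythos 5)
Your plan diverges from the paper's proof, and the place where you yourself flag difficulty is exactly where the argument breaks. Plain binary search stalls in the ambiguous case (every player has a $1$ in both halves), and your elimination subroutine --- name a coordinate, vote, discard one coordinate per failed vote --- makes progress of one coordinate per $\Theta(n+\log l)$ bits. Nothing in the promise bounds the number of ``bad'' $1$s a single player can have, so in the worst case this loop can run $\Omega(|C_1|)$ times, giving total cost $\Omega(l\cdot(n+\log l))$, nowhere near $O(n^2\log l)$. The invocation of a ``potential argument'' in the spirit of~\cite{KNSW94} is a placeholder, not a proof; the structural fact that would make it work is precisely what is missing.

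The ingredient you lack is the paper's \emph{neighborhood} construction, which exploits the promise sets $S^i$ themselves rather than only the realized inputs $A^i$. For each player $i$ and coordinate $k$, define $N_i(k)=\{j:\exists A\in S^i \text{ with } A_k=A_j=1\}$, the coordinates that can co-occur with $k$ for player $i$. Call $N_i(k)$ small if $|N_i(k)|\le(1-\tfrac{1}{2n})r$, where $r$ is the current number of live coordinates. The crucial claim is that if some coordinate $k$ intersects, then at least one player $i$ has $N_i(k)$ small: if all $n$ neighborhoods were large, their common intersection would exceed $r/2$ and hence contain a second coordinate $k'$, and one could then pick strings $C^i\in S^i$ with $C^i_k=C^i_{k'}=1$ for every $i$, producing an instance in $S^1\times\cdots\times S^n$ with two intersecting bits --- contradicting the $z=1$ promise. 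So each round, every player announces (if they have one) a live $1$-coordinate with small neighborhood, at cost $n\log l$ bits; if nobody does, there is no intersecting bit and the protocol ends; if someone does, everyone prunes the live set to that neighborhood, discarding a $\Theta(1/n)$ fraction of coordinates in a single round. This $\Theta(1/n)$-fraction shrinkage per round is what your one-coordinate-at-a-time elimination cannot deliver, and it is what yields the stated polylogarithmic-in-$l$ round count. Your binary-search skeleton and hereditary observation are fine as scaffolding, but without this neighborhood/promise interaction the bound does not follow.
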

\begin{proof}
We provide a multi-round protocol. In each round $s$ we maintain a set $K_s\subseteq K_{s+1}$ of bits such that for every $k\in K_s$ there exists at least one player $i$ with $A^i_k=0$. More specifically, initially we let $K_0=\emptyset$. Then, in each round we either decide the problem by finding a bit $k$ such that $A^1_k=\ldots =A^{n}_k=1$, or obtain a set $K_{s}$ by adding to the set $K_{s-1}$ at least $\frac {l-|K_{s-1}|} {4n}$ additional bits. We let $r_s=l-|K_{s-1}|$.

Crucially, if $k\in K_s$ then it cannot be the case that $A^1_k=\ldots =A^{n}_k=1$, thus we can ``ignore'' this bit. In other words, we can simply ``discard'' for each player $i$ and string $A\in S^i$ every bit $k$ such that $k\in K_s$ (by setting $A_k=0$) and effectively obtain a smaller problem with significantly less bits to consider. We can now proceed to the next round. After $O(n\cdot \log l)$ rounds we will either find a common $1$ bit or will be left with only a constant number of bits which may contain a common $1$ bit. In the latter case, all players report the value of all bits that are not in $K_s$ and determine whether there is a bit that intersects.

Consider some round $s$. Recall that $r_s$ is essentially the number of bits in the instance of $1$-disjointness that we aim to solve in this round. We need some definitions. The \emph{neighborhood} of the $k$'th bit of player $i$ is $N_i(k)=\{j|A_j=1,A_k=1, A\in S^i\}$. If $|N_i(k)|>(1-\frac 1 {2n})r_s$ the neighborhood is \emph{large}, otherwise it is \emph{small}.


Next, each player $i$ reports the index of some bit $k_i$ with a small neighborhood such that $A^i_{k_i}=1$, if such exists. This takes $n\cdot \log l$ bits of communication. Observe that if there is an intersecting bit, then it must be in the neighborhood of $k_i$. Suppose that there exists a player $i$ who reported $k_i$ with a small neighborhood (we will shortly handle the case where no such player exists). Observe that if there is an intersecting bit it must reside in $N_i(k_i)$. If $N_i(k_i)$ is small then, since $N_i(k_i)$ is known to all players, we can progress to the next step as follows: we obtain $K_s$ by adding to $K_{s-1}$ all bits that are not in $N_i(k_i)$. Observe that $|K_s|\geq |K_{s-1}|+ \frac 1 {2n}\cdot r_{s-1}$.

If none of the players has a bit $k_i$ with $A^i_{k_i}=1$ and a small neighborhood, then there is no intersecting bit:
\begin{claim}
Suppose that $A^1_k=\ldots=A^n_k=1$ for some bit $k$. Then there is at least one player $i$ where $N_i(k)$ is small.
\end{claim}
\begin{proof}
Let $T=\cap_{i=1}^{n}N_{i}(k)$. For each $i$, $|N_i(k)|\geq (1-\frac 1 {2n})r_s$ since it is large. There are $n$ players and thus $|\cap_{i=1}^{n}N_{i}(k)|> \frac {r_{s-1}} 2$. Let $k'\neq k$ be some bit in $T$. We construct an instance with at least two intersecting bits: by definition of neighborhood for every $i$ there is some $C^{i}\in S^{i}$ where $C^{i}_{k}=C^{i}_{k'}=1$. Thus, in the instance $(C^1,C^2,\ldots, C^{n})$ we have that $C^1_k=C^2_k=\ldots=C^n_k=1$ and $C^1_{k'}=C^2_{k'}=\ldots=C^{n}_{k'}=1$. We got an instance with at least two intersecting bits, contradicting our promise.
\end{proof}

This concludes the proof of Lemma \ref{lemma-one-restricted}.
\end{proof}

\noindent The next lemma relates the communication complexity of $1$-disjointness to that of $z$-disjointness.

\begin{lemma}\label{lemma-restricted-reduction}
Let $f_z(l)$ denote the communication complexity of solving any $z$-disjointness problem with $n$ players on $l$-bit strings. Then, for any $z\geq 2$, $f_z(l)\leq f_{1}(l^z)+f_{z-1}(l)$. In particular, we get that $f_z(l)\leq z\cdot f_1(l^z)$.
\end{lemma}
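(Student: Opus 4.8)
The plan is to induct on $z$, using Lemma~\ref{lemma-one-restricted} as the base case $z=1$ and, for $z\geq 2$, reducing a $z$-disjointness instance to one call of a $1$-disjointness protocol on strings of length $\binom{l}{z}\leq l^z$ followed by one recursive call of a $(z-1)$-disjointness protocol on the original $l$-bit strings.

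Given a $z$-disjointness instance with known input sets $S^1,\dots,S^n\subseteq\{0,1\}^l$ and the promise that every tuple in $S^1\times\cdots\times S^n$ produces at most $z$ intersecting bits, I would first build an auxiliary $1$-disjointness instance. Index the new coordinates by the subsets $T\subseteq\{1,\dots,l\}$ of size exactly $z$, and let each player $i$ recode $A^i\in S^i$ into the string $B^i$ defined by $B^i_T=\bigwedge_{k\in T}A^i_k$; the new input set of player $i$ is the (known) image of $S^i$ under this map. A new coordinate $T$ intersects precisely when $T$ is contained in the set $I$ of intersecting coordinates of the original instance, and since $|I|\leq z$ always holds, the new instance has \emph{at most one} intersecting coordinate for every admissible input (namely $T=I$ when $|I|=z$, and none when $|I|<z$); hence it is a legitimate $1$-disjointness instance. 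Its string length is $\binom{l}{z}\leq l^z$, and padding the unused coordinates with entries that are identically $0$ for player $1$ lets us view it as an instance on $l^z$ bits, so running the $1$-disjointness protocol costs at most $f_1(l^z)$ (using that $f_1$ is monotone in the string length).

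If that protocol reports an intersecting coordinate, then $|I|=z\geq 1$ for the actual input, so the original instance has an intersecting bit and we output ``yes''. If it reports no intersecting coordinate, let $\tau$ be its (public) transcript. By the standard rectangle argument, every original tuple $(A^1,\dots,A^n)$ whose recodings are consistent with $\tau$ also produces transcript $\tau$ and hence also gives the answer ``no'', i.e.\ has fewer than $z$ intersecting bits; so, letting $\widehat S^i\subseteq S^i$ be the (locally computable) set of inputs of player $i$ consistent with $\tau$, the tuple of sets $(\widehat S^1,\dots,\widehat S^n)$ is a valid $(z-1)$-disjointness instance that contains the actual input and has the same answer. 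We recurse on it, paying an additional $f_{z-1}(l)$ and no extra communication for the restriction. This proves $f_z(l)\leq f_1(l^z)+f_{z-1}(l)$; unrolling down to $z=1$ and using $f_1(l^j)\leq f_1(l^z)$ for $j\leq z$ gives $f_z(l)\leq\sum_{j=1}^{z}f_1(l^j)\leq z\cdot f_1(l^z)$.

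The one point that needs care is the promise bookkeeping: a generic $(z-1)$-disjointness protocol is only guaranteed correct when \emph{every} tuple in the product of its input sets has at most $z-1$ intersecting bits, which is why the recursive call must be made on the restricted sets $\widehat S^i$ rather than on $S^1,\dots,S^n$ themselves. Getting this promise exactly right is also why the auxiliary coordinates are indexed by subsets of size \emph{exactly} $z$ — indexing by all subsets of size at most $z$ would create up to $2^z-1$ intersecting coordinates when $|I|=z$ and destroy the $1$-disjointness promise. Everything else (monotonicity of $f_1$, the fact that a deterministic recoding keeps the input sets ``known'', and the routine rectangle argument identifying the transcript with a combinatorial rectangle) is standard.
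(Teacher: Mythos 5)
Your proof is correct and follows essentially the same route as the paper's: both recode each $A^i$ by its ``$z$-product'' (one new coordinate per $z$-subset of $\{1,\dots,l\}$, set to the AND of those $z$ original bits), observe that the promise $|I|\leq z$ turns this into a $1$-disjointness instance of length $\binom{l}{z}\leq l^z$, and, if the $1$-disjointness protocol says ``no,'' recurse on the combinatorial rectangle of the transcript as a $(z-1)$-disjointness instance. You are if anything a bit more explicit about the promise bookkeeping (restricting to the sets $\widehat{S}^i$ and noting the monotonicity of $f_1$ under padding), but the decomposition and key observation are identical to the paper's.
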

\begin{proof}
The proof is by induction on $z$, starting with $z=2$. Consider a string $A^i$ of length $l$. For an integer $z>1$, define the \emph{$z$-product} of $A^i$ to be a string $s$ that is constructed as follows: $s$ consists of ${l\choose z}$ bits, one for each tuple of $z$ different bits. We set a bit to $1$ if and only if all the $z$ bits that it corresponds to are $1$ in $A^i$.

For each player $i$, apply a $z$-product to every $A^i\in S^i$. We obtain a new problem. Let $(A^1,\ldots, A^n)$ be some instance of the original problem and $(A'^1,\ldots, A'^n)$ be some instance of the new problem where for each $i$, $A'^i$ is the $z$-product of $A^i$. Observe that $(A^1,\ldots, A^k)$ contains exactly $z$ intersecting bits if and only if $(A'^1,\ldots, A'^k)$ contains exactly one intersecting bit. In particular, if in the original problem the number of intersecting bits is at most $z$, then after applying the $z$-product there is at most one intersecting bit. Since by applying a $z$-product we get strings of length at most $l^z$, the communication complexity of the new problem is $f_1(l^z)$. We have therefore established that for every instance $(A^1,\ldots, A^n)$ deciding whether the number of intersecting is exactly $z$ takes at most $f_1(l^z)$ bits of communication.

Given an instance $(A^1,\ldots, A^{n})$ we can run the protocol above. If the protocol finds an intersecting bit, we are of course done. Else, the communication protocol has reached some leaf which is labeled ``no''. Recall that any leaf in the communication protocol corresponds to some sets $S'^1,\ldots , S'^k$ where each instance $(A^1,\ldots, A^n)$ with $A^i\in S'^i$ generates the same transcript. In particular, any such instance contains at most $z-1$ intersecting bits. Using the induction hypothesis we can solve this instance using $f_{z-1}(l)$ bits of communication.
\end{proof}

Thus, combining Lemmas \ref{lemma-one-restricted} and \ref{lemma-restricted-reduction} gives us Theorem \ref{thm-block-disjointness}.

\section{Some Aspects of Theorem \ref{thm-characterization}} 

Here we discuss several aspects of the characterization of the communication complexity of mechanisms that was provided in Theorem \ref{thm-characterization}. First, in Subsection \ref{subsec-price-computation} we prove that $price(A)\leq cc(A)$ for domains that contain additive valuations. Then, we show the tightness of our characterization by showing that all terms $tax(A), price(A), tie(A)$ are needed (Subsection \ref{sec-tight-char}).

\subsection{Computing the Price of a Bundle}\label{subsec-price-computation}

The next proposition shows that if the domain contains additive valuations then $price(A)\leq cc(A)$.

\begin{proposition}\label{proposition-price-of-bundle}
Consider a mechanism $A$ that is truthful for a domain of valuations that contains additive valuations. Then, $price(A)\leq cc(A)$.
\end{proposition}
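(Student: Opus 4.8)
The plan is to reconstruct the price $\mathcal M(S)$ of a fixed bundle $S$ in the menu presented to player $i$ by $v_{-i}$, using a single run of the most efficient protocol for $A$. The key idea is that, since the domain contains additive valuations, we can feed player $i$ a carefully chosen additive valuation that forces him to win exactly the bundle $S$ (provided $\mathcal M(S)<\infty$), and then read off the payment, which by the taxation principle equals $\mathcal M(S)$.

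\medskip

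Concretely, fix player $i$, bundle $S$, and the valuations $v_{-i}$ of the other players. Let $\mathcal M$ be the menu presented to $i$ by $v_{-i}$, and recall (Proposition \ref{proposition-menu-monotonicity}) that $\mathcal M$ is monotone and normalized, and let $B$ be the maximum finite price that appears in any menu. I would consider the additive valuation $v_i$ defined by $v_i(\{j\}) = 2B$ for every $j \in S$ and $v_i(\{j\}) = 0$ for every $j \notin S$ (so $v_i(T) = 2B\cdot|T\cap S|$). First I would argue that $S$ is the unique profit-maximizing bundle for $v_i$ in $\mathcal M$ whenever $\mathcal M(S)<\infty$: for any $T$, the profit $v_i(T) - \mathcal M(T) = 2B\,|T\cap S| - \mathcal M(T)$; adding items of $S$ strictly increases the profit (each such item contributes $2B$ in value but at most $B$ extra in price since $\mathcal M$ is monotone and bounded by $B$), so the profit is maximized by a bundle containing $S$; and among bundles containing $S$, adding items outside $S$ only adds price without adding value, so (up to the tie-breaking convention that never allocates a strictly larger bundle, as in Proposition \ref{proposition-menu-monotonicity}) player $i$ is allocated exactly $S$. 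The run of $A$ on $(v_i, v_{-i})$ then allocates $S$ to player $i$ and charges him $\mathcal M(S)$, which is the desired price.

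\medskip

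To handle the case $\mathcal M(S)=\infty$: here no valuation can make player $i$ win $S$, so the bundle allocated to $i$ in the run above is some $S' \neq S$ with $S' \supseteq S$ (or $S'$ incomparable, handled by the profit analysis), and in particular $S \not\subseteq$ the allocated bundle or the allocated bundle has price $< \infty$ while $S$ does not — in any case, the players detect that the allocation is not $S$ and conclude $\mathcal M(S) = \infty$. The protocol for the price problem is therefore: run the most efficient implementation of $A$ on the instance $(v_i, v_{-i})$ where $v_i$ is the additive valuation above (note $v_i$ takes values up to $2B$, so by the precision discussion in Section \ref{subsec-representation} this costs only a $\mathrm{poly}(m)$ factor in the bit length, which is absorbed, or one simply observes $cc(A)$ already accounts for the relevant precision); if player $i$ is allocated exactly $S$, output his payment as $\mathcal M(S)$; otherwise output $\infty$. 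This uses $cc(A)$ bits, giving $price(A) \leq cc(A)$.

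\medskip

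The main obstacle I anticipate is the tie-breaking subtlety: the taxation principle only guarantees $i$ is allocated \emph{some} profit-maximizing bundle, and if several bundles tie, the one chosen may depend on $v_{-i}$ in an uncontrolled way. The construction above is designed so that $S$ is the \emph{strictly} unique maximizer among "minimal" profit-maximizing bundles — but one must be careful that $\mathcal M(S)$ could equal $\mathcal M(S')$ for some $S' \supsetneq S$, in which case $v_i(S') - \mathcal M(S') = v_i(S) - \mathcal M(S)$ and the tie is genuine. This is exactly why the reduction leans on the normalization in Proposition \ref{proposition-menu-monotonicity} (we may assume ties are broken toward smaller bundles, equivalently $\mathcal M(S') > \mathcal M(S)$ whenever $S' \supsetneq S$ and $S$ is "in" the menu), or alternatively one perturbs $v_i$ slightly — setting $v_i(\{j\}) = 2B + \delta_j$ for tiny distinct $\delta_j > 0$ on $j \in S$ — to break the tie in favor of $S$ without changing which superset-minimal bundle wins. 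Verifying that this perturbation is consistent with the precision model and does not disturb the rest of the argument is the one place requiring genuine care.
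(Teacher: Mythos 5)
Your construction (an additive valuation supported on $S$ with per-item value a large multiple of $B$) is exactly the paper's; the paper uses $3B$ where you use $2B$, and both work. But the way you extract the answer from the run of $A$ has a genuine gap that your own anticipated fix does not close.

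The problem is the tie you flag at the end: when $\mathcal M(S')=\mathcal M(S)$ for some $S'\supsetneq S$, the profit of $S'$ equals the profit of $S$ under your $v_i$, and the mechanism may allocate $S'$. Your proposed protocol (``if player $i$ is allocated exactly $S$, output his payment; otherwise output $\infty$'') would then incorrectly return $\infty$. Your first suggested fix, appealing to Proposition \ref{proposition-menu-monotonicity} as though it licenses a ``break ties toward smaller bundles'' convention, is not what that proposition provides -- it only justifies modifying the menu to be monotone, not a global tie-breaking rule for the mechanism. Your second fix, perturbing $v_i(\{j\})$ to $2B+\delta_j$ for $j\in S$, does not break the tie at all: since $S\subseteq S'$, both $v_i(S)$ and $v_i(S')$ increase by exactly $\sum_{j\in S}\delta_j$, so the profits remain equal. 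You cannot perturb items outside $S$ downward either, since valuations must be monotone and normalized. A smaller issue is that your ``greedy'' justification that the maximizer contains $S$ (add $j\in S$ to $T$, value goes up by $2B$, price by at most $B$) is unsound as stated: $\mathcal M(T\cup\{j\})$ can be $\infty$ even when $\mathcal M(T)<\infty$, so the marginal-price bound fails. The conclusion $S\subseteq S_i$ is still true, but one should argue it by comparing $S_i$ directly to $S$: if $S\not\subseteq S_i$ then $v_i(S_i)\le 2B(|S|-1)<2B|S|-B\le v_i(S)-\mathcal M(S)$ whenever $\mathcal M(S)<\infty$.

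The paper avoids forcing uniqueness entirely. It runs $A$ on the same kind of instance and then does a case split on the allocated bundle $S_i$: if $S\subseteq S_i$ then, since $v_i(S_i)=v_i(S)$ and $\mathcal M$ is monotone, profit maximization forces $\mathcal M(S_i)=\mathcal M(S)$, so the observed payment \emph{is} $\mathcal M(S)$ even though $S_i\ne S$; and if $S\not\subseteq S_i$, the direct profit comparison shows $\mathcal M(S)=\infty$. The correct protocol is therefore ``output the observed payment if $S\subseteq S_i$, and $\infty$ otherwise.'' Adopting that decision rule, rather than trying to force $S_i=S$, is the missing idea in your proposal.
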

\begin{proof}
Fix some player $i$ and let $v_{i'}$ be the valuation of every player $i'\neq i$. Denote by $M^i=\{\mathcal M_k\}_k$ the set of all menus that may be presented to player $i$ in the mechanism $A$. Let $B= \max_{S:\mathcal M'(S)<\infty,\mathcal M'\in M^i}\mathcal M'(S)$ be the maximum finite price that appear in all menus.

Let $\mathcal M$ be the menu that $v_{-i}$ present to player $i$. To compute $\mathcal M(S)$, consider the following additive valuation $v_i$ of player $i$: $v_i(\{j\})=3B$ for $j\in S$ and $v_i(\{j\})=0$ otherwise. Run the most efficient implementation of $A$ on the instance $(v_{i},v_{-i})$ and let $S_i$ be the bundle allocated to player $i$ and $p$ be its payment. We divide the proof into several cases.

If $S_i=S$ then obviously the price of $S$ in the menu is $p$ and we are done. If $S\subseteq S_i$ then we observe that since by construction $v_i(S_i)=v_i(S)$ and the menu price of $S_i$ is at most that of $S$ (by the monotonicity of the menu ), it must be that the menu price of $S_i$ equals that of $S$, otherwise $S_i$ does not maximize the profit. Thus the price of $S$ in the menu is $p$ as well.

We will show that in all other cases the price of $S$ in the menu is $\infty$. If $|S_i|<|S|$ then the profit of $S_i$ is at most $|S_i|\cdot 3B\leq (|S|-1)\cdot 3B$, whereas had the price of $S$ been finite the profit were $|S|\cdot 3B-B$, in contradiction to the taxation principle, since $S_i$ is a profit-maximizing bundle. Thus the price of $S$ in the menu must be infinite in this case. 

We have already handled the case where $S\subseteq S_i$, so the only remaining case to consider is when $|S_i|\geq |S|$ but $S$ is not contained in $S_i$. Then, similarly to before, the profit from $S_i$ is at most $(|S|-1)\cdot 3B$ whereas the profit from $S$ is at least $|S| \cdot 3B-B$, which is again a contradiction to the taxation principle.
\end{proof}

If the domain does not contain additive valuations, we do not know the communication complexity of computing the price of player $i$ for a given alternative:
\begin{open}
Let $A$ be a truthful mechanism for an arbitrary domain. Is $price(A)\leq poly(cc(A))$?
\end{open}

\subsection{Tightness of the Characterization}\label{sec-tight-char}

Fix a mechanism $A$ that is truthful for general valuations. Informally, Theorem \ref{thm-characterization} gives us that $cc(A)=poly(tax(A),price(A), tie(A),n,m)$. In this section we show that all terms are necessary in the sense that omitting just one of $tax(A)$, $price(A)$, or $tie(A)$ might imply a huge gap between the LHS and the RHS.

All of our examples are based on (different) reductions from set disjointness. In this problem, Alice receives a string $A$ and Bob receives a string $B$, $A,B\in \{0,1\}^t$. The goal is to determine whether there is some index $k$ such that $A_k=B_k=1$. A simple fooling set argument shows that the deterministic communication complexity of $f$ is exactly $t$ (a similar bound holds for randomized mechanisms, see, e.g., \cite{R92}). We note that it is straightforward to extend these examples and obtain analogous results for mechanisms that use only value queries.

\subsubsection{Dropping $tie(A)$}\label{subsec-low-tax-high-comm}

We show that $tie(A)$ is necessary to determine the communication complexity of a truthful mechanism $A$.

\begin{proposition}
There is a truthful mechanism $A$ with $tax(A)=1$ (and thus $price(A)=0$) and $cc(A)=exp(m,n)$.
\end{proposition}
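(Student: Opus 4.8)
The plan is to exhibit a two-player (or $n$-player) mechanism whose social choice function embeds the set-disjointness problem entirely *inside the tie-breaking rule*, while keeping the menu presented to each player trivially simple — in fact constant across all inputs. First I would fix the menu that player $1$ presents to player $2$ (and symmetrically) to be the all-zero menu: $\mathcal M(S)=0$ for every bundle $S$. Then *every* bundle is profit-maximizing for player $2$ no matter what player $2$'s valuation is (after normalization the profit is $v_2(S)\ge 0$, and we can arrange the allocation so that player $2$ always receives the grand bundle $M$, which is always a profit-maximizer by monotonicity; similarly player $1$ receives $\emptyset$, which is trivially profit-maximizing under the all-zero menu restricted to $\{\emptyset\}$ — or we make the menu symmetric and let the tie-break decide). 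Since the menu presented to each player is the *same single menu* regardless of $v_{-i}$, we get $|M^i|=1$, hence $tax(A)=0$; if one wants $tax(A)=1$ exactly, add one harmless bit of menu variation (e.g. a single item whose price is $0$ or $1$ depending on one coordinate of $v_{-i}$, exactly as in the tightness example of Theorem~\ref{thm-warm-up}). Either way $price(A)=0$ since with at most two menus the price of any bundle is determined by at most one bit.

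Next I would design the *allocation rule* so that computing it requires $\exp(m)$ communication. The idea: encode into player $1$'s valuation a string $A\in\{0,1\}^t$ and into player $2$'s valuation a string $B\in\{0,1\}^t$, where $t=\binom{m}{m/2}$ or $t=\exp(m)$; index coordinates by bundles of size $m/2$. Let the valuations range over a family rich enough that the map $v_i\mapsto$ (its encoded string) is onto $\{0,1\}^t$, yet every such valuation is normalized and monotone. Because the menu is the all-zero menu, *any* allocation $(S_1,S_2)$ with $S_1\cup S_2$ disjoint is consistent with the taxation principle — the profit of each player is nonnegative for their received bundle, and since all bundles are profit-maximizing under the zero menu, truthfulness imposes *no constraint whatsoever* on which bundle each player gets. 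This is the crucial point: the taxation principle leaves the tie-breaking rule completely free, so we may let $A$'s allocation depend on $\mathrm{DISJ}(A,B)$ in an arbitrary way. Concretely: if the two encoded strings are disjoint, allocate $(\emptyset, M)$; otherwise allocate $(\{1\}, M\setminus\{1\})$. Payments are $0$ throughout (consistent with the zero menu). One checks truthfulness directly: deviating changes player $i$'s encoded string, but since every received bundle yields profit $0$ under the zero menu, no deviation is profitable — this is an ex-post Nash (indeed dominant-strategy) equilibrium.

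Then I would lower-bound $cc(A)$: any protocol computing the allocation of $A$ in particular distinguishes the case $\mathrm{DISJ}(A,B)=1$ from $\mathrm{DISJ}(A,B)=0$ on all encoded inputs, hence solves set-disjointness on $t$-bit strings, which requires $t=\exp(m)$ bits by the standard fooling-set argument. For the $n$-player version, replace two-party disjointness by the $n$-party (number-in-hand) version and index coordinates appropriately so that $t=\exp(m,n)$; the construction and the truthfulness check are unchanged, and the lower bound follows from multi-party disjointness. Finally I would verify $tie(A)\le cc(A)$ is the slack term here — indeed $tie(A)=\exp(m,n)$ as well, which is exactly the point of the example: dropping $tie(A)$ from the characterization in Theorem~\ref{thm-characterization} opens an exponential gap, since $tax(A)+price(A)=O(1)$ while $cc(A)$ is exponential.

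The main obstacle is constructing the valuation family so that simultaneously (i) the encoding map onto $\{0,1\}^t$ is surjective, (ii) every member is monotone and normalized, and (iii) the all-zero menu really is the menu induced by the mechanism for *every* profile $v_{-i}$ (which forces the allocation rule to be well-defined and self-consistent across all profiles, including the diagonal cases where both players encode overlapping strings). I expect the cleanest route is to take valuations that are $0$ on bundles of size $<m/2$, equal to $|S|$ on bundles of size $>m/2$, and carry the string bit in the value of each size-$m/2$ bundle (value $\in\{m/2-1,\,m/2\}$), exactly mirroring the encoding used in the proof of the truthful-in-expectation proposition above; monotonicity is immediate and there are $2^{\binom{m}{m/2}}$ such valuations. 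With that family in hand the remaining verifications are routine.
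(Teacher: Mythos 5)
You have the right high-level target---a mechanism where the menu is essentially fixed and set-disjointness is embedded in the tie-breaking---but the concrete construction you sketch is not truthful, and the error traces back to a misreading of the taxation principle.

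You assert that under the all-zero menu ``\emph{all} bundles are profit-maximizing,'' so ``truthfulness imposes no constraint whatsoever on which bundle each player gets.'' This is false. The taxation principle requires each player to receive a bundle in $\arg\max_S v_i(S)-\mathcal M(S)$; under a zero menu this is $\arg\max_S v_i(S)$, which by monotonicity contains the grand bundle $M$ but, for a generic monotone valuation, \emph{only} $M$. Nonnegative profit is not the same as maximum profit. With the valuation family you propose at the end (value $|S|$ on bundles of size $>m/2$, so $v(M)=m$ and $v(M\setminus\{1\})=m-1$), player $2$ strictly prefers $M$ to $M\setminus\{1\}$ under the zero menu, so the allocation $(\{1\},\,M\setminus\{1\})$ is a truthfulness violation: player $2$ is not getting a profit-maximizing bundle and would deviate to obtain $M$. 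Your truthfulness check (``every received bundle yields profit $0$'') contradicts your own valuation family, where $v_2(M\setminus\{1\})=m-1>0$. There is also an internal inconsistency: you first restrict player $1$'s menu to only $\emptyset$ having finite price (so player $1$ can never be allocated $\{1\}$), then the allocation rule hands player $1$ the bundle $\{1\}$.

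The paper's construction avoids this by engineering genuine ties. It sets up the menu so only $\emptyset$, $\{a\}$, $\{b\}$ have finite price (all $0$), and uses valuations in the reduction for which $v_2(\{a\})=v_2(\{b\})=v_2(\emptyset)=0$, so all feasible bundles are exactly tied at profit $0$ and the tie-break genuinely has freedom. The allocation is then: player $2$ gets the higher-valued of $a$ and $b$ when they differ, and an arbitrary tie-break otherwise---and disjointness is embedded in that tie-break. Your construction needs the analogous fix: the menu presented to each player must restrict the feasible bundles to a set on which the relevant valuations are constant (tied), rather than presenting a zero menu over all $2^m$ bundles, which pins the allocation to $M$. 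Without that, the claimed dominant-strategy property fails.
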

\begin{proof}
Let $f$ be the following two-player social choice function: player $1$ is never allocated any items. Player $2$ with valuation $v_2$ gets item $a$ if $v_2(\{a\})>v_2(\{b\})$ and item $b$ if $v_2(\{b\})>v_2(\{a\})$. If $v_2(\{b\})=v_2(\{a\})$ then player $2$ gets either $a$ or $b$ according to some tie-breaking rule that will be specified shortly. Player $1$ is never allocated any items. Notice that $f$ can obviously be implemented truthfully and that its taxation complexity is $1$ (the price for player $2$ for items $a$ and $b$ is $0$ and the price of the rest of the bundles is $\infty$).

We now describe the tie-breaking rule. If there exists some player $i$ and bundle $S$ with $v_i(S)\notin \{0,1\}$ then player $2$ is allocated item $a$. Else, player $2$ is allocated item $a$ if there exists some bundle $S$, $|S|=m/2$, with $v_1(S)=v_2(S)$. Player $2$ is allocated item $b$ otherwise.

We claim that the communication complexity of any mechanism that implements $F$ is at least $exp(m)$. To prove this, we reduce from the set-disjointness problem. Let $t={m \choose {\frac m 2}}$. Let the valuation of Alice (player 1) be identically $0$ for bundles $S$ with $|S|<\frac m 2$ and identically $1$ for bundles $S$ with $|S|>\frac m 2$. For bundles $S$ with $|S|=\frac m 2$ the value of $S$ is determined by the value of $A_S$, where we assume some one-to-one and onto correspondence between the indices of $A$ and bundles of size $\frac m 2$. Bob's valuation (player 2) is defined similarly.

Notice that if one can implement $f$ once can also solve set-disjointness. The communication complexity of $f$ is therefore ${m \choose {\frac m 2}}=exp(m)$.
\end{proof}

\subsubsection{Dropping $tax(A)$}

We give an example for a truthful mechanism $A$ for two players and $m$ items where $price(A)$ and $tie(A)$ are small, yet $cc(A)=exp(m)$.

\begin{proposition}
There is a mechanism $A$ with $cc(A)=exp(m,n)$, $tie(A)=m$ and $price(A)=1$.
\end{proposition}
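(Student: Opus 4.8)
The plan is to build a two‑player mechanism $A$ in which the only expensive thing is identifying the (doubly‑exponentially many) menus presented to the second player, while both the price of any single bundle and the tie‑breaking are essentially free, and then to lower bound $cc(A)$ directly by a reduction from Set Disjointness.

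\textbf{The construction.} Take $M$ with $|M|=m$ even, and let $\mathcal S$ be the family of the $\binom{m}{m/2}$ bundles of size $m/2$; fix a bijection $j\mapsto R_j$ between $[t]$ and $\mathcal S$, where $t=\binom{m}{m/2}$. In $A$, player $1$ always receives $\emptyset$ and pays $0$. Given player $1$'s report $v_1$, the menu presented to player $2$ is $\mathcal M^{v_1}$ with $\mathcal M^{v_1}(S)=|S|-\tfrac12$ if $S\in\mathcal S$ and $v_1(S)<|S|$, and $\mathcal M^{v_1}(S)=|S|$ otherwise; player $2$ is allocated a bundle maximizing $v_2(S)-\mathcal M^{v_1}(S)$, ties broken in favour of $\emptyset$ and then lexicographically, and pays the corresponding menu price. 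I would first check the routine facts: a strict subset of a size‑$m/2$ bundle is strictly smaller, so the $\tfrac12$‑discounts never break monotonicity, hence $\mathcal M^{v_1}$ is monotone and normalized; by the taxation principle $A$ is dominant‑strategy (hence ex‑post Nash) truthful for general valuations, since player $1$ is indifferent and player $2$ always gets a profit‑maximizing bundle at its menu price; and every subset $W\subseteq\mathcal S$ is the ``discounted set'' of some monotone $v_1$ (take $v_1(S)=|S|-1$ for $S\in W$ and $v_1(S)=|S|$ otherwise), so there are $2^{\binom{m}{m/2}}$ distinct menus, i.e. $tax(A)=\binom{m}{m/2}=2^{\Theta(m)}$.

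\textbf{The cheap parameters.} For $price(A)=1$: the menu to player $1$ is the fixed menu $\{\emptyset\mapsto 0\}\cup\{S\mapsto\infty:S\ne\emptyset\}$, requiring no communication, and in $\mathcal M^{v_1}$ the price of a given bundle $S$ is $|S|$ unless $S\in\mathcal S$, in which case player $1$ announces the single bit $\mathbb 1[v_1(S)<|S|]$. For $tie(A)\le m$: once both menus are known, player $1$'s bundle is $\emptyset$ and player $2$'s bundle and payment are a function of $\mathcal M^{v_1}$ and $v_2$ alone, so player $2$ computes the allocation locally and announces it with $m$ bits. The upper bound $cc(A)\le 2^{O(m)}$ is trivial since the output depends on $v_1$ only through the $\binom{m}{m/2}$ discount bits.

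\textbf{The lower bound $cc(A)=2^{\Omega(m)}$.} I would reduce from Set Disjointness on strings of length $t=\binom{m}{m/2}$. On inputs $A,B\in\{0,1\}^t$, player $1$ plays $v_1^A$ with $v_1^A(R_j)=|R_j|-1$ when $A_j=1$ (and $v_1^A(S)=|S|$ on all other bundles), so that $\mathcal M^{v_1^A}(R_j)=|R_j|-\tfrac12$ iff $A_j=1$; player $2$ plays $v_2^B$ with $v_2^B(R_j)=|R_j|$ when $B_j=1$, $v_2^B(R_j)=|R_j|-1$ when $B_j=0$, and $v_2^B(S)=|S|$ elsewhere. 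Both are monotone and normalized. The case check on $R_j$ gives player‑$2$ profit $\tfrac12$ when $A_j=B_j=1$ and profit $\le 0$ in the three other cases ($-\tfrac12$, $0$, $-1$ respectively), while every bundle outside $\mathcal S$, in particular $\emptyset$, has profit $0$. Hence player $2$ receives $\emptyset$ (paying $0$) exactly when $A$ and $B$ are disjoint, and receives a nonempty size‑$m/2$ bundle (paying $m/2-\tfrac12$) otherwise, so the mechanism's output decides $\mathrm{DISJ}_t$ and $cc(A)\ge D(\mathrm{DISJ}_t)=\Omega(t)=2^{\Omega(m)}$; with the trivial upper bound, $cc(A)=\exp(m)=\exp(m,n)$ (pad with $n-2$ dummy players for general $n$). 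The step I expect to need the most care is exactly this numerical gadget: getting the profit on $R_j$ to be strictly positive on precisely the ``intersection'' coordinates while staying $\le 0$ in the other three cases, simultaneously keeping $v_1^A$, $v_2^B$ and all menus $\mathcal M^{v_1}$ monotone. The asymmetric choice above — $A$ lowers the \emph{price} of $R_j$ by $\tfrac12$, while $B$ raises its \emph{value} from $|R_j|-1$ to $|R_j|$ — is what makes all four cases come out correctly.
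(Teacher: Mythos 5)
Your proof is correct and follows essentially the same route as the paper: a two-player posted-price mechanism in which player $1$ never wins, the menu presented to player $2$ encodes only which size-$\tfrac{m}{2}$ bundles player $1$ ``discounts'', and a reduction from Set Disjointness on strings indexed by the $\binom{m}{m/2}$ bundles of size $\tfrac{m}{2}$. Your all-finite-price gadget (a $\tfrac12$ discount off $|S|$ versus the paper's $\{1,\infty\}$ price scheme) is a slightly cleaner variant that makes menu monotonicity and the four-case profit check transparent, but the underlying mechanism and the hardness argument coincide with the paper's.
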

\begin{proof}
Let $f$ be the following two-player social choice function: player $1$ with valuation $v_1$ is never allocated any items. Player $2$ with valuation $v_2$ gets his maximum value bundle among all bundles $S$ of size $\frac m 2$ with $v_1(S)\geq 1$ and $v_2(S)\geq 1$. If there are several such bundles $S$, player $2$ gets the lexicographically first one. 

Notice that this social choice function can be implemented truthfully: the price of bundle $S$ is $1$ if $v_1(S)\geq 1$, and it is $\infty$ otherwise. Given the menu, tie breaking is also easy: player $2$ just announces his lexicographically first bundle among his set of profit-maximizing bundles. Also, player $1$ can announce the price of $S$ simply by sending the bit ``1'' if $v_1(S)\geq 1$ and ``0'' otherwise.

We now show that the communication complexity of every truthful mechanism that implements $f$ is $exp(m)$. We reduce again from the set disjointness problem. Let $t={m \choose {\frac m 2}}$. Let the valuation of Alice (player 1) be identically $0$ for bundles $S$ with $|S|<\frac m 2$ and identically $1$ for bundles $S$ with $|S|>\frac m 2$. For bundles $S$ with $|S|=\frac m 2$ the value of $S$ is determined by the value of $A_S$, where we assume some one-to-one and onto correspondence between the indices of $A$ and bundles of size $\frac m 2$. Bob's valuation (player 2) is defined similarly but with value $2$ instead of $1$ for non-zero bundles.

Notice that there is a bit $k$ with $A_k=B_k=1$ if and only if Bob has a bundle $S$ with a positive profit ($=1$). The communication complexity of $f$ is therefore ${m \choose {\frac m 2}}=exp(m)$.
\end{proof}

\subsubsection{Dropping $price(A)$}

We give an example for a truthful mechanism $A$ for three players and $m$ items where $tax(A)$ and $tie(A)$ are small, yet $cc(A)=exp(m)$. Notice the necessity of using more than two players, otherwise the price $S$ in the menu that is presented to one player can be easily computed by the other player, as the menu depends only on the valuation of the player that presents the menu.

\begin{proposition}
There is a mechanism $A$ with $cc(A)=exp(m,n)$, $tie(A)=0$ and $tax(A)=2$.
\end{proposition}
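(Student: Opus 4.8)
The plan is to build a three-player mechanism in which the menu presented to player $3$ encodes (a constant amount of information about) a set-disjointness instance held jointly by players $1$ and $2$, while players $1$ and $2$ are never allocated anything. Designate three special items $a,b,c$, put $M'=M\setminus\{a,b,c\}$, fix a bijection $\sigma$ between $[t]$ and the size-$\lfloor (m-3)/2\rfloor$ subsets of $M'$ (so $t=2^{\Theta(m)}$), and split $[t]=I_1\sqcup I_2$. From $v_1$ extract a string $A\in\{0,1\}^t$ by $A_k=1$ iff $v_1(\sigma(k))\ge 1$ (and similarly $B$ from $v_2$); such a string is realized, e.g., by the monotone valuation that is $0$ on bundles strictly below the middle layer of $M'$, equals $2$ on bundles meeting $\{a,b,c\}$ or strictly above the middle layer, and carries the bits on the middle layer. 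Let $b_j=1$ iff $A$ and $B$ are disjoint on $I_j$. The pair $(b_1,b_2)$ selects which (if any) of $a,b,c$ is \emph{open}: $(1,1)\mapsto a$, $(1,0)\mapsto b$, $(0,1)\mapsto c$, $(0,0)\mapsto$ none. The menu $\mathcal M_3$ presented to player $3$ prices $\emptyset$ and the open singleton (if any) at $0$ and all other bundles at $\infty$; it is monotone and normalized. Player $3$ is allocated the open singleton (or $\emptyset$ if none is open) at price $0$, and players $1,2$ always receive $\emptyset$ at price $0$; this defines the social choice function and payment rule of $A$, and $cc(A)$ is the cost of the cheapest protocol implementing it.

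I would then check the four claims. \textbf{Truthfulness for general valuations:} $\mathcal M_3$ depends only on $(v_1,v_2)$, and the open singleton is always a profit-maximizing bundle for player $3$ (its price is $0$, so its profit is $\ge 0$, matching $\emptyset$ and beating every $\infty$-priced bundle; when $v_3$ of the open singleton is $0$ the rule breaks the tie toward it, which is permitted), so player $3$ has a weakly dominant truthful strategy and players $1,2$ are indifferent; one also extracts a valid string from \emph{any} $v_{1},v_2$, so no restriction on the domain is needed. \textbf{$tax(A)=2$:} the menus ever presented to player $3$ are exactly the four distinct menus above, each realizable by choosing $A,B$ (dis)jointly on $I_1$ and $I_2$ independently, so $|M^3|=4$; players $1,2$ always see the single trivial menu, hence $\max_i\log|M^i|=\log 4=2$. \textbf{$tie(A)=0$:} once all players know the menus, they know which singleton is open, hence the whole allocation — it is a fixed function of $\mathcal M_3$ and does \emph{not} depend on $v_3$ — so no communication is needed. \textbf{$cc(A)=exp(m)$:} the allocation to player $3$ reveals $b_1$, i.e., it solves set disjointness on $I_1$ (an instance of size $|I_1|=2^{\Theta(m)}$) between players $1$ and $2$; since player $3$'s input can be fixed and its (deterministic) messages simulated by the other two, any protocol for $A$ yields a two-party disjointness protocol, so $cc(A)\ge 2^{\Theta(m)}$ (the matching upper bound $cc(A)\le 2^{O(m)}$ is trivial).

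The main obstacle — and the reason for the particular shape of the gadget — is reconciling $tie(A)=0$ with truthfulness for general valuations: for a generic truthful mechanism, player $i$'s allocated bundle is \emph{some} profit-maximizer of $\mathcal M_i$, and for arbitrary $v_i$ which one it is genuinely depends on $v_i$, so the allocation is not a function of the menus alone; forcing it to be so naively would break either truthfulness or the ``general valuations'' requirement. The construction sidesteps this because every nonempty finite-priced bundle of $\mathcal M_3$ has price $0$ and the unique maximal such bundle (the open singleton) is weakly preferred to $\emptyset$ by every monotone valuation, so choosing it is simultaneously truthful for all $v_3$ and a fixed function of $\mathcal M_3$. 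Consequently all of the $exp(m)$ hardness is pushed into $price(A)$ (equivalently, into reconstructing $\mathcal M_3$), and three players are essential precisely because $\mathcal M_3$ must depend on both $v_1$ and $v_2$ for its prices to be expensive to compute. For general $n>3$ one obtains the analogous statement by spreading (or enlarging) the disjointness instance across more players, but the three-player case already exhibits the exponential gap.
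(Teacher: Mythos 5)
Your proposal is correct, and it actually takes a \emph{more careful} route than the paper's own proof. The paper's construction prices $\{a\}$ at $1$ or $2$ in player~$3$'s menu depending on whether the set-disjointness instance encoded by $(v_1,v_2)$ intersects; player~$3$ then gets $\{a\}$ or $\emptyset$ according to how $v_3(\{a\})$ compares with that price. But that allocation rule is genuinely a function of $v_3$, so even when the menus are common knowledge, the other players cannot deduce the allocation without hearing from player~$3$ --- i.e.\ the paper's construction as written gives $tie(A)\ge 1$, not $0$ (and $tax(A)=1$, since only two menus appear). Your gadget sidesteps exactly this: by making every finite-priced nonempty bundle cost $0$ and designating a unique maximal such bundle, the allocation is simultaneously profit-maximizing for \emph{every} monotone $v_3$ and a fixed function of $\mathcal M_3$ alone, which is what $tie(A)=0$ actually requires. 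You also encode two disjointness bits (via the split $I_1\sqcup I_2$ and the three sentinel items $a,b,c$) so that $|M^3|=4$ and $tax(A)=2$ on the nose, whereas the paper gets only $2$ menus. All the other steps --- realizability of all four menus, the fooling-set-free reduction where player~$3$'s deterministic messages are simulated after fixing $v_3$, and the observation that the hardness lives entirely in $price(A)$ (hence the need for $\ge 3$ players) --- are sound. In short: same high-level plan as the paper (disjointness between players $1,2$ hidden inside the menu shown to player $3$), but your price-$0$ trick is the missing ingredient that makes the $tie(A)=0$ claim literally true.
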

\begin{proof}
Let $f$ be the following three-player social choice function: players $1,2$ with valuation $v_1,v_2$ never get any items. Player $3$ with valuation $v_2$ can only get the bundle that contains item $a$ alone or the empty bundle. The price of item $a$ is determined according to the following rule, if there exists some bundle $S$, $|S|=\frac m 2$, with $v_1(S)=v_2(S)=1$, then the price of item $a$ is $1$. Else, the price of item $a$ is $2$.

Notice that $f$ can obviously be implemented by a truthful mechanism. We also have that $tax(A)=2$. We now show that the communication complexity of every truthful mechanism that implements $f$ is $exp(m)$. We again reduce from the set disjointness problem. Let $t={m \choose {\frac m 2}}$. Let the valuation of Alice (player 1) be identically $0$ for bundles $S$ with $|S|<\frac m 2$ and identically $1$ for bundles $S$ with $|S|>\frac m 2$. For bundles $S$ with $|S|=\frac m 2$ the value of $S$ is determined by the value of $A_S$, where we assume some one-to-one and onto correspondence between the indices of $A$ and bundles of size $\frac m 2$. Bob's valuation (player 2) is defined similarly.

Notice that there is a bit $k$ with $A_k=B_k=1$ if and only if the price of item $a$ is $1$. The communication complexity of $f$ is therefore ${m \choose {\frac m 2}}=exp(m)$.
\end{proof}

\end{document}